\begin{document}

\newtheorem{theorem}{Theorem}
\newtheorem{exmp}{Example}
\newtheorem{lemma}{Lemma}
\newtheorem{proposition}{Proposition}
\newtheorem{definition}[exmp]{Definition}
\renewcommand{\algorithmicrequire}{\textbf{Input:}}
\renewcommand{\algorithmicensure}{\textbf{Output:}}
\newcommand{\tabincell}[2]{\begin{tabular}{@{}#1@{}}#2\end{tabular}}


\title{\emph{Pacos}: Modeling Preference Reversals In Users' Context-Dependent Choices}
\author{Qingming~Li and H.Vicky~Zhao\thanks{The authors are with the Department of Automation, Beijing National Research Center for Information Science and Technology, Tsinghua University, Beijing 100084 P. R. China (email: qingmingli45@163.com, vzhao@tsinghua.edu.cn).}}
\maketitle

\begin{abstract}
Choice problems refer to the problem of selecting the best choices from several available items, and learning users' preferences in choice problems is of great importance in understanding users' decision making mechanisms and providing personalized services. Existing works typically assume that people evaluate items independently. In practice, however, users' preferences depend on the market in which items are placed, which is known as the context effects; and the order of users' preferences for two items may even be reversed, which is called to preference reversals. In this work, we identify three factors contributing to the context effects: users' adaptive weights, the inter-item comparison, and display positions. We propose a context-dependent preference model named \emph{Pacos} as a unified framework to address three factors simultaneously, and consider two design methods including an additive method with high interpretability and an ANN-based method with high accuracy. We study the conditions for preference reversals to occur and provide a theoretical proof of the effectiveness of \emph{Pacos} in predicting when preference reversals would occur. Experimental results show that the proposed method has better performance than prior works in predicting users' choices, and has great interpretability to help understand the cause of preference reversals.

\end{abstract}

\begin{IEEEkeywords}
Preference Reversal, Context Effects, Choice Problems, Preference Modeling
\end{IEEEkeywords}


\section{Introduction}
\label{sec:intro}

Choice problems, such as purchasing a festival gift or picking a restaurant,  involve comparing several available items. Previous works on preference modeling and analysis typically assume that people evaluate items independently, and the relative preference between two items is fixed regardless of other competing options \cite{mcfadden1973conditional}. However, numerous studies show that the above independence assumption is frequently violated in reality \cite{benson2016relevance,rieskamp2006extending}. It is essential to model how the relative preference is influenced by competing options and figure out how people select their best choices. This study can help understand users' decision making mechanisms and offer personalized services, and provide important guidelines on pricing strategies and sales forecasts.

\begin{figure}[tbp]
\centering
\begin{minipage}[b]{.32\linewidth}
  \centering
  \centerline{\includegraphics[width=5cm]{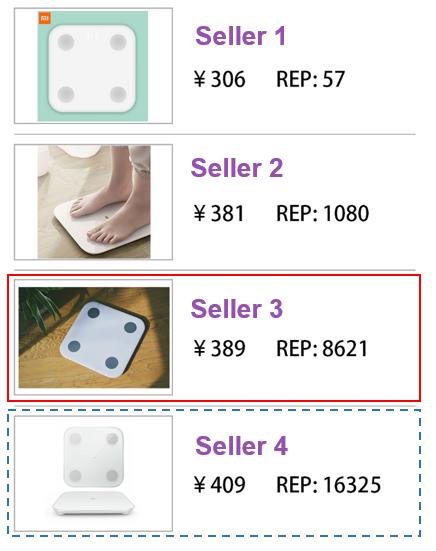}}
  \centerline{(a) market I}\medskip
\end{minipage}
\hfill
\begin{minipage}[b]{.32\linewidth}
  \centering
  \centerline{\includegraphics[width=5cm]{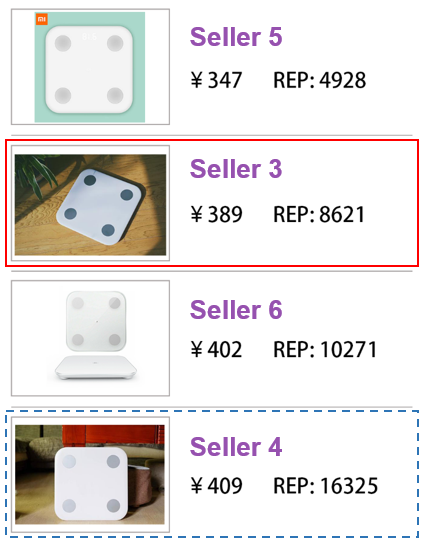}}
  \centerline{(b) market II }\medskip
\end{minipage}
\begin{minipage}[b]{.32\linewidth}
  \centering
  \centerline{\includegraphics[width=4cm]{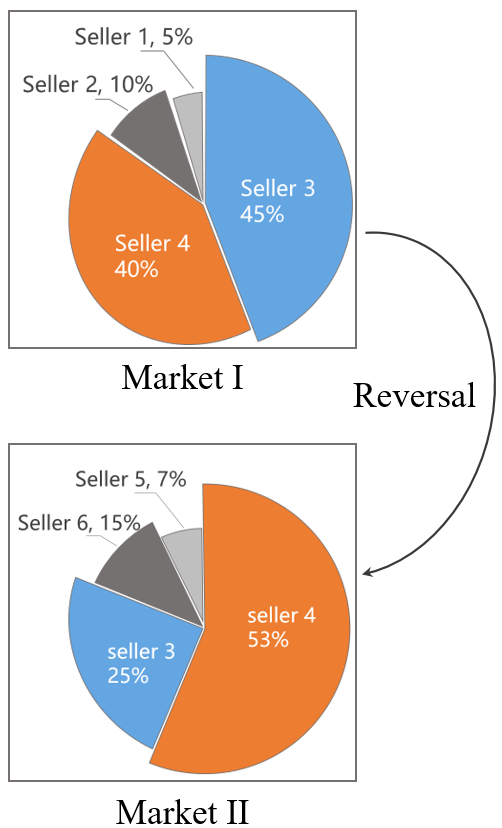}}
  \centerline{(c) the selection data}\medskip
\end{minipage}
\caption{An example of preference reversals.}
\label{fig:market_pref_revs}
\end{figure}

To show this independence violation, we conduct a real user test. In our test, we set two markets of Xiaomi scale, as shown in Fig. \ref{fig:market_pref_revs} (a) and (b). In these two markets, we consider sellers described by two attributes: price (\textyen) and seller reputation (REP). We invite 647 subjects to select their favorite sellers from each market. The selection results are displayed in Fig. \ref{fig:market_pref_revs} (c). Specifically, in market I, $45\%$ of users choose seller 3 and $40\%$ of users select seller 4, indicating that more users prefer seller 3. While in market II, $25\%$ of users choose seller 3 and $53\%$ of users select seller 4, revealing that more users prefer seller 4. In these two markets, the order of users' preferences for seller 3 and seller 4 is reversed, which is called \emph{preference reversals} in economics \cite{grether1979economic}. From this test, we can see that users' preferences depend on the market and especially the competing items, and this dependence is called the \emph{context effects} in economics \cite{rieskamp2006extending}. Context effects can take various forms in different scenarios, among which the preference reversal is one of the most typical ones and is the focus of our work.


\textbf{Our Goal.}\quad In this work, to better understand the impact of context effects on users' choices, we investigate the following three questions: 1) which factors contribute to context effects; 2) how to model users' context-dependent preferences with consideration of these factors; and 3) how to verify that the proposed method can better model and predict users' preference reversals.  To answer question 1, note that there are works looking at context effects in terms of the way how options are presented or described  \cite{kahneman1984choices,thomadsen2018context}, and situational factors such as weather or consumption environment when people make a choice \cite{busse2015psychological,cheng2009effect}. These factors are not easily observed and not the concern of our work. After extensively study of prior works on context effects, we identify three major factors: adaptive weights, inter-item comparison, and display positions. Although these factors have been studied separately in \cite{ariely1995seeking,tomlinson2021learning,rieskamp2006extending,bar2015position}, there is no unified framework considering all three factors simultaneously. In addition, we study the permutation of competing items and find that these factors can be divided into two categories: adaptive weights and inter-item comparison are both permutation invariant, while the factor of display positions is permutation sensitive. That is, adaptive weights and inter-item comparison are independent of the display order of the available items, while the factor of display positions depends on the order in which items are arranged.

The first factor is \textbf{adaptive weights}. A large stream of literature finds that markets could affect the weights people assign to attributes \cite{ariely1995seeking,tomlinson2021learning}. For example,  when two items are similar, people would easily recognize the difference between their attributes and assign a larger weight to the attribute that differs the most (larger difference, larger weight)\cite{ariely1995seeking}. Furthermore, it is observed that when the average value of an attribute is larger than other attributes,  people will give a larger weight to this attribute (larger average value, larger weight) \cite{tomlinson2021learning}. Preference reversals may occur when the change in weights leads to a change in the order of items' utilities. We find that the factor of adaptive weights has the permutation-invariant property. This is intuitive as disturbing the order of items do not affect the difference between attributes, nor the average values of attributes.

The second factor is the \textbf{inter-item comparison}. Items compete in a market in a complex way. For example, from the study in \cite{rieskamp2006extending}, users would decrease their preferences for an item if a similar option joins the market, or increase their preferences for an item when a decoy item (an item that is slightly inferior in all attributes) exists. In our work, we refer to the competition among items as the inter-item comparison. Preference reversals happen when the inter-item comparison changes the order of items' utilities. We find that the factor of inter-item comparison is permutation invariant, as rearrangement of items does not change their similarity, nor the existence of a decoy item.

The third factor is the \textbf{display positions}. Extensive literature shows that users prefer items displayed in certain ranks. For example, some users always choose sellers displayed in the top rank, and some users prefer items in the middle \cite{bar2015position}. Swapping the display positions of two items can cause a change in their utilities, which may potentially lead to the preference reversal. We adopt display positions to account for the permutation-sensitive part in users' preferences, which is usually ignored in prior preference modeling works  \cite{tomlinson2021learning,osogami2014restricted,seshadri2019discovering}.



\textbf{Our Contributions.}\quad In this work, based on the above discussions, we propose a context-dependent \textbf{P}reference model \textbf{A}ddressing preference reversals in \textbf{C}h\textbf{O}ice problem\textbf{S} (\emph{\textbf{Pacos}}).  To model users' context-dependent preferences, we propose a context-aware utility function, which contains three utility modules to account for three contributing factors. Based on the proposed utility function, we consider two design methods, including an additive method with better interpretability and an ANN-based method to achieve high accuracy. A novel learning algorithm is proposed to learn unknown parameters in the model. Experimental results show that the proposed method has better performance than prior works in predicting users' choices. The contributions of our work are as follows.
\begin{itemize}
  \item We identify three important factors that contribute to context effects, and propose \emph{Pacos} as a unified framework to address three factors simultaneously. In addition, we propose two models to achieve high interpretability and high accuracy, respectively.
  \item For the proposed additive method, we study the conditions for preference reversals to occur, and provide a theoretical proof of its effectiveness in analyzing preference reversals.
  \item We design a preference reversal prediction experiment and collect real user data. Experiments show that \emph{Pacos} can effectively predict preference reversals, and has good interpretability to understand the cause of preference reversals.
\end{itemize}

The rest of the paper is organized as follows. Section \ref{sec:related_work} is the literature review. Section \ref{sec:framework} presents the \emph{Pacos} framework and the two module design methods. Section \ref{sec:calculation} theoretically analyze the effectiveness of the additive method. Section \ref{sec:test} shows the experimental results. Conclusions are drawn in Section \ref{sec:conclusion}.

\section{Related Works}
\label{sec:related_work}
Previous works on preference modeling can be classified into multinomial logit model (MNL) based methods and machine learning-based methods, which will be discussed in detail.

\subsection{MNL-Based Methods}
The multinomial logit model, which is proposed by McFadden in 1973, is the cornerstone in the field of preference theory \cite{mcfadden1973conditional}. MNL adopts the utility function to describe users' preferences, and a larger utility value indicates a higher preference. MNL formulates the utility of an item as two parts: a deterministic part that is determined by its attributes, and a random part that accounts for unobserved factors \cite{train2009discrete}. MNL provides a simple and interpretable framework to analyze the decision making process. However, MNL evaluates each item independently, and has the \emph{independence of irrelevant alternatives} (IIA) issue. Consider a pair of sellers A and B, and define the relative preference of seller B to seller A as the ratio of the probability of choosing seller B to the likelihood of selecting seller A. Then, the IIA issue states that the relative preference of seller B to seller A is fixed, and is not affected by other options. The IIA issue prevents MNL from addressing preference reversals, that is, it is not possible to find parameters to fit the probability distribution when the preference reversal occurs \cite{rieskamp2006extending}.


Based on MNL, new theoretical methods have been proposed to address context effects. For example, in Linear Context Logit (LCL) model, a mapping matrix is adopted to quantify the adaptive weights, and theoretical analysis demonstrates that the mapping matrix is identifiable and can be effectively trained from data \cite{tomlinson2021learning}. Besides, Context-Dependent Model (CDM) assumes that context effects come from the pairwise comparisons between items \cite{seshadri2019discovering}, which can be computed by the inner product between a target vector and a context vector. Moreover, PRIMA++ observes that more competition exists between items with similar attributes, and introduces the indifference curves from microeconomics to model the competition \cite{li2021prima++}. Additionally, the work in \cite{mohr2017attraction} adapts the sequential sampling model \cite{baker2022degenerate} to account for context effects under risk, and investigates context effects from a psychological perspective. These MNL-based methods do not suffer from the IIA issue, and theoretically, are able to cope with context effects in certain specific scenarios. However, none of them explicitly discuss preference reversals.

\subsection{Machine Learning-Based Methods}

These machine learning-based works transform the choice problem into a classification or ranking problem, and then apply machine learning methods to solve it. For example, the choice problem is translated into a binary classification problem in \cite{mottini2018understanding,gao2021extrapolation}. In the training stage, selected items are labeled as positive, and the rest are labeled as negative. Then classification methods are applied to learn how to distinguish two kinds of samples. In the testing stage, available items are evaluated by the probability of being classified as positive, and the item with the highest probability is regarded as the predicted choice. In addition, the choice problem is sometimes translated into a pairwise ranking problem \cite{liu2021pair}. Specifically, the choice records are converted into multiple item pairs, and each pair contains one selected item and one unselected item. Then the pairwise ranking methods, like RankNet \cite{burges2005learning} and RankSVM \cite{fine2001efficient}, are applied to identify the selected items in these pairs. However, these methods do not consider context effects, and are subject to potential preference information loss in the process of transformation. For example, in Fig. \ref{fig:market_pref_revs} (a),  users prefer seller 3 to seller 4 in market I. While after transformation, only the information about users preferring seller 3 to seller 4 is available, and the information about the market I is lost.

Recently, many machine learning-based methods have been proposed to explicitly address choice problems. These methods do not need an extra transformation process and avoid possible information loss. For example, the pointer neural network (PNN) \cite{vinyals2015pointer}, which is an encoder-decoder network based on recurrent neural network and attention mechanism, is utilized as the mapping function between the available items in the markets and users' choices. The available items are fed into PNN one by one, and PNN would point to the item that is predicted as the best choice \cite{mottini2017deep}. In \cite{rosenfeld2020predicting}, a function-aggregation method is proposed, which assumes that one utility function can capture only a part of users' preferences, and the whole preferences can be infinitely approximated by introducing enough utility functions. The function-aggregation method adopts multiple utility functions simultaneously, and the total utility equals to their weighted sum. The restricted Boltzmann machine is adopted in \cite{osogami2014restricted} to model context effects. It is proved that the restricted Boltzmann machine can be translated into the MNL formulation, together with an additional term accounting for the comparisons among items.


\subsection{Summary}

These prior methods, including MNL-based and machine learning-based methods, have several limitations in handling context effects and preference reversals. First, these methods cannot comprehensively address all three factors. Specifically, the MNL-based methods study only one factor, for example, the LCL emphasizes the adaptive weights and the CDM stresses on the inter-item comparison, while none consider all three factors. The machine learning-based methods are used like a black box and does not explicitly discuss these three factors. Furthermore, neither methods explicitly analyze preference reversals, let alone the conditions for it to occur or the ability to cope with it. These limitations motivate us to propose a unified framework that jointly considers all three factors and is effective in predicting preference reversals.

\section{The Proposed Context-Dependent Preference Model}
\label{sec:framework}
Consider the scenario that a user is going to buy a specific product online, and there are several matching items for sale, each corresponding to a different seller. In the following, the terms ``item'' and ``seller'' are used interchangeably. We refer to the list containing all matching sellers as the market $\mathcal{S}$. Without loss of generality, we consider a market with at most $N$ sellers, and we use $\boldsymbol{s}_i$ to denote the $i$-th seller.

We assume that each seller is described by two types of features. The first type is the item's attributes, such as the item's price, the seller's reputation, etc. Let the number of attributes be $d$, and we have $\boldsymbol{s}_i\in\mathbb{R}^d$. For convenience, we use $\mathcal{S}_{att}=[\boldsymbol{s}_1,\cdots,\boldsymbol{s}_N]\in\mathbb{R}^{d\times N}$ as the attribute matrix, which stores the $d$ attributes of $N$ sellers. The second type of features is their display positions. We use the one-hot vector $pos(\boldsymbol{s}_i|\mathcal{S})\in\mathbb{R}^N$ to encode the display position of seller $\boldsymbol{s}_i$. Formally, if the seller $\boldsymbol{s}_i$ is displayed at the $i$-th position in market $\mathcal{S}$, then the $i$-th element of $pos(\boldsymbol{s}_i|\mathcal{S})$ is 1 and all other elements are zero \cite{yao2013recurrent}. For convenience, we let $\mathcal{S}_{pos}=[pos(\boldsymbol{s}_1|\mathcal{S}),\cdots,pos(\boldsymbol{s}_N|\mathcal{S})]\in\mathbb{R}^{N\times N}$ be the position matrix, which encodes the display positions of $N$ sellers. 


\begin{figure}[!t]
  \centering
  \centerline{\includegraphics[width=15cm]{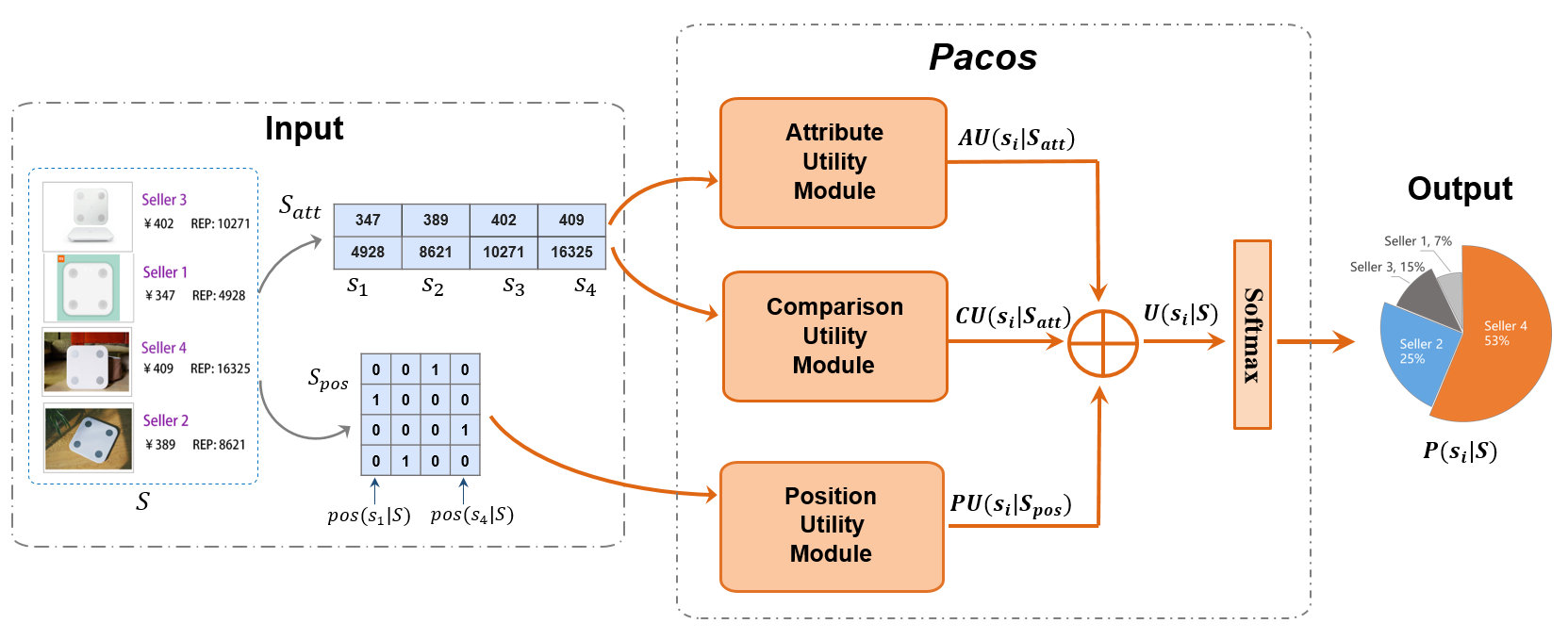}}
\caption{The framework of \emph{Pacos}.}
\label{fig:overall_structure}
\end{figure}

In this section, we study how to model context effects and propose a unified framework \emph{Pacos} to quantify users' preferences in consideration of the context effects. Fig. \ref{fig:overall_structure} shows the framework of \emph{Pacos}. Here, the input is the market $\mathcal{S}$, which can be represented by $\mathcal{S}_{att}$ and $\mathcal{S}_{pos}$. After the input, three utility modules are connected in parallel to model three factors, respectively. Next, outputs of the utility modules are combined to obtain the total utility $U(\boldsymbol{s}_i|\mathcal{S})$. The softmax function is followed to translate each item's utility to its probability of being selected $\mathcal{P}(\boldsymbol{s}_i|\mathcal{S})$, which is the final output of our model.

In Fig. \ref{fig:overall_structure}, issues need to be addressed include: (1) how to construct the utility function $U(\boldsymbol{s}_i|\mathcal{S})$, (2) how to design the three utility modules, and (3) how to learn unknown parameters in these modules. To address these issues, we explain the construction of the utility function in Section \ref{sec:construction}, show details of the module designs in Section \ref{sec:design}, and propose a learning algorithm in Section \ref{sec:learning}.

\subsection{The Utility Function}
\label{sec:construction}


In our work, we propose a context-aware utility function $U(\boldsymbol{s}_i|\mathcal{S})$, which contains three utility modules that are used to account for the three factors, respectively. Details of the three utility modules are as follows.

The first is the \textbf{attribute utility module.} Prior works typically use a fixed vector $\boldsymbol \beta$ to store weights, which cannot cope with the dependence between users' adaptive weights and competing items in the market \cite{li2021prima++,han2020neural}. In our work, we propose an adaptive weight vector $\boldsymbol\beta(\mathcal{S}_{att})$ to address such dependence. The vector $\boldsymbol\beta(\mathcal{S}_{att})$ is a permutation-invariant function of the attribute matrix $\mathcal{S}_{att}$. Details of how to design the vector $\boldsymbol\beta(\mathcal{S}_{att})$ are in Section \ref{sec:design}. We refer to this part of utility as the attribute utility, and the attribute utility of seller $\boldsymbol{s}_i$ is computed by $AU(\boldsymbol{s}_i|\mathcal{S}_{att})=\boldsymbol\beta(\mathcal{S}_{att})^T\cdot \boldsymbol{s}_i$.


The second is the \textbf{comparison utility module.} We use the comparison utility $CU(\boldsymbol{s}_i|\mathcal{S}_{att})$ to capture inter-item comparison, and propose that $CU(\boldsymbol{s}_i|\mathcal{S}_{att})$ is a permutation-invariant function of the attribute matrix $\mathcal{S}_{att}$. When item $\boldsymbol{s}_i$ receives preference gain from the inter-item comparison, the comparison utility $CU(\boldsymbol{s}_i|\mathcal{S}_{att})$ is positive. While when the item loses preference, the value is negative. Details of the comparison utility module will be discussed in Section \ref{sec:design}.



The third is the \textbf{position utility module.} We refer to the part of the utility resulting from display positions as the position utility $PU(\boldsymbol{s}_i|\mathcal{S}_{pos})$. In this module, a fixed vector $\boldsymbol\alpha\in\mathbb{R}^N$ is used to store users' preferences for display positions. When the $i$-th element of $\boldsymbol\alpha$ is larger, users' preferences for the seller displayed at the $i$-th position are higher. Given the position matrix $\mathcal{S}_{pos}$, the position utility of  seller $\boldsymbol{s}_i$ is computed by $PU(\boldsymbol{s}_i|\mathcal{S}_{pos})=\boldsymbol\alpha^T\cdot pos(\boldsymbol{s}_i|\mathcal{S})$.


For simplicity, we consider the additive form and formulate the total utility function as
\begin{align}
\label{equ:uti}
U(\boldsymbol{s}_i|\mathcal{S})&= AU(\boldsymbol{s}_i|\mathcal{S}_{att})
+ CU(\boldsymbol{s}_i|\mathcal{S}_{att})
+PU(\boldsymbol{s}_i|\mathcal{S}_{pos})\\   \notag
&= \boldsymbol\beta(\mathcal{S}_{att})^T\cdot \boldsymbol{s}_i+CU(\boldsymbol{s}_i|\mathcal{S}_{att})+
\boldsymbol\alpha^T\cdot pos(\boldsymbol{s}_i|\mathcal{S}).
\end{align}
We will investigate other complicated utility functions in our future work. The proposed utility function in Eq. \eqref{equ:uti} provides a unified framework to jointly consider the three factors. Then, we use the softmax function to convert each item's utility $U(\boldsymbol{s}_i|\mathcal{S})$ to its probability of being selected $\mathcal{P}(\boldsymbol{s}_i|\mathcal{S})$, that is,
\begin{align}
\label{equ:prob}
\mathcal{P}(\boldsymbol{s}_i|\mathcal{S})
=\frac{\exp{\left[U(\boldsymbol{s}_i|\mathcal{S})\right]}}
{\sum_{\boldsymbol{s}_k\in \mathcal{S}}\exp{\left[U(\boldsymbol{s}_k|\mathcal{S})\right]}}.
\end{align}

\subsection{Two Module Design Methods}
\label{sec:design}
The design of the position utility module is easy and intuitive. Specifically, its input is the position matrix $\mathcal{S}_{pos}$, and the output is each seller's position utility $PU(\boldsymbol{s}_i|\mathcal{S})$. In this module, the only parameter to estimate is the fixed vector $\boldsymbol\alpha$, and its estimation method is in Section \ref{sec:learning}.

The design of the attribute and the comparison utility module is more complex. As discussed in Section \ref{sec:intro}, both the adaptive weights and inter-item comparison are permutation invariant, and how to capture the  permutation-invariant property is the crucial point in the module design process. In this study, we propose two module design methods: an additive method with better interpretability, and an ANN-based method offering higher accuracy. Details of the two module design methods are as follows.


\textbf{The Additive Method.} \quad To achieve high interpretability, the basic principle of the additive method is to expand the context-dependent preferences into a series of additive and interpretable terms. 



We begin with the design of the \textbf{attribute utility module}, whose structure is shown in Fig. \ref{fig:add_design} (a). The additive method assumes that each item individually and independently makes a contribution to the weights, and that the weights $\boldsymbol\beta(\mathcal{S}_{att})$ can be approximated by additively combining these contributions together. In this module, the input is the attribute matrix, and the row vector  $g(\boldsymbol s_i)$ computes the contribution that seller $\boldsymbol s_i$ makes to the weights. The element at the $i$-th row and the $j$-th column of the matrix $g(\mathcal{S}_{att})\in\mathbb{R}^{d\times N}$ stores the contribution that seller $\boldsymbol s_j$ makes to the weight that users assign to the $i$-th attribute. Then, users' adaptive weights $\boldsymbol\beta(\mathcal{S}_{att})$ equals to the sum of all $g(\boldsymbol s_i)$ with
\begin{align}
\label{equ:beta}
\boldsymbol\beta(\mathcal{S})=\sum_{\boldsymbol s_i\in\mathcal{S}}g(\boldsymbol s_i).
\end{align}
This method can be regarded as the first-order expansion of the adaptive weights on the item level. Finally, the weights are multiplied by the attribute matrix to obtain the attribute utility $AU(\boldsymbol{s}_i|\mathcal{S}_{att})$.


Furthermore, we propose that the function  $g(\boldsymbol{s}_i)$ can be formulated as
\begin{align}
\label{equ:g}
\boldsymbol g_1 &= \text{ReLU}(\Gamma_1\boldsymbol s_i),\\ \notag
g(\boldsymbol s_i) &= \text{ReLU}\left(\Gamma_2\boldsymbol g_1\right).
\end{align}
In Eq. \eqref{equ:g}, $\Gamma_1\in\mathbb{R}^{d\times d}$ and $\Gamma_2\in\mathbb{R}^{d\times d}$ are two matrices to implement linear transformations. ReLU function is used to ensure that the weights are always positive. In our work, we repeat the linear transformation and ReLU twice, as experiments show that one operation is not enough to learn the adaptive weights.

To show that the proposed additive method has the permutation-invariant property, note that theoretical analysis in \cite{zaheer2017deep} show that any permutation-invariant function can be written in the form $\rho\left(\sum_{x\in \mathcal{X}}\phi (x)\right)$, where $\phi$ and $\rho$ are transformations determined by the task of interest. The proposed method in Eq. (\ref{equ:beta}) obeys this form with $\phi=g(\cdot)$ and $\rho(x)=x$. Additionally, the proposed method is similar to the LCL model \cite{tomlinson2020learning}, which calculates the adaptive weights through
$\boldsymbol\beta(\mathcal{S}_{att})=\frac{1}{N}\sum_{\boldsymbol s_i\in\mathcal{S}}A\boldsymbol s_i=A\boldsymbol s_m$, where $\boldsymbol s_m$ is the average seller whose attributes equal the average value of all sellers. LCL is interpretable and is permutation invariant with $\phi=A\boldsymbol s_i$ and $\rho(x)=x$. However, LCL cannot distinguish between two markets where the sellers' attributes are different but the ``average'' seller is the same. In our work, the proposed additive method does not have the ``average'' seller issue, and experiments show that it has better predictive performance than LCL. In summary, our proposed additive method has the permutation-invariant property, while it also avoids the ``average'' seller issue and achieves better performance.

\begin{figure}[!t]
\begin{minipage}[b]{0.5\linewidth}
  \centering
  \centerline{\includegraphics[width=7.6cm]{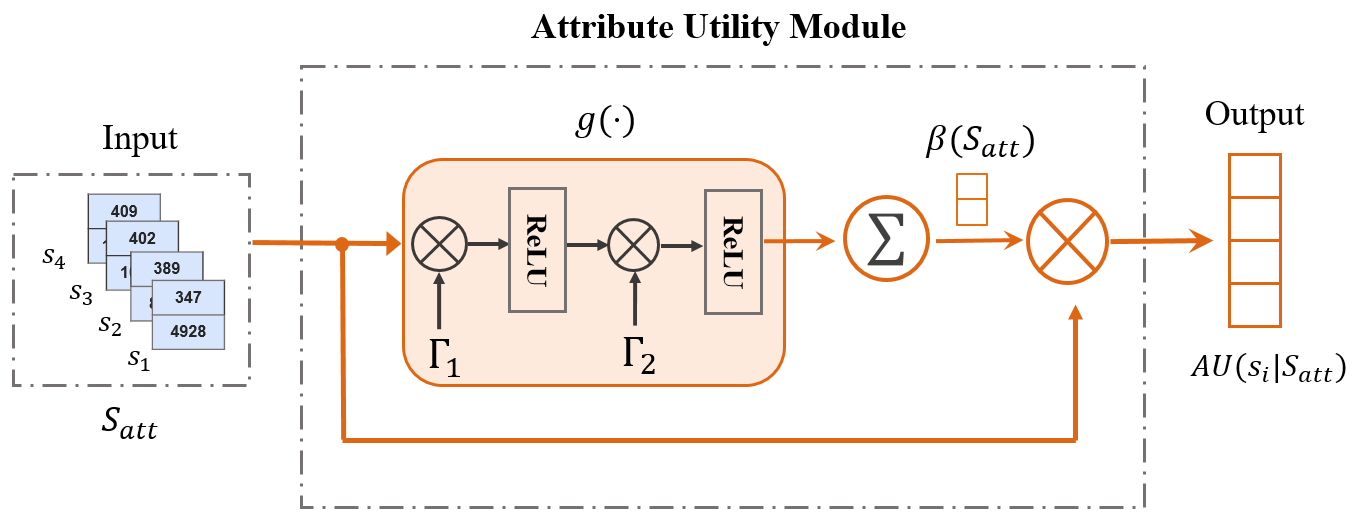}}
  \centerline{(a) }\medskip
\end{minipage}
\hfill
\begin{minipage}[b]{0.47\linewidth}
  \centering
  \centerline{\includegraphics[width=8cm]{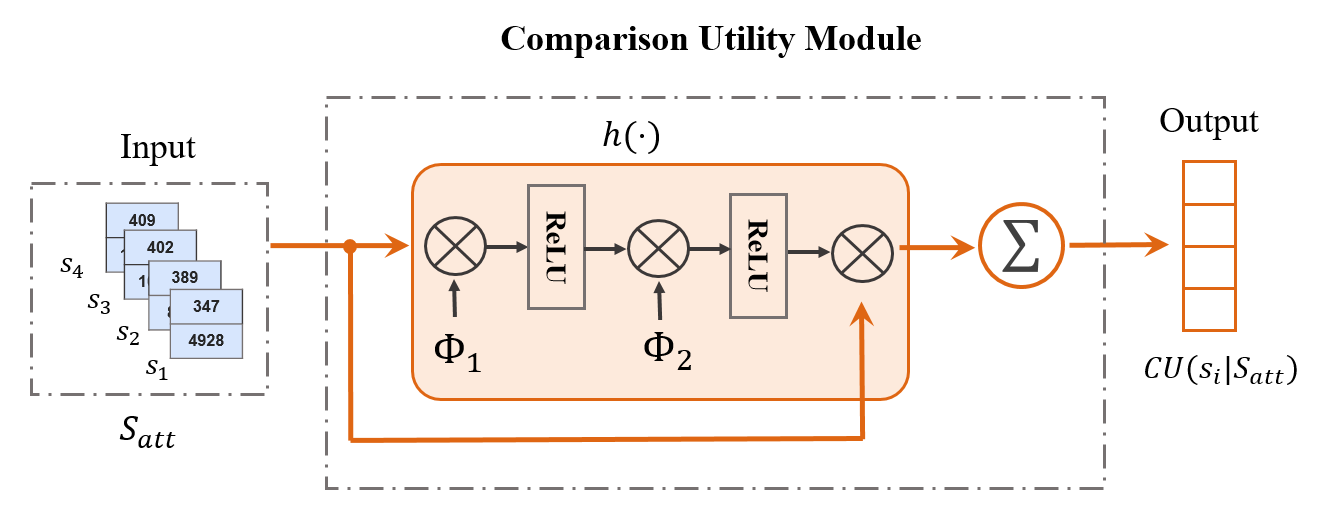}}
  \centerline{(b) }\medskip
\end{minipage}
\caption{ (a) The attribute utility module and (b) the comparison utility module in the additive method.}
\label{fig:add_design}
\end{figure}

Next, we discuss the design of the \textbf{comparison utility module}, whose structure is shown in Fig. \ref{fig:add_design} (b). The additive method approximates the inter-item comparison by the pairwise comparisons between two items, and assumes that the comparison utility of an item can be estimated by adding all pairwise comparisons together. In this module, the input is the attribute matrix $\mathcal{S}_{att}$, followed by a function  $h(\boldsymbol{s}_i,\boldsymbol{s}_j)$ to calculate the pairwise comparison between seller $\boldsymbol{s}_i$ and seller $\boldsymbol{s}_j$. The element at the $i$-th row and $j$-th column of the matrix $h(\mathcal{S}_{att},\mathcal{S}_{att})\in\mathbb{R}^{N\times N}$ stores the comparison utility of seller $\boldsymbol{s}_i$ caused by the pairwise comparison between seller $\boldsymbol{s}_i$ and $\boldsymbol{s}_j$. The comparison utility $CU(\boldsymbol{s}_i|\mathcal{S}_{att})$ equals to the sum of all pairwise comparisons with
\begin{align}
\label{equ:cu}
CU(\boldsymbol{s}_i|\mathcal{S})=\sum_{\boldsymbol{s}_j\in \mathcal{S} } h(\boldsymbol{s}_i,\boldsymbol{s}_j).
\end{align}

Furthermore, we propose that the operation $h(\boldsymbol{s}_i,\boldsymbol{s}_j)$ can be formulated as
\begin{align}
\label{equ:h}
\boldsymbol h_1(\boldsymbol s_i) &= \text{LeakyReLU}\left(\Phi_1\boldsymbol s_i\right),\\    \notag
\boldsymbol h_2(\boldsymbol s_i) &= \text{LeakyReLU}\left[\Phi_2\boldsymbol h_1(\boldsymbol s_i)\right],\\   \notag
h(\boldsymbol{s}_i,\boldsymbol{s}_j) &=\boldsymbol h_2(\boldsymbol s_i)^T\cdot\boldsymbol{s}_j,
\end{align}
where $\Phi_1\in\mathbb{R}^{d\times d}$ and $\Phi_2\in\mathbb{R}^{d\times d}$ are two matrices to implement linear transformations. In Eq. \eqref{equ:h}, there are four points to note. First of all, consider the competition of two sellers $\boldsymbol{s}_i$ and $\boldsymbol{s}_j$, when $\boldsymbol{s}_i$ is a decoy item (slightly inferior in all attributes), it increases users' preferences for $\boldsymbol{s}_j$, while when $\boldsymbol{s}_i$ is an similar item, it decreases users' preferences for $\boldsymbol{s}_j$. Therefore, we propose that the pairwise comparison $h(\boldsymbol{s}_i,\boldsymbol{s}_j)$ can be both positive and negative, and use LeakyReLU rather than ReLU as the activation function. Also, we propose that the inter-item comparison module has different characteristics in different scenarios with proper selection of $\Phi_1$ and $\Phi_2$. For example, the inter-item comparison can be asymmetric with $h(\boldsymbol{s}_i,\boldsymbol{s}_j)\neq h(\boldsymbol{s}_j,\boldsymbol{s}_i)$. This happens when the decoy item ($\boldsymbol{s}_i$) increases users' preferences for another item ($\boldsymbol{s}_j$) while at the same time decreases users' preferences for itself, i.e., $h(\boldsymbol{s}_i,\boldsymbol{s}_j)<0$ and $h(\boldsymbol{s}_j,\boldsymbol{s}_i)>0$. In addition, the inter-item comparison can be non-zero-sum with $h(\boldsymbol{s}_i,\boldsymbol{s}_j)+h(\boldsymbol{s}_j,\boldsymbol{s}_i)\neq 0$, as similar items reduce users' preferences for each other with $h(\boldsymbol{s}_i,\boldsymbol{s}_j)<0$ and $h(\boldsymbol{s}_j,\boldsymbol{s}_i)<0$. Thirdly, we preserve the term $h(\boldsymbol{s}_i,\boldsymbol{s}_i)$, because it facilitates the theoretical proof of the effectiveness of the method, and as will be shown in Section \ref{sec:test}, experiments show that keeping this term in Eq. (\ref{equ:cu}) achieves higher accuracy. Finally, the proposed method in Eq. (\ref{equ:cu}) has the permutation-invariant property, and the analysis is similar to the attribute utility module.

With the above settings, the total utility function in additive method can be expressed as
\begin{align}
\label{equ:uti_add}
U(\boldsymbol{s}_i|\mathcal{S})= \left[\sum_{\boldsymbol s_j\in\mathcal{S}}g(\boldsymbol s_j)\right]^T\cdot \boldsymbol{s}_i
+\sum_{\boldsymbol{s}_j\in \mathcal{S}}h(\boldsymbol{s}_j,\boldsymbol{s}_i)+\boldsymbol\alpha^T\cdot pos(\boldsymbol{s}_i|\mathcal{S}).
\end{align}

\textbf{The ANN-based Method.}\quad In pursuit of high flexibility and accuracy, the ANN-based method utilizes the artificial neural network to learn users' context-dependent preferences. Among many neural network architectures, the deep sets \cite{zaheer2017deep} has the permutation-invariant property, thus it is used as an example to illustrate the attribute and the comparison utility module in the ANN-based method. The structures of the two modules are shown in Fig. \ref{fig:G_and_H}. The design of the position utility module is the same as that in the additive method and is omitted here.

For the attribute utility module, its input is the attribute matrix $\mathcal{S}_{att}$. Given the input $\mathcal{S}_{att}$, the deep sets is used to learn users' adaptive weights $\boldsymbol\beta(\mathcal{S}_{att})$. Then, these weights are multiplied with the attribute matrix $\mathcal{S}_{att}$ to obtain the attribute utility $AU(\boldsymbol{s}_i|\mathcal{S}_{att})$. Specifically, the deep sets has three components: two multilayer perceptrons and a sum operation in between. Following the work in \cite{zaheer2017deep}, to achieve the permutation-invariant property, we design the deep sets in the attribute utility module to implement $\rho\left(\sum_{x\in \mathcal{X}}\phi (x)\right)$, where $\rho$ and $\phi$ both are implemented by the multilayer perceptrons, as shown in Fig. \ref{fig:G_and_H} (a). In our work, we let the multilayer perceptrons have three layers, and experiments show that more layers do not give a significant gain in model accuracy. In addition, to ensure $\boldsymbol\beta(\mathcal{S}_{att})>0$, the activation functions in the attribute utility module are all ReLU functions.

Similarly, for the comparison utility module, its input is the attribute matrix $\mathcal{S}_{att}$, followed by the deep sets to calculate the inter-item comparison and obtain the comparison utility of each item. The multilayer perceptrons in the deep sets have three layers, and  the activation function here is the LeakyReLU function. 



\begin{figure}[!t]
\begin{minipage}[b]{.50\linewidth}
  \centering
  \centerline{\includegraphics[width=8.5cm]{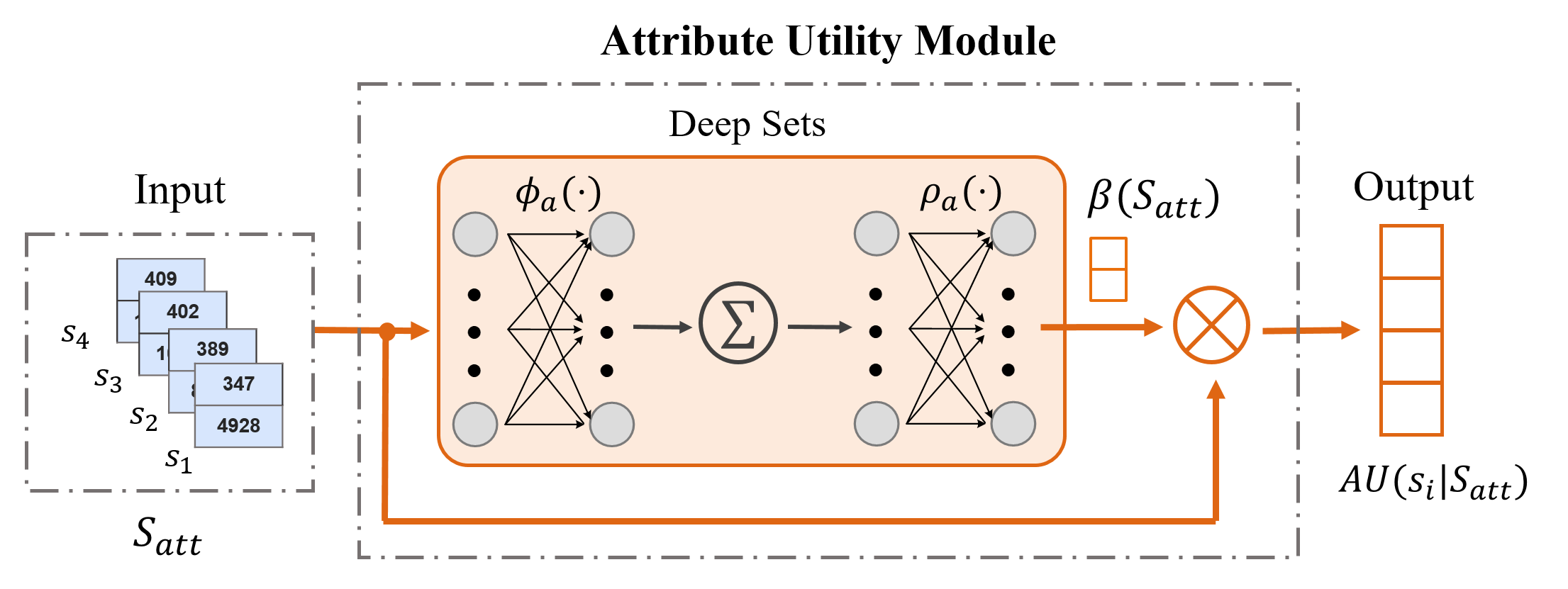}}
  \centerline{(a) }\medskip
\end{minipage}
\hfill
\begin{minipage}[b]{0.47\linewidth}
  \centering
  \centerline{\includegraphics[width=8.0cm]{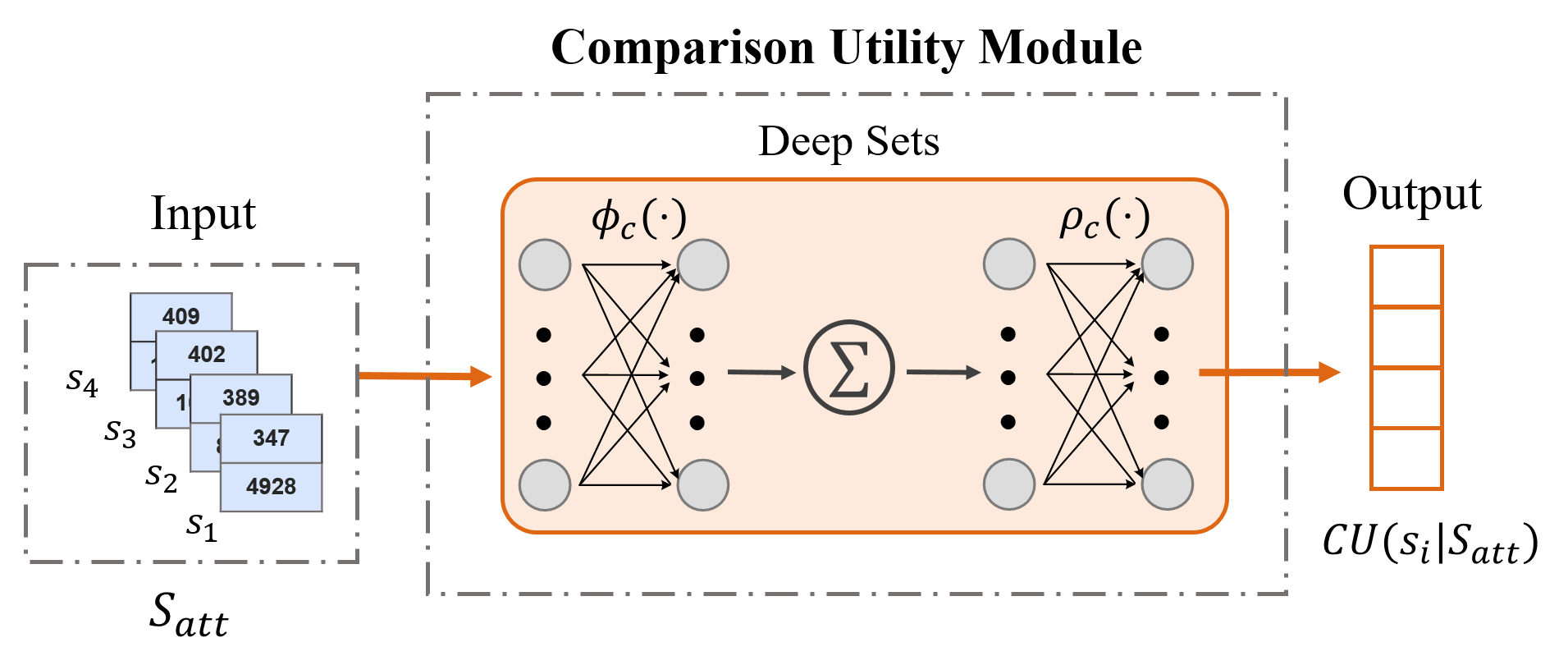}}
  \centerline{(b) }\medskip
\end{minipage}
\caption{(1) The attribute utility module and (b) comparison utility module in the ANN-based method.}
\label{fig:G_and_H}
\end{figure}

\subsection{The Learning Algorithm}
\label{sec:learning}

\begin{algorithm}[!t]
  \caption{The Learning Algorithm}
  \begin{algorithmic}[1]
  \Require $T$ records, and the number of iterations $K$
  \Ensure The optimized model
  \State Initialize parameters
  \For{$k=1$ to $K$}
  \If{$k<K_{init}$}
  \State $\bar B=0$
  \Else
  \State Compute $\bar B$ according to Eq. (\ref{equ:update})
  \EndIf
  \State Obtain $AU(\boldsymbol{s}_i|\mathcal{S}_{att})$, $CU(\boldsymbol{s}_i|\mathcal{S}_{att})$, and $PU(\boldsymbol{s}_i|\mathcal{S}_{pos})$ according to Eq. (\ref{equ:uti_add})
  \State Refine $CU(\boldsymbol{s}_i|\mathcal{S}_{att})$ and $PU(\boldsymbol{s}_i|\mathcal{S}_{pos})$ according to Eq. (\ref{equ:bar})
  \State Get $\mathcal{P}(\boldsymbol s_i|\mathcal{S})$ according to Eq. (\ref{equ:prob})
  \State Calculate the current loss $L$ according to Eq. (\ref{equ:loss})
  \State Update parameters that need to be optimized
  \EndFor
  \end{algorithmic}
   \label{algo:training}
\end{algorithm}

\textbf{Parameters to Learn.} \quad For the position utility module, the only parameter to be optimized is $\boldsymbol\alpha$, which stores users' preferences for display positions. While for the other two utility modules, the parameters to be optimized in the additive method are different from that in the ANN-based method. Specifically, in the additive method, parameters to be optimized include $\Phi_1$ and $\Phi_2$ in the comparison utility module, and $\Gamma_1$ and $\Gamma_2$ in the attribute utility module. While in the ANN-based method, parameters to be optimized are the variables in the multilayer perceptrons.

\textbf{The Imbalance Issue.} \quad Note that the complexity of the three utility modules is imbalanced. The position utility module includes only one vector multiplication, thus its structure is the simplest. On the contrary, the attribute utility module needs to first obtain $\boldsymbol\beta(\mathcal{S}_{att})$ and then compute the attribute utility, thus its structure is the most complex. Since the complexity of the modules may affect the convergence speeds of the learning algorithm, the imbalance issue may occur in the training process. That is, it is possible that parameters in the position utility module or the comparison utility module are already optimized, while parameters in the attribute utility module are fully or partially ignored. The imbalance issue among utility modules is observed in both the additive and ANN-based methods.


The imbalance issue may cause poor performance in new markets. In our real user test, we observe the following phenomenon caused by the imbalance issue. Consider the scenario that all markets in the training set contain four items, which are sorted in the ascending order of their reputations. A user always selects sellers with the highest reputations. Note that the selected sellers also happen to be displayed at the fourth position. However, classical optimization algorithms, such as Adam or Adagrad \cite{kingma2014adam}, often optimize the position utility module alone, and neglect parameters in the attribute utility module. When given a new market with five sellers, the user selects the fifth seller with the highest reputation, but the learned model still regards the fourth seller as the best choice and outputs inaccurate prediction results. Similarly, we also observe in our test that these classical optimization algorithms sometimes optimize the comparison utility module alone and neglect parameters in the attribute utility module.

\textbf{The Proposed Learning Algorithm.} \quad To address the imbalance issue, we propose a learning algorithm shown in Algorithm \ref{algo:training}.

Following prior works \cite{mottini2017deep,rosenfeld2020predicting}, we use the cross entropy \cite{de2005tutorial}
\begin{align}
\label{equ:loss}
  L = -\sum_{t=1}^T\sum_{i=1}^N y_i\log \mathcal{P}(\boldsymbol s_i|\mathcal{S}),
\end{align}
as the loss function, where $\mathcal{P}(\boldsymbol s_i|\mathcal{S})$ is the probability of being selected in Eq. (\ref{equ:prob}), and $y_i$ is an indicator with $y_i=1$ when $\boldsymbol s_i$ is selected and $y_i=0$ otherwise. The unknown parameters are optimized to minimize the loss function.

To avoid the scenario where the learning algorithm continuously optimizes the position and the comparison utility modules but ignores the attribute utility module, we introduce a variable $\bar B$ to denote the upper limit of the position and the comparison utility, that is,
\begin{align}
\label{equ:bar}
CU(\boldsymbol s_i|\mathcal{S}_{att}) = \min\{CU(\boldsymbol s_i|\mathcal{S}_{att}),\bar B\}, \quad \text{and}\quad PU(\boldsymbol s_i|\mathcal{S}_{pos})=\min\{PU(\boldsymbol s_i|\mathcal{S}_{pos}),\bar B\}.
\end{align}
With $\bar B$, the loss function cannot be minimized by optimizing the position or the comparison utility module alone and the learning algorithm is forced to optimize the attribute utility module, and thus, the imbalance issue is alleviated.

Note that the selection of $\bar B$ may affect the model's performance, we propose a novel update rule to dynamically adjust its value in the training process. In the first $K_{init}$ epochs, $\bar B$ is initialized as 0. Then, after every $\Delta K$ epochs, $\bar B$ increases its value by $\Delta B$ until it reaches $B_{max}$. That is,
\begin{align}
\label{equ:update}
\bar B = \min\left\{\left\lfloor\frac{k-K_{init}}{\Delta K}\right\rfloor\cdot \Delta B, B_{max}\right\}.
\end{align}
In Eq. (\ref{equ:update}), $k$ is the current epoch during the training process. The update rule forces the learning algorithm to optimize the attribute utility module in the first $K_{init}$ epochs, and parameters in the comparison and position utility module are optimized gradually during subsequent learning epochs. Through the cross-validation experiment, we set $K_{init}=10$, $\Delta K=10$, $\Delta B=0.2$ and $B_{max}=2.0$, and analysis of the parameter selection is in Section \ref{sec:test_update}.


\section{Theoretical Analysis of the Additive Method}
\label{sec:calculation}
As discussed in Section \ref{sec:related_work}, existing MNL-based methods with the IIA issue cannot cope with preference reversals. Specifically, it is not possible to find parameters to fit the probability distributions when the preference reversal occurs. In this section, we theoretically analyze the conditions for preference reversals to occur, and demonstrate that the proposed additive method can always find parameters that fit the actual probability distributions and thus can effectively addressing preference reversals. Note that the ANN-based method is very difficult to analyze theoretically due to its low interpretability. We will use simulations to show its effectiveness in predicting preference reversals in this work, and plan to theoretically study its performance in our future work. In addition, the restricted Boltzmann machine-based method \cite{osogami2014restricted} has poor scalability, and needs to adjust the model structure when new sellers enter the market. We will show in this section that our proposed additive model has high scalability and can easily adapt to new markets with new/leaving sellers.


In this section, we consider the scenario where new items join the market causing users' preference reversals, and the display position of items may also change. For other scenarios, such as preference reversals due to the withdrawal of existing items, or due to the withdrawal of existing items together with the addition of new items, the analysis is similar and omitted here. In the following, we begin with a mathematical definition of preference reversals, and then theoretically analyze the scalability and effectiveness of the proposed additive method.

\subsection{Definition of Preference Reversals}
Consider a market $\mathcal{S}$ with at least two sellers, i.e., $\boldsymbol{s}_A$ and $\boldsymbol{s}_B$, and the probabilities of $\boldsymbol{s}_A$ and $\boldsymbol{s}_B$ being selected is $\mathcal{P}(\boldsymbol{s}_A|\mathcal{S})$ and $\mathcal{P}(\boldsymbol{s}_B|\mathcal{S})$, respectively. Suppose a new item $\boldsymbol{s}_C$ joins the market, and the probabilities of being selected become $\mathcal{P}(\boldsymbol{s} _A|\mathcal{S}\cup\boldsymbol{s}_C)$
and $\mathcal{P}(\boldsymbol{s}_B|\mathcal{S}\cup\boldsymbol{s}_C)$, respectively. Then, preference reversals occur when the order of the probabilities of being selected for seller $\boldsymbol{s}_A$ and $\boldsymbol{s}_B$ is reversed after the addition of seller $\boldsymbol{s}_C$. That is,  $\mathcal{P}(\boldsymbol{s}_A|\mathcal{S})>\mathcal{P}(\boldsymbol{s}_B|\mathcal{S})$ and $\mathcal{P}(\boldsymbol{s}_A|\mathcal{S}\cup\boldsymbol{s}_C)
<\mathcal{P}(\boldsymbol{s}_B|\mathcal{S}\cup\boldsymbol{s}_C)$, or $\mathcal{P}(\boldsymbol{s}_A|\mathcal{S})<\mathcal{P}(\boldsymbol{s}_B|\mathcal{S})$ and $\mathcal{P}(\boldsymbol{s}_A|\mathcal{S}\cup\boldsymbol{s}_C)
>\mathcal{P}(\boldsymbol{s}_B|\mathcal{S}\cup\boldsymbol{s}_C)$.
The two cases can be summarized as
\begin{align}
\label{equ:pref_rvs}
\left[\mathcal{P}(\boldsymbol{s}_A|\mathcal{S})-\mathcal{P}(\boldsymbol{s}_B|\mathcal{S})\right]
\cdot
\left[\mathcal{P}(\boldsymbol{s}_A|\mathcal{S}\cup\boldsymbol{s}_C)
-\mathcal{P}(\boldsymbol{s}_B|\mathcal{S}\cup\boldsymbol{s}_C)\right]<0.
\end{align}
For a model to accurately predict the preference reversal with the addition of the new seller $\boldsymbol{s}_C$, the sufficient and necessary condition is that it is possible to find parameters satisfying Eq. (\ref{equ:pref_rvs}).

However, Proposition \ref{prop:mnl} in Section \ref{sec:appendix} states that the IIA issue in prior MNL-based works \cite{mcfadden1973conditional,burges2005learning,fine2001efficient,chen2018behavior2vec} contradicts Eq.  (\ref{equ:pref_rvs}), and prevents these models from predicting preference reversals.

\subsection{Effectiveness of The Additive Method}

We first discuss the scalability of the additive method, as this scalability analysis will be used in the proof of the effectiveness of the additive method.

\textbf{The Scalability.}\quad Consider the scenario where a new seller $\boldsymbol{s}_{C}$ joins the market $\mathcal{S}$. Define $\lambda(\boldsymbol{s} _i|\mathcal{S})=\exp{\left[U(\boldsymbol{s}_i|\mathcal{S})\right]}$, where $U(\boldsymbol{s}_i|\mathcal{S})$ is in Eq. \eqref{equ:uti_add}. Note that $\lambda(\boldsymbol{s} _i|\mathcal{S})$ is the numerator of $\mathcal{P}(\boldsymbol{s}_i|\mathcal{S})$. Since the denominator of $\mathcal{P}(\boldsymbol{s}_i|\mathcal{S})$ is the same for all sellers in a market, $\lambda(\boldsymbol{s} _i|\mathcal{S})$ is larger when the probability  $\mathcal{P}(\boldsymbol{s}_i|\mathcal{S})$ is higher. Then, we have the following Lemma  \ref{lemma:scalability}. It shows that the additive method has good scalability, as we only need to add an update term $f(\boldsymbol{s}_i,\boldsymbol{s} _{C})$ to $\lambda (\boldsymbol{s}_i|\mathcal{S})$ when a new seller $\boldsymbol{s} _{C}$ joins the market.

\begin{lemma}
\label{lemma:scalability}
If a new seller $\boldsymbol{s}_{C}$ joins the market $\mathcal{S}$, then
\begin{align}
\label{equ:update_lambda_pos_changes}
\lambda(\boldsymbol{s}_i|\mathcal{S}\cup\boldsymbol{s}_{C})=\lambda (\boldsymbol{s}_i|\mathcal{S})\cdot f(\boldsymbol{s}_{i},\boldsymbol{s}_{C}),
\end{align}
where $f(\boldsymbol{s}_i,\boldsymbol{s}_{C})\triangleq\exp{\left[g(\boldsymbol{ s}_{C})^T\cdot \boldsymbol{s}_i+h(\boldsymbol{s}_i,\boldsymbol{s}_{C})
+\boldsymbol\alpha^T\cdot \Delta pos(\boldsymbol{s}_i,\boldsymbol{s}_{C} )\right]}$. Here, $\Delta pos(\boldsymbol{s}_i,\boldsymbol{s}_{C})=pos(\boldsymbol{s}_i|\mathcal{S}\cup\boldsymbol{s}_{C})
-pos(\boldsymbol{s}_i|\mathcal{S})$ is a change in the display position caused by the addition of seller $\boldsymbol{s}_{C}$, and $f(\boldsymbol{s}_i,\boldsymbol{s} _{C})$ can be seen as the exponential of the change in the total utility of seller $\boldsymbol{s}_i$ caused by the addition of seller $\boldsymbol{s}_{C}$.
\end{lemma}

\begin{proof}
For the additive method, define $\lambda(\boldsymbol{s} _i|\mathcal{S})=\exp{\left[U(\boldsymbol{s}_i|\mathcal{S})\right]}$ based on Eq. \eqref{equ:uti_add}, then we have
\begin{align}
\label{equ:ori_lambda}
\lambda(\boldsymbol{s}_i|\mathcal{S})&=\exp\left\{\left[\sum_{\boldsymbol s_j\in\mathcal{S}}g(\boldsymbol s_j)\right]^T\cdot \boldsymbol{s}_i+\sum_{\boldsymbol{s}_j\in \mathcal{S}}h(\boldsymbol{s}_i,\boldsymbol{s}_j)+
\boldsymbol\alpha^T\cdot pos(\boldsymbol{s}_i|\mathcal{S})\right\}
\end{align}
When a new seller $\boldsymbol{s}_{\text{new}}$ joins the market $\mathcal{S}$, and the display position of seller $\boldsymbol{s} _i$ is $pos(\boldsymbol{s} _i|\mathcal{S}\cup\boldsymbol{s}_{\text{new}} )$ after the addition of the new seller. Then,
\begin{align}
\label{equ:lambda_new}
\lambda(\boldsymbol{s}_i|\mathcal{S}\cup\boldsymbol{s}_{C})&=\exp\left\{\left[\sum_{\boldsymbol s_j\in\mathcal{S}\cup\boldsymbol{s}_{C}}g(\boldsymbol s_j)\right]^T\cdot \boldsymbol{s}_i+\sum_{\boldsymbol{s}_j\in \mathcal{S}\cup\boldsymbol{s}_{C}}h(\boldsymbol{s}_i,\boldsymbol{s}_j)+
\boldsymbol\alpha^T\cdot pos(\boldsymbol{s}_i|\mathcal{S}\cup\boldsymbol{s}_{C})\right\}      \\    \notag
&= \exp\left\{\left[\sum_{\boldsymbol s_j\in\mathcal{S}}g(\boldsymbol s_j)\right]^T\cdot \boldsymbol{s}_i+\sum_{\boldsymbol{s}_j\in \mathcal{S}}h(\boldsymbol{s}_i,\boldsymbol{s}_j)+
\boldsymbol\alpha^T\cdot pos(\boldsymbol{s}_i|\mathcal{S})\right\} \\   \notag
&\quad \quad \cdot \exp{\left\{g(\boldsymbol{ s}_{C})^T\cdot \boldsymbol{s}_i+h(\boldsymbol{s}_i,\boldsymbol{s}_{C})
+\boldsymbol\alpha^T\cdot \left[pos(\boldsymbol{s}_i|\mathcal{S}\cup\boldsymbol{s}_{C})
-pos(\boldsymbol{s}_i|\mathcal{S})\right]\right\}} \\   \notag
&=\lambda (\boldsymbol{s}_i|\mathcal{S})\cdot f(\boldsymbol{s}_{i},\boldsymbol{s}_{C}).
\end{align}
Eq. \eqref{equ:lambda_new} means that $\lambda (\boldsymbol{s}_i|\mathcal{S})$ is simply multiplied by an update term $f(\boldsymbol{s}_i,\boldsymbol{s} _{C})$ when a new seller $\boldsymbol{s}_{C}$ joins the market.
\end{proof}


\textbf{Conditions For Preference Reversals To Occur.} \quad Based on the definition of $\lambda(\boldsymbol{s} _i|\mathcal{S})$, Eq. (\ref{equ:pref_rvs}) is equivalent to
\begin{align}
\label{equ:lambda_pref_rvs}
\left[\lambda(\boldsymbol{s}_A|\mathcal{S})-\lambda(\boldsymbol{s}_B|\mathcal{S})\right]
\cdot
\left[\lambda(\boldsymbol{s}_A|\mathcal{S}\cup\boldsymbol{s}_C)
-\lambda(\boldsymbol{s}_B|\mathcal{S}\cup\boldsymbol{s}_C)\right]<0.
\end{align}
We analyze the sufficient and necessary conditions for Eq. \eqref{equ:lambda_pref_rvs} to hold, and obtain Theorem \ref{theo:condition}. From Theorem \ref{theo:condition}, if users originally prefer seller $\boldsymbol{s}_A$ to $\boldsymbol{s}_B$ in market $\mathcal{S}$, then preference reversals occur when the addition of seller $\boldsymbol{s}_{C}$ increases users' preference for seller $\boldsymbol{s}_{B}$ more than $\boldsymbol{s}_{A}$, and the ratio between the update terms is greater than the ratio between users' original preferences. Similarly, if users originally prefer seller $\boldsymbol{s}_B$ to $\boldsymbol{s}_A$ in market $\mathcal{S}$, then preference reversals occur when the addition of seller $\boldsymbol{s}_{C}$ increases users' preference for seller $\boldsymbol{s}_{B}$ less than $\boldsymbol{s}_{A}$, and the ratio between the update terms is less than the ratio between users' original preferences.

\begin{theorem}
\label{theo:condition}
For the additive method, the sufficient and necessary condition for Eq. (\ref{equ:lambda_pref_rvs}) to hold is that it is possible to find parameters satisfying
\begin{align}
\label{equ:condition}
\frac{f(\boldsymbol{s}_{B},\boldsymbol{s}_{C})}{f(\boldsymbol{s}_{A},\boldsymbol{s}_{C})}>
\frac{\lambda(\boldsymbol{s}_A|\mathcal{S})}{\lambda(\boldsymbol{s}_B|\mathcal{S})}>1 \quad \text{or} \quad
\frac{f(\boldsymbol{s}_{B},\boldsymbol{s}_{C})}{f(\boldsymbol{s}_{A},\boldsymbol{s}_{C})}<
\frac{\lambda(\boldsymbol{s}_A|\mathcal{S})}{\lambda(\boldsymbol{s}_B|\mathcal{S})}<1.
\end{align}
\end{theorem}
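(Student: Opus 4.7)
The plan is to apply Lemma \ref{lemma:scalability} to rewrite both sides of the reversal inequality (\ref{equ:lambda_pref_rvs}) in terms of the original quantities and the update factors, then reduce to the ratio condition (\ref{equ:condition}) through a short case analysis. Throughout, I would use the shorthand $\lambda_A,\lambda_B,f_A,f_B$ for $\lambda(\boldsymbol{s}_A|\mathcal{S})$, $\lambda(\boldsymbol{s}_B|\mathcal{S})$, $f(\boldsymbol{s}_A,\boldsymbol{s}_C)$, $f(\boldsymbol{s}_B,\boldsymbol{s}_C)$, all of which are strictly positive because they are exponentials of real-valued utilities; this is what lets me divide freely without flipping any inequality.

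First, I would substitute the scalability identity $\lambda(\boldsymbol{s}_A|\mathcal{S}\cup\boldsymbol{s}_C)=\lambda_A f_A$ and the analogous identity for $\boldsymbol{s}_B$ into (\ref{equ:lambda_pref_rvs}), transforming the reversal condition into the single algebraic inequality $(\lambda_A-\lambda_B)(\lambda_A f_A-\lambda_B f_B)<0$. Because this is a product of two real factors, it is strictly negative iff the two factors have opposite signs, which naturally splits the analysis into two symmetric cases according to the sign of $\lambda_A-\lambda_B$.

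Next, I would handle each case. In the case $\lambda_A>\lambda_B$, the product is negative iff $\lambda_A f_A<\lambda_B f_B$; dividing both sides by $\lambda_B f_A>0$ rearranges this to $f_B/f_A>\lambda_A/\lambda_B$, and joining with $\lambda_A/\lambda_B>1$ produces the first branch of (\ref{equ:condition}). The symmetric case $\lambda_A<\lambda_B$ yields the second branch by an identical division, and the boundary $\lambda_A=\lambda_B$ makes the first factor zero so reversal is impossible, matching the strict inequalities in (\ref{equ:condition}). The reverse implication is then immediate: multiplying either chain in (\ref{equ:condition}) back through by the appropriate positive quantities reproduces $(\lambda_A-\lambda_B)(\lambda_A f_A-\lambda_B f_B)<0$, which is exactly (\ref{equ:lambda_pref_rvs}). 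Since the ``possible to find parameters'' quantifier sits outside both formulations and the equivalence holds pointwise in the parameters, the stated sufficient-and-necessary claim follows.

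The main obstacle I anticipate is bookkeeping rather than mathematics: I must ensure that the trailing $>1$ and $<1$ in (\ref{equ:condition}) are read as automatic consequences of the case split on $\lambda_A-\lambda_B$, not as extra independent hypotheses, and that every strict inequality is preserved through each division by a strictly positive exponential. Because $\lambda$ and $f$ are always positive and the reversal inequality is itself strict, neither concern is serious, so the proof reduces to a clean algebraic rearrangement built on top of Lemma \ref{lemma:scalability}.
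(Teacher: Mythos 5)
Your proposal is correct and follows essentially the same route as the paper's own proof: substitute the scalability identity from Lemma \ref{lemma:scalability} into Eq. (\ref{equ:lambda_pref_rvs}), split on the sign of $\lambda(\boldsymbol{s}_A|\mathcal{S})-\lambda(\boldsymbol{s}_B|\mathcal{S})$, and divide through by strictly positive exponentials to reach Eq. (\ref{equ:condition}). Your explicit handling of the boundary case and of positivity is a minor tidying of the same argument, not a different approach.
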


\begin{proof}
Based on Lemma  \ref{lemma:scalability}, we have
\begin{align}
\label{equ:lambda_AB}
\lambda(\boldsymbol{s}_A|\mathcal{S}\cup\boldsymbol{s}_{C})=\lambda (\boldsymbol{s}_A|\mathcal{S})\cdot f(\boldsymbol{s}_{A},\boldsymbol{s}_{C}), \quad \text{and} \quad
\lambda(\boldsymbol{s}_B|\mathcal{S}\cup\boldsymbol{s}_{C})=\lambda (\boldsymbol{s}_B|\mathcal{S})\cdot f(\boldsymbol{s}_{B},\boldsymbol{s}_{C}).
\end{align}

(1) When $\lambda(\boldsymbol{s}_A|\mathcal{S})-\lambda(\boldsymbol{s}_B|\mathcal{S})>0$, Eq. (\ref{equ:lambda_pref_rvs}) is equivalent to $\left[\lambda(\boldsymbol{s}_A|\mathcal{S}\cup\boldsymbol{s}_C)
-\lambda(\boldsymbol{s}_B|\mathcal{S}\cup\boldsymbol{s}_C)\right]<0$. Combining with Eq. \eqref{equ:lambda_AB}, we have
\begin{align}
\label{equ:proof_lambda}
&\lambda (\boldsymbol{s}_A|\mathcal{S})\cdot f(\boldsymbol{s}_{A},\boldsymbol{s}_{C})-\lambda (\boldsymbol{s}_B|\mathcal{S})\cdot f(\boldsymbol{s}_{B},\boldsymbol{s}_{C})<0 \\  \notag
\Leftrightarrow \quad  &\lambda (\boldsymbol{s}_B|\mathcal{S})\cdot f(\boldsymbol{s}_{B},\boldsymbol{s}_{C}) > \lambda (\boldsymbol{s}_A|\mathcal{S})\cdot f(\boldsymbol{s}_{A},\boldsymbol{s}_{C})\\  \notag
\Leftrightarrow \quad & \frac{f(\boldsymbol{s}_{B},\boldsymbol{s}_{C})}{f(\boldsymbol{s}_{A},\boldsymbol{s}_{C})}
>\frac{\lambda(\boldsymbol{s}_A|\mathcal{S})}{\lambda(\boldsymbol{s}_B|\mathcal{S})} > 1.
\end{align}

If $\lambda(\boldsymbol{s}_A|\mathcal{S})-\lambda(\boldsymbol{s}_B|\mathcal{S})>0$, that is, users originally prefer seller $\boldsymbol{s}_A$ to $\boldsymbol{s}_B$ in market $\mathcal{S}$. Let ${\lambda(\boldsymbol{s}_A|\mathcal{S})}\big /{\lambda(\boldsymbol{s}_B|\mathcal{S})}$ be the ratio between users' original preferences for the two sellers. After the addition of seller $\boldsymbol{s}_C$, the update term of users' preferences for seller $\boldsymbol{s}_A$ is $f(\boldsymbol{s}_{A},\boldsymbol{s}_{C})$, and the update term for seller $\boldsymbol{s}_B$ is $f(\boldsymbol{s}_{B},\boldsymbol{s}_{C})$. Let ${f(\boldsymbol{s}_{B},\boldsymbol{s}_{C})}\big /{f(\boldsymbol{s}_{A},\boldsymbol{s}_{C})}$ be the ratio between the two update terms. Then, the preference reversal occurs when the ratio between the update terms is greater than the ratio between users' original preferences, i.e., ${f(\boldsymbol{s}_{B},\boldsymbol{s}_{C})}\big /{f(\boldsymbol{s}_{A},\boldsymbol{s}_{C})}>{\lambda(\boldsymbol{s}_A|\mathcal{S})}\big /{\lambda(\boldsymbol{s}_B|\mathcal{S})}$ .

(2) When $\lambda(\boldsymbol{s}_A|\mathcal{S})-\lambda(\boldsymbol{s}_B|\mathcal{S})<0$, Eq. (\ref{equ:lambda_pref_rvs}) is equivalent to $\left[\lambda(\boldsymbol{s}_A|\mathcal{S}\cup\boldsymbol{s}_C)
-\lambda(\boldsymbol{s}_B|\mathcal{S}\cup\boldsymbol{s}_C)\right]>0$. Similar to the analysis for the first case, we have
\begin{align}
\frac{f(\boldsymbol{s}_{B},\boldsymbol{s}_{C})}{f(\boldsymbol{s}_{A},\boldsymbol{s}_{C})}
<\frac{\lambda(\boldsymbol{s}_A|\mathcal{S})}{\lambda(\boldsymbol{s}_B|\mathcal{S})}< 1.
\end{align}
If $\lambda(\boldsymbol{s}_A|\mathcal{S})<\lambda(\boldsymbol{s}_B|\mathcal{S})$, that is, users originally prefer seller $\boldsymbol{s}_B$ to $\boldsymbol{s}_A$ in market $\mathcal{S}$. Then, the preference reversal occurs when the ratio between the update terms is less than the ratio between users' original preferences, i.e., ${f(\boldsymbol{s}_{B},\boldsymbol{s}_{C})}\big /{f(\boldsymbol{s}_{A},\boldsymbol{s}_{C})}<{\lambda(\boldsymbol{s}_A|\mathcal{S})}\big /{\lambda(\boldsymbol{s}_B|\mathcal{S})}$ .
\end{proof}

\textbf{The Effectiveness.} \quad We find possible solutions to Eq. \eqref{equ:condition}, and summarize it as Theorem \ref{theo:effectiveness}. Specifically, for the attribute utility module, we calculate the inner product of  $\left[\boldsymbol{s}_A-\boldsymbol{s}_B\right]$ and $\sum_{\boldsymbol s_j\in\mathcal{S}}g(\boldsymbol s_j)$ in market $\mathcal{S}$, and the inner product of $\left[\boldsymbol{s}_A-\boldsymbol{s}_B\right]$ and $\sum_{\boldsymbol s_j\in\mathcal{S}\cup\boldsymbol{s}_C}g(\boldsymbol s_j)$ in the new market of $\mathcal{S}\cup\boldsymbol{s}_C$. For the comparison utility module, we calculate the inner product of $\left[\boldsymbol h_2(\boldsymbol s_B)-\boldsymbol  h_2(\boldsymbol s_A)\right]$ and $\sum_{\boldsymbol{s}_j\in \mathcal{S}}\boldsymbol{s}_j$ in the market $\mathcal{S}$, and the inner product of $\left[\boldsymbol h_2(\boldsymbol s_B)-\boldsymbol  h_2(\boldsymbol s_A)\right]$ and $\sum_{\boldsymbol{s}_j\in \mathcal{S}\cup\boldsymbol{s}_C}\boldsymbol{s}_j$ in the new market of $\mathcal{S}\cup\boldsymbol{s}_C$. And for the position utility module, we calculate the inner product of $\boldsymbol\alpha$ and $\left[pos(\boldsymbol{s}_A|\mathcal{S})-pos(\boldsymbol{s}_B|\mathcal{S})\right]$ in the market $\mathcal{S}$, and the inner product of  $\boldsymbol\alpha$ and $\left[pos(\boldsymbol{s}_A|\mathcal{S}\cup\boldsymbol{s}_C)-pos(\boldsymbol{s}_B|\mathcal{S}\cup\boldsymbol{s}_C)\right]$ in the new market of $\mathcal{S}\cup\boldsymbol{s}_C$. Then, one possible solution to make Eq. \eqref{equ:condition} hold is that the addition of seller $\boldsymbol{s}_C$ leads to a different sign of the above inner products in all three utility modules. It means that the additive method is effective as it can always find parameters to make Eq. \eqref{equ:condition} hold by optimizing parameters in the three utility modules.

\begin{theorem}
\label{theo:effectiveness}
Based on the proposed additive method, we find one possible solution for ${f(\boldsymbol{s}_{B},\boldsymbol{s}_{C})}\big / {f(\boldsymbol{s}_{A},\boldsymbol{s}_{C})}>
{\lambda(\boldsymbol{s}_A|\mathcal{S})}\big /{\lambda(\boldsymbol{s}_B|\mathcal{S})}>1$ to hold is that $\Gamma_1$ and $\Gamma_2$ in the attribute utility module, $\Phi_1$ and $\Phi_2$ in the comparison utility module and $\boldsymbol\alpha$ in the position utility module satisfy all the following conditions
\begin{align}
\left\{\begin{aligned}
& \left[\sum_{\boldsymbol s_j\in\mathcal{S}}g(\boldsymbol s_j)\right]^T\cdot \left(\boldsymbol{s}_A-\boldsymbol{s}_B\right)>0,
\quad \text{and} \quad \left[\sum_{\boldsymbol s_j\in\mathcal{S}\cup\boldsymbol{s}_C}g(\boldsymbol s_j)\right]^T\cdot \left(\boldsymbol{s}_A-\boldsymbol{s}_B\right)<0, \\
&\left[\boldsymbol h_2(\boldsymbol s_A)-\boldsymbol  h_2(\boldsymbol s_B)\right]^T\cdot \sum_{\boldsymbol{s}_j\in \mathcal{S}}\boldsymbol{s}_j>0, \quad \text{and} \quad  \left[\boldsymbol h_2(\boldsymbol s_A)-\boldsymbol  h_2(\boldsymbol s_B)\right]^T\cdot \sum_{\boldsymbol{s}_j\in \mathcal{S}\cup\boldsymbol{s}_C}\boldsymbol{s}_j<0,\\
&\boldsymbol\alpha^T\cdot \left[pos(\boldsymbol{s}_A|\mathcal{S})-pos(\boldsymbol{s}_B|\mathcal{S})\right]>0, \quad \text{and} \quad \boldsymbol\alpha^T\cdot \left[pos(\boldsymbol{s}_A|\mathcal{S}\cup\boldsymbol{s}_C)-pos(\boldsymbol{s}_B|\mathcal{S}\cup\boldsymbol{s}_C)\right]<0,  \\
\end{aligned}\right.
\end{align}
and one possible solution for ${f(\boldsymbol{s}_{B},\boldsymbol{s}_{C})}\big / {f(\boldsymbol{s}_{A},\boldsymbol{s}_{C})}<
{\lambda(\boldsymbol{s}_A|\mathcal{S})}\big /{\lambda(\boldsymbol{s}_B|\mathcal{S})}<1$ to hold is that $\Gamma_1$, $\Gamma_2$, $\Phi_1$, $\Phi_2$ and $\boldsymbol\alpha$ satisfy
\begin{align}
\left\{\begin{aligned}
& \left[\sum_{\boldsymbol s_j\in\mathcal{S}}g(\boldsymbol s_j)\right]^T\cdot \left(\boldsymbol{s}_A-\boldsymbol{s}_B\right)<0,
\quad \text{and} \quad \left[\sum_{\boldsymbol s_j\in\mathcal{S}\cup\boldsymbol{s}_C}g(\boldsymbol s_j)\right]^T\cdot \left(\boldsymbol{s}_A-\boldsymbol{s}_B\right)>0, \\
&\left[\boldsymbol h_2(\boldsymbol s_A)-\boldsymbol  h_2(\boldsymbol s_B)\right]^T\cdot \sum_{\boldsymbol{s}_j\in \mathcal{S}}\boldsymbol{s}_j<0, \quad \text{and} \quad  \left[\boldsymbol h_2(\boldsymbol s_A)-\boldsymbol  h_2(\boldsymbol s_B)\right]^T\cdot \sum_{\boldsymbol{s}_j\in \mathcal{S}\cup\boldsymbol{s}_C}\boldsymbol{s}_j>0,\\
&\boldsymbol\alpha^T\cdot \left[pos(\boldsymbol{s}_A|\mathcal{S})-pos(\boldsymbol{s}_B|\mathcal{S})\right]<0, \quad \text{and} \quad \boldsymbol\alpha^T\cdot \left[pos(\boldsymbol{s}_A|\mathcal{S}\cup\boldsymbol{s}_C)-pos(\boldsymbol{s}_B|\mathcal{S}\cup\boldsymbol{s}_C)\right]>0.  \\
\end{aligned}\right.
\end{align}
\end{theorem}

\begin{proof}
We begin with ${f(\boldsymbol{s}_{B},\boldsymbol{s}_{C})}\big / {f(\boldsymbol{s}_{A},\boldsymbol{s}_{C})}>
{\lambda(\boldsymbol{s}_A|\mathcal{S})}\big /{\lambda(\boldsymbol{s}_B|\mathcal{S})}>1$. First for ${\lambda(\boldsymbol{s}_A|\mathcal{S})}\big /{\lambda(\boldsymbol{s}_B|\mathcal{S})}>1$, based on the definition of $\lambda(\boldsymbol{s}_i|\mathcal{S})$ in Eq. \eqref{equ:ori_lambda}, we have
\begin{align}
\frac{\lambda(\boldsymbol{s}_A|\mathcal{S})}{\lambda(\boldsymbol{s}_B|\mathcal{S})}
&= \frac{\exp\left\{\left[\sum_{\boldsymbol s_j\in\mathcal{S}}g(\boldsymbol s_j)\right]^T\cdot \boldsymbol{s}_A+\sum_{\boldsymbol{s}_j\in \mathcal{S}}h(\boldsymbol{s}_A,\boldsymbol{s}_j)+
\boldsymbol\alpha^T\cdot pos(\boldsymbol{s}_A|\mathcal{S})\right\}}
{\exp\left\{\left[\sum_{\boldsymbol s_j\in\mathcal{S}}g(\boldsymbol s_j)\right]^T\cdot \boldsymbol{s}_B+\sum_{\boldsymbol{s}_j\in \mathcal{S}}h(\boldsymbol{s}_B,\boldsymbol{s}_j)+
\boldsymbol\alpha^T\cdot pos(\boldsymbol{s}_B|\mathcal{S})\right\}} \\ \notag
&= \exp\Bigg\{\left[\sum_{\boldsymbol s_j\in\mathcal{S}}g(\boldsymbol s_j)\right]^T\cdot \left(\boldsymbol{s}_A-\boldsymbol{s}_B\right)
+\sum_{\boldsymbol{s}_j\in \mathcal{S}} \left[h(\boldsymbol{s}_A,\boldsymbol{s}_j)-h(\boldsymbol{s}_B,\boldsymbol{s}_j)\right]\\ \notag
& \qquad \qquad \qquad \qquad \qquad \qquad \qquad \quad
+\boldsymbol\alpha^T\cdot \left[pos(\boldsymbol{s}_A|\mathcal{S})-pos(\boldsymbol{s}_B|\mathcal{S})\right]\Bigg\}.
\end{align}
Recall that $h(\boldsymbol{s}_i,\boldsymbol{s}_j)=\boldsymbol h_2(\boldsymbol s_i)^T\cdot\boldsymbol{s}_j$, then
\begin{align}
&\frac{\lambda(\boldsymbol{s}_A|\mathcal{S})}{\lambda(\boldsymbol{s}_B|\mathcal{S})} = \exp\Bigg\{\left[\sum_{\boldsymbol s_j\in\mathcal{S}}g(\boldsymbol s_j)\right]^T\cdot \left(\boldsymbol{s}_A-\boldsymbol{s}_B\right)
+\left[\boldsymbol h_2(\boldsymbol s_A)-\boldsymbol  h_2(\boldsymbol s_B)\right]^T\cdot \sum_{\boldsymbol{s}_j\in \mathcal{S}}\boldsymbol{s}_j \\ \notag
& \qquad \qquad \qquad \qquad \qquad \qquad \qquad \qquad \qquad \qquad \qquad \quad
+\boldsymbol\alpha^T\cdot \left[pos(\boldsymbol{s}_A|\mathcal{S})-pos(\boldsymbol{s}_B|\mathcal{S})\right]\Bigg\}.
\end{align}
To make ${\lambda(\boldsymbol{s}_A|\mathcal{S})}\big /{\lambda(\boldsymbol{s}_B|\mathcal{S})}>1$, one possible solution is
\begin{align}
\left[\sum_{\boldsymbol s_j\in\mathcal{S}}g(\boldsymbol s_j)\right]^T\cdot \left(\boldsymbol{s}_A-\boldsymbol{s}_B\right)>0, \quad
&\left[\boldsymbol h_2(\boldsymbol s_A)-\boldsymbol  h_2(\boldsymbol s_B)\right]^T\cdot \sum_{\boldsymbol{s}_j\in \mathcal{S}}\boldsymbol{s}_j>0, \\ \notag
&\quad \quad \quad \quad \text{and} \quad \boldsymbol\alpha^T\cdot \left[pos(\boldsymbol{s}_A|\mathcal{S})-pos(\boldsymbol{s}_B|\mathcal{S})\right]>0.
\end{align}
Next, for ${f(\boldsymbol{s}_{B},\boldsymbol{s}_{C})}\big / {f(\boldsymbol{s}_{A},\boldsymbol{s}_{C})}>
{\lambda(\boldsymbol{s}_A|\mathcal{S})}\big /{\lambda(\boldsymbol{s}_B|\mathcal{S})}$, it is equivalent to $\lambda(\boldsymbol{s}_A|\mathcal{S}\cup\boldsymbol{s}_C)
-\lambda(\boldsymbol{s}_B|\mathcal{S}\cup\boldsymbol{s}_C)<0$ (see the proof of Theorem \ref{theo:condition}), that is, ${\lambda(\boldsymbol{s}_A|\mathcal{S}\cup\boldsymbol{s}_C)}\big /
{\lambda(\boldsymbol{s}_B|\mathcal{S}\cup\boldsymbol{s}_C)}<1$. Based on the definition of ${\lambda(\boldsymbol{s}_i|\mathcal{S}\cup\boldsymbol{s}_C)}$, we have
\begin{align}
\frac{\lambda(\boldsymbol{s}_A|\mathcal{S}\cup\boldsymbol{s}_C)}{\lambda(\boldsymbol{s}_i|\mathcal{S}\cup\boldsymbol{s}_C)}
&= \frac{\exp\left\{\left[\sum_{\boldsymbol s_j\in\mathcal{S}\cup\boldsymbol{s}_C}g(\boldsymbol s_j)\right]^T\cdot \boldsymbol{s}_A+\sum_{\boldsymbol{s}_j\in \mathcal{S}\cup\boldsymbol{s}_C}h(\boldsymbol{s}_A,\boldsymbol{s}_j)+
\boldsymbol\alpha^T\cdot pos(\boldsymbol{s}_A|\mathcal{S}\cup\boldsymbol{s}_C)\right\}}
{\exp\left\{\left[\sum_{\boldsymbol s_j\in\mathcal{S}\cup\boldsymbol{s}_C}g(\boldsymbol s_j)\right]^T\cdot \boldsymbol{s}_B+\sum_{\boldsymbol{s}_j\in \mathcal{S}\cup\boldsymbol{s}_C}h(\boldsymbol{s}_B,\boldsymbol{s}_j)+
\boldsymbol\alpha^T\cdot pos(\boldsymbol{s}_B|\mathcal{S}\cup\boldsymbol{s}_C)\right\}} \\ \notag
&= \exp\Bigg\{\left[\sum_{\boldsymbol s_j\in\mathcal{S}\cup\boldsymbol{s}_C}g(\boldsymbol s_j)\right]^T\cdot \left(\boldsymbol{s}_A-\boldsymbol{s}_B\right)
+\sum_{\boldsymbol{s}_j\in \mathcal{S}\cup\boldsymbol{s}_C} \left[h(\boldsymbol{s}_A,\boldsymbol{s}_j)-h(\boldsymbol{s}_B,\boldsymbol{s}_j)\right]\\ \notag
& \qquad \qquad \qquad \qquad \qquad \qquad \qquad \qquad \quad
+\boldsymbol\alpha^T\cdot \left[pos(\boldsymbol{s}_A|\mathcal{S}\cup\boldsymbol{s}_C)-pos(\boldsymbol{s}_B|\mathcal{S}\cup\boldsymbol{s}_C)\right]\Bigg\} \\ \notag
&= \exp\Bigg\{\left[\sum_{\boldsymbol s_j\in\mathcal{S}\cup\boldsymbol{s}_C}g(\boldsymbol s_j)\right]^T\cdot \left(\boldsymbol{s}_A-\boldsymbol{s}_B\right)
+\left[\boldsymbol h_2(\boldsymbol s_A)-\boldsymbol  h_2(\boldsymbol s_B)\right]^T\cdot \sum_{\boldsymbol{s}_j\in \mathcal{S}\cup\boldsymbol{s}_C}\boldsymbol{s}_j \\ \notag
& \qquad \qquad \qquad \qquad \qquad \qquad \qquad \qquad \qquad \qquad \qquad \quad
+\boldsymbol\alpha^T\cdot \left[pos(\boldsymbol{s}_A|\mathcal{S}\cup\boldsymbol{s}_C)-pos(\boldsymbol{s}_B|\mathcal{S}\cup\boldsymbol{s}_C)\right]\Bigg\}.
\end{align}
To make ${\lambda(\boldsymbol{s}_A|\mathcal{S}\cup\boldsymbol{s}_C)}\big /
{\lambda(\boldsymbol{s}_B|\mathcal{S}\cup\boldsymbol{s}_C)}<1$, one possible solution is
\begin{align}
\left[\sum_{\boldsymbol s_j\in\mathcal{S}\cup\boldsymbol{s}_C}g(\boldsymbol s_j)\right]^T\cdot \left(\boldsymbol{s}_A-\boldsymbol{s}_B\right)<0, \quad
&\left[\boldsymbol h_2(\boldsymbol s_A)-\boldsymbol  h_2(\boldsymbol s_B)\right]^T\cdot \sum_{\boldsymbol{s}_j\in \mathcal{S}\cup\boldsymbol{s}_C}\boldsymbol{s}_j<0, \\ \notag
&\quad \quad \quad \quad \text{and} \quad \boldsymbol\alpha^T\cdot \left[pos(\boldsymbol{s}_A|\mathcal{S}\cup\boldsymbol{s}_C)-pos(\boldsymbol{s}_B|\mathcal{S}\cup\boldsymbol{s}_C)\right]<0.
\end{align}
Summarizing the above discussions, conditions for ${f(\boldsymbol{s}_{B},\boldsymbol{s}_{C})}\big / {f(\boldsymbol{s}_{A},\boldsymbol{s}_{C})}>
{\lambda(\boldsymbol{s}_A|\mathcal{S})}\big /{\lambda(\boldsymbol{s}_B|\mathcal{S})}>1$ to hold is
\begin{align}
\left\{\begin{aligned}
& \left[\sum_{\boldsymbol s_j\in\mathcal{S}}g(\boldsymbol s_j)\right]^T\cdot \left(\boldsymbol{s}_A-\boldsymbol{s}_B\right)>0,
\quad \text{and} \quad \left[\sum_{\boldsymbol s_j\in\mathcal{S}\cup\boldsymbol{s}_C}g(\boldsymbol s_j)\right]^T\cdot \left(\boldsymbol{s}_A-\boldsymbol{s}_B\right)<0, \\
&\left[\boldsymbol h_2(\boldsymbol s_A)-\boldsymbol  h_2(\boldsymbol s_B)\right]^T\cdot \sum_{\boldsymbol{s}_j\in \mathcal{S}}\boldsymbol{s}_j>0, \quad \text{and} \quad  \left[\boldsymbol h_2(\boldsymbol s_A)-\boldsymbol  h_2(\boldsymbol s_B)\right]^T\cdot \sum_{\boldsymbol{s}_j\in \mathcal{S}\cup\boldsymbol{s}_C}\boldsymbol{s}_j<0,\\
&\boldsymbol\alpha^T\cdot \left[pos(\boldsymbol{s}_A|\mathcal{S})-pos(\boldsymbol{s}_B|\mathcal{S})\right]>0, \quad \text{and} \quad \boldsymbol\alpha^T\cdot \left[pos(\boldsymbol{s}_A|\mathcal{S}\cup\boldsymbol{s}_C)-pos(\boldsymbol{s}_B|\mathcal{S}\cup\boldsymbol{s}_C)\right]<0.  \\
\end{aligned}\right.
\end{align}

(2) In the similar way, we discuss the solutions of ${f(\boldsymbol{s}_{B},\boldsymbol{s}_{C})}\big / {f(\boldsymbol{s}_{A},\boldsymbol{s}_{C})}<
{\lambda(\boldsymbol{s}_A|\mathcal{S})}\big /{\lambda(\boldsymbol{s}_B|\mathcal{S})}<1$, and it holds when
\begin{align}
\left\{\begin{aligned}
& \left[\sum_{\boldsymbol s_j\in\mathcal{S}}g(\boldsymbol s_j)\right]^T\cdot \left(\boldsymbol{s}_A-\boldsymbol{s}_B\right)<0,
\quad \text{and} \quad \left[\sum_{\boldsymbol s_j\in\mathcal{S}\cup\boldsymbol{s}_C}g(\boldsymbol s_j)\right]^T\cdot \left(\boldsymbol{s}_A-\boldsymbol{s}_B\right)>0, \\
&\left[\boldsymbol h_2(\boldsymbol s_A)-\boldsymbol  h_2(\boldsymbol s_B)\right]^T\cdot \sum_{\boldsymbol{s}_j\in \mathcal{S}}\boldsymbol{s}_j<0, \quad \text{and} \quad  \left[\boldsymbol h_2(\boldsymbol s_A)-\boldsymbol  h_2(\boldsymbol s_B)\right]^T\cdot \sum_{\boldsymbol{s}_j\in \mathcal{S}\cup\boldsymbol{s}_C}\boldsymbol{s}_j>0,\\
&\boldsymbol\alpha^T\cdot \left[pos(\boldsymbol{s}_A|\mathcal{S})-pos(\boldsymbol{s}_B|\mathcal{S})\right]<0, \quad \text{and} \quad \boldsymbol\alpha^T\cdot \left[pos(\boldsymbol{s}_A|\mathcal{S}\cup\boldsymbol{s}_C)-pos(\boldsymbol{s}_B|\mathcal{S}\cup\boldsymbol{s}_C)\right]>0.  \\
\end{aligned}\right.
\end{align}

Experiments show that the proposed learning algorithm in Section \ref{sec:learning} can always find $\Gamma_1$, $\Gamma_2$,$\Phi_1$,$\Phi_2$ and $\boldsymbol\alpha$ satisfying the above conditions, thus can effectively predict preference reversals.
\end{proof}

In summary, the proposed additive method has good scalability when new seller joins the market, and can effectively predict preference reversals with proper selection of the parameters.

\section{Real User Test}
\label{sec:test}

We design three tasks to validate the performance of \emph{Pacos} from three aspects: the personalized ranking task, the preference reversal prediction task, and the market share prediction task. Specifically, the personalized ranking task is used to verify the ranking performance of \emph{Pacos} in personal recommendation. The preference reversal prediction task is used to verify whether the proposed method can accurately predict preference reversals. Furthermore, we also conduct an interpretability study to illustrate that the \emph{Pacos} can be used to understand the cause of preference reversals. Then, the market share prediction task is used to verify whether the proposed method can accurately predict the market share of sellers when their attributes change. At last, we evaluate the impact of the proposed learning algorithm on the performance. In the following, these experiments mentioned above are presented in detail.

\subsection{The Personalized Ranking Task}
\textbf{Dataset Description.} \quad The dataset used in the personalized ranking task comes from our prior work in \cite{li2021prima++}. In this dataset, five products with different price ranges were considered:  Xiaomi scale, Bose QC35 headphone,  Panasonic  EH-NA  series  hairdryer,  Xiaomi smartphone  with  6G  RAM  and  128G  storage,  and  Austin Air Purifier  HM  400. For each product, we collected information of real sellers from eBay, including price and seller reputation. Then, the collected sellers were randomly grouped into 10 markets, each with $4\sim6$ sellers ($N=6$). We invited $682$ subjects for an interview, including 366 males and 316 females. Each subject was asked to consider price and reputation information only, and select one seller from each market as their best choice. More details about the data collection can be found in \cite{li2021prima++}.

\textbf{Experiment Setup.} \quad The data pre-processing steps are as follows. First, following our prior work \cite{li2021prima++}, we normalize the attributes of sellers. After normalization, each attribute is in the range $[0,1]$, and a larger normalized value indicates a higher preference. In addition, following the work in \cite{mottini2017deep}, a special ``dummy'' item is introduced to pad these markets that contain items less than $N$. For example, if there are only 4 items in a market, then we pad the market with 2 ``dummy'' items. In this way, the number of items is the same and equals to $N$ in all markets, which facilitates the training procedure. In particular, for the ``dummy'' item, we set all of its attributes to zero, so that it always has zero utility and is ranked the least preferred. Therefore, it does not affect the ranking of other items in the market.

For the proposed \emph{Pacos}, to distinguish two module design methods, we use ``\emph{Pacos}-add'' to denote the additive method, and use ``\emph{Pacos}-NN'' to denote the ANN-based method. We compare \emph{Pacos}-add and \emph{Pacos}-NN with the following prior works.
\begin{itemize}
  \item The MNL-based methods, including MNL \cite{mcfadden1973conditional}, PRIMA++ \cite{li2021prima++}, LCL \cite{tomlinson2020learning}, and CDM \cite{rosenfeld2020predicting}.
  \item Machine learning-based methods, such as Naive Bayes classifier \cite{rish2001empirical}, PNN-based method \cite{mottini2017deep}, RankNet \cite{burges2005learning}, and RankSVM \cite{fine2001efficient}.
  \item The random baseline, where all sellers in a market have the same probability of being selected.
\end{itemize}

In our experiments, all methods are implemented using  the  Scikit-learn  0.22.1 and Tensorflow 1.12.0 package. To avoid the impact of randomness, we use the 5-fold cross-validation method to divide the training dataset and the test dataset, and repeat the experiment 10 times with different initial conditions. As each product contains 10 markets, 8 markets are used for training and 2 for testing in our 5-fold cross-validation method.

\textbf{Performance Metrics.} \quad Given a market $\mathcal{S}$ with $N$ sellers, we sort them in the descending order of their predicted probabilities $\{\mathcal{P}(\boldsymbol{s}_i|\mathcal{S})\}$, and the seller with the largest predicted probability is ranked the first. Let $\boldsymbol s_b$ be the selected seller and $v_b$ be its ranking position. To evaluate the ranking performance of \emph{Pacos}, following the work in \cite{li2021prima++}, the performance metrics we use are as follows.
\begin{itemize}
  \item ranking quality ($rq$): it is defined as $rq=(N-v_b)\big / (N-1)$ with $0\leq rq \leq 1$. When $s_b$ is ranked as the top one with $v_b=1$, we have $rq=1$. While when $s_b$ is ranked the last with $v_b=N$, $rq=0$.
  \item success rate ($sr$): given a positive integer $m$, the success rate $sr(m)$ refers to the frequency that a model ranks the selected seller $s_b$ in the top $m$ positions, i.e., $1\leq v_b \leq m$.
\end{itemize}
A higher value of ranking quality or success rate represents a better ranking performance.

\textbf{Experimental Results.} \quad The results of ranking quality are shown in Table \ref{tab:rq}. First, it shows that \emph{Pacos}-add and \emph{Pacos}-NN have similar performance on all products. Furthermore, the random baseline performs the worse among all methods, while \emph{Pacos}-add and \emph{Pacos}-NN achieve the highest overall accuracy. Last, PRIMA++ has the best performance among existing MNL-based methods, while the PNN-based method performs the best among existing machine learning-based methods. The results of $sr(m=1)$ are shown in the supplementary file, and we observe a similar trend. This shows that the proposed preference model can achieve higher accuracy than existing methods on the personalized ranking task, and \emph{Pacos}-add and \emph{Pacos}-NN have similar ranking performance.

\begin{table}[tbp]
  \centering
  \caption{Results of Ranking Quality in Personalized Ranking Task.}
    \begin{tabular}{ccccccc}
    \toprule
          & Air Purifier & Head phone & Hairdryer & Smartphone & Scale & Average \\
    \midrule
    Random & 0.500  & 0.500  & 0.501  & 0.500  & 0.500  & 0.500  \\
    MNL   & 0.713  & 0.780  & 0.763  & 0.799  & 0.684  & 0.748  \\
    LCL   & 0.755  & 0.832  & 0.776  & 0.810  & 0.700  & 0.774  \\
    CDM   & 0.746  & 0.632  & 0.630  & 0.569  & 0.641  & 0.644  \\
    PRIMA++ & 0.849  & 0.809  & 0.796  & 0.805  & 0.742  & 0.800  \\
    Naive Bayes & 0.798  & 0.829  & 0.778  & 0.768  & 0.774  & 0.789  \\
    RankNet & 0.849  & \textbf{0.849}  & 0.827  & 0.830  & 0.738  & 0.819  \\
    RankSVM & 0.722  & 0.787  & 0.772  & 0.804  & 0.687  & 0.754  \\
    PNN & 0.852  & 0.828  & 0.825  & 0.832  & 0.801  & 0.827  \\
    \midrule
    \emph{Pacos}-add & \textbf{0.866 } & 0.844  & 0.838  & 0.839  & 0.825  & 0.842  \\
    \emph{Pacos}-NN & 0.863  & \textbf{0.846 } & \textbf{0.844 } & \textbf{0.842 } & \textbf{0.830 } & \textbf{0.845 } \\
    \bottomrule
    \end{tabular}%
  \label{tab:rq}%
\end{table}%


\begin{figure}[tbp]
  \centering
  \includegraphics[width=8cm]{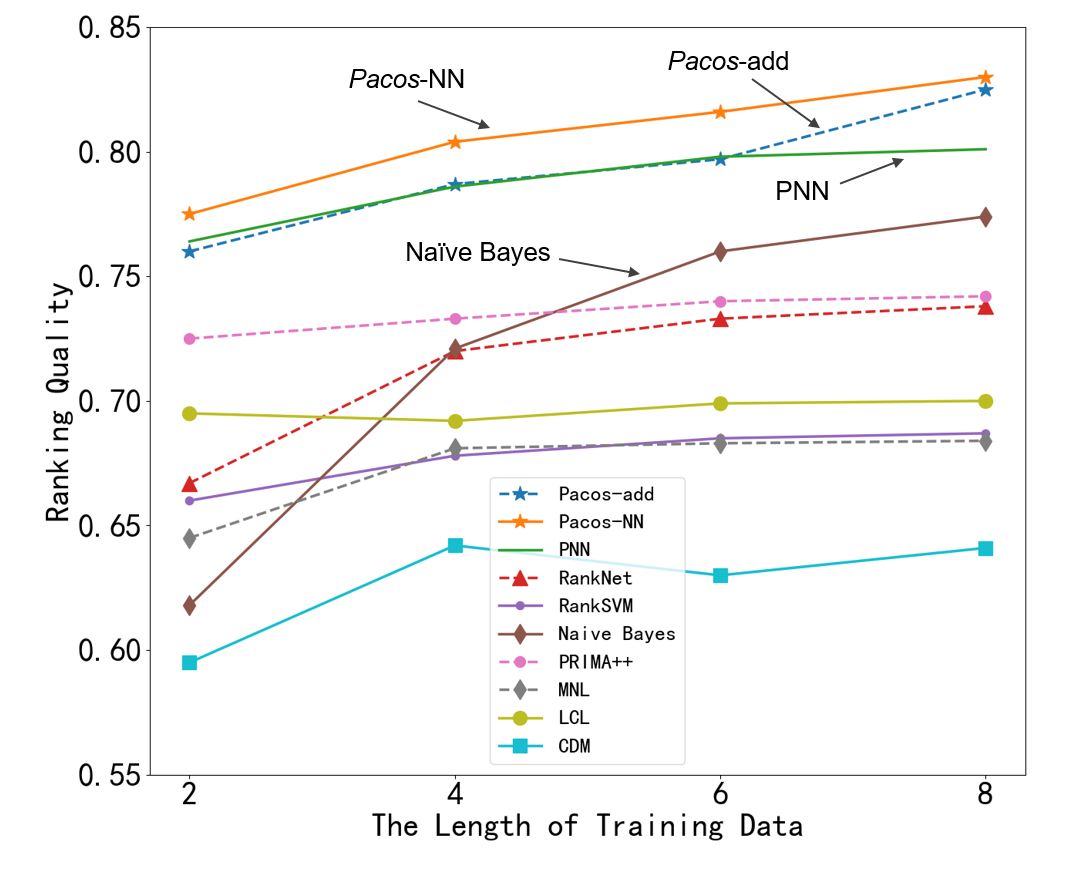}
  \caption{The impact of The Length of Training Data on Performance.}
  \label{fig:size}
\end{figure}

We also analyze the impact of the length of training data on performance. In the following, we take the Xiaomi scale as an example to show the experimental results, and we observe similar trends in other products. As shown in Fig. \ref{fig:size}, \emph{Pacos}-NN always has the best performance. When comparing \emph{Pacos}-add with \emph{Pacos}-NN, the accuracy of \emph{Pacos}-add is lower than that of \emph{Pacos}-NN when the length of training data is 2 to 6, and the accuracy of \emph{Pacos}-add is similar to that of \emph{Pacos}-NN when the length is 8. Additionally, PNN-based method gives the best performance among all prior works. When comparing \emph{Pacos}-add with PNN-based method, \emph{Pacos}-add has similar performance to the PNN-based method when the training data is less than 6, and \emph{Pacos}-add performs slightly better than the PNN-based method when the training data is 8. In summary, the proposed \emph{Pacos}-add and \emph{Pacos}-NN do not depend on a large amount of training data, and can still obtain high accuracy when only small data is available.

\subsection{The Preference Reversal Prediction Task}
\textbf{Dataset Description.} \quad Due to the lack of dataset on preference reversals in prior works, we designe a preference reversal prediction task to verify whether the proposed method is able to predict preference reversals. The product used in this experiment is the Xiaomi scale, which has two attributes: price (\textyen) and seller reputation (REP). A total of 647 subjects are invited, including 332 males and 315 females, and subjects aging from 18 to 25, 26 to 30, and 31 to 40 account for $35.4\%$, $26.9\%$, $27.8\%$ of the total subjects, respectively. The occupations of these participants include students, clerks, administration staff, technical staff, etc.

\begin{figure}[tbp]
\centering
\begin{minipage}[b]{.24\linewidth}
  \centering
  \centerline{\includegraphics[width=3.8cm]{market7-1-aligned.png}}
  \centerline{(a) market A-1}\medskip
\end{minipage}
\hfill
\begin{minipage}[b]{.24\linewidth}
  \centering
  \centerline{\includegraphics[width=3.8cm]{market7-2-aligned.png}}
  \centerline{(b) market A-2 }\medskip
\end{minipage}
\begin{minipage}[b]{.24\linewidth}
  \centering
  \centerline{\includegraphics[width=3.8cm]{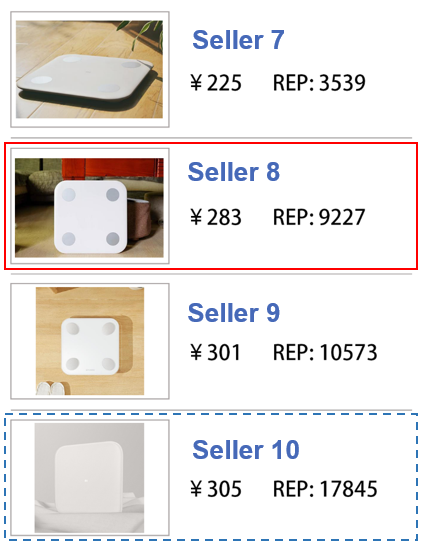}}
  \centerline{(c) market B-1}\medskip
\end{minipage}
\hfill
\begin{minipage}[b]{.24\linewidth}
  \centering
  \centerline{\includegraphics[width=3.8cm]{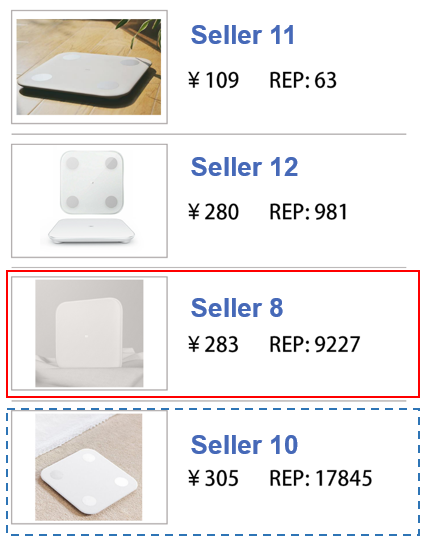}}
  \centerline{(d) market B-2 }\medskip
\end{minipage}
\caption{Two groups of markets used in the preference reversal prediction task.}
\label{fig:two_markets}
\end{figure}

In our experiment, we use the salience effect in economics \cite{bordalo2013salience} to induce preference reversals. The salience effect stems from the comparison among items, causing people to be more likely to focus on items that are more prominent in the market. Let $q(\boldsymbol{s}_i)$ be the ratio of the reputation to its price of seller $\boldsymbol{s}_i$, and define $\Delta q(\boldsymbol{s}_i)=q(\boldsymbol{s}_i)-q(\boldsymbol{s}_{i-1})$ as the $q$ value increment of seller $\boldsymbol{s}_i$ relative to seller $\boldsymbol{s}_{i-1}$, where $\Delta q(\boldsymbol{s}_1)=q(\boldsymbol{s}_1)$. The salience effect states that users have more preference for sellers with a higher $\Delta q(\boldsymbol{s}_i)$. We design two groups of markets based on the salience effect, and the detailed information is shown in Fig. \ref{fig:two_markets}. In market A-1 in the group A, the $q$ values of these sellers are $q(\boldsymbol{s}_1)=0.18,q(\boldsymbol{s}_2)=2.83,
q(\boldsymbol{s}_3)=22.16,q(\boldsymbol{s}_4)=39.91$, and the corresponding values of $\Delta q(\boldsymbol{s}_i)$ are $\Delta q(\boldsymbol{s}_1)=0.18,\Delta q(\boldsymbol{s}_2)=2.65,
\Delta q(\boldsymbol{s}_3)=19.33,\Delta q(\boldsymbol{s}_4)=17.75$, respectively. Since $\Delta q(\boldsymbol{s}_3)$ is the largest, the salience effect indicates that more users will choose seller 3 in market A-1. While in market A-2 in the same group, the $q$ values of four sellers are $q(\boldsymbol{s} _5)=14.20,q(\boldsymbol{s} _3)=22.16,q(\boldsymbol{s} _6)=25.55,q(\boldsymbol{s} _4)=39.91$, and the corresponding values of $\Delta q(\boldsymbol{s}_i)$ are $\Delta q(\boldsymbol{s}_5)=14.20,\Delta q(\boldsymbol{s}_3)=7.96,\Delta q(\boldsymbol{s}_6)=3.34,\Delta q(\boldsymbol{s}_4)=14.36$. Similarly, since $\Delta q(\boldsymbol{s}_4)$ is the largest, the salience effect indicates that more users will choose seller 4 in market A-2. We invited 647 subjects to select their best choices from each of the two markets above. Their selection data is shown in Fig. \ref{fig:context_effect_m7} (a), and we do observe the preference reversal phenomenon in this test. Similarly, we design two markets in group B using the same approach, and also observe the preference reversal phenomenon in users' decisions.

Apart from the above 4 markets, we design another 7 markets, and observe no preference reversal in these 7 markets in users' real selection data. In the preference reversal prediction task, we use these 7 markets as the training data and the 4 markets designed with the salience effect as the testing data. In this task, we compare the proposed \emph{Pacos} with the following prior methods.
\begin{itemize}
  \item MNL, which is one of the most widely used methods in choice problems.
  \item PRIMA++, which is used as an example to show the results of MNL-based methods since it has the best performance among them.
  \item PNN-based method, which is used as an example to show the results of machine learning-based methods as it performs best among them.
\end{itemize}

\begin{figure}[tbp]
\centering
\begin{minipage}[b]{.33\linewidth}
  \centering
  \centerline{\includegraphics[width=4.2cm]{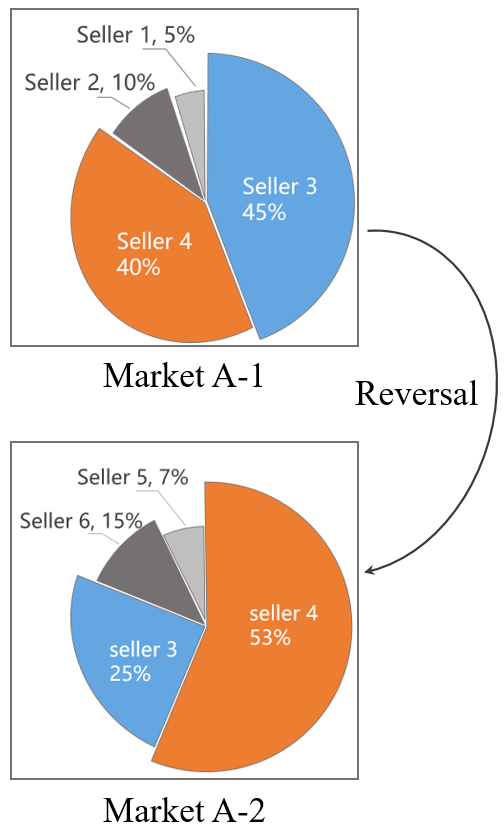}}
  \centerline{(a) real market share}\medskip
\end{minipage}
\hfill
\begin{minipage}[b]{.32\linewidth}
  \centering
  \centerline{\includegraphics[width=4.2cm]{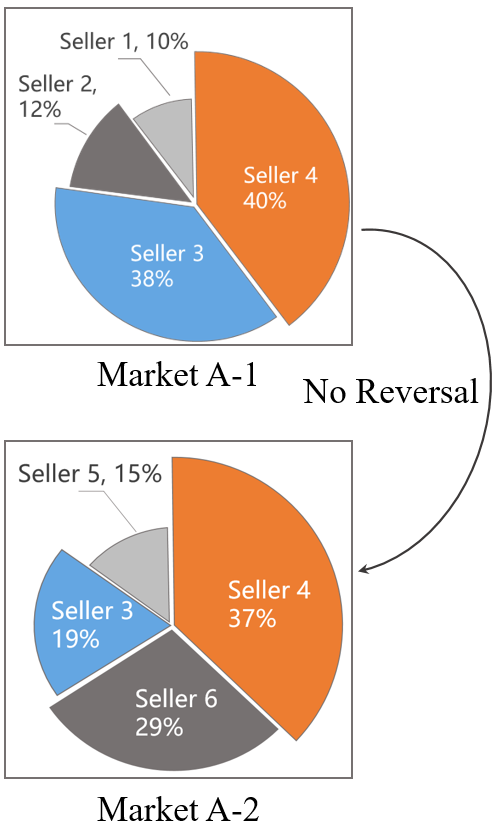}}
  \centerline{(b) \emph{Pacos}-add }\medskip
\end{minipage}
\hfill
\begin{minipage}[b]{0.33\linewidth}
  \centering
  \centerline{\includegraphics[width=4.2cm]{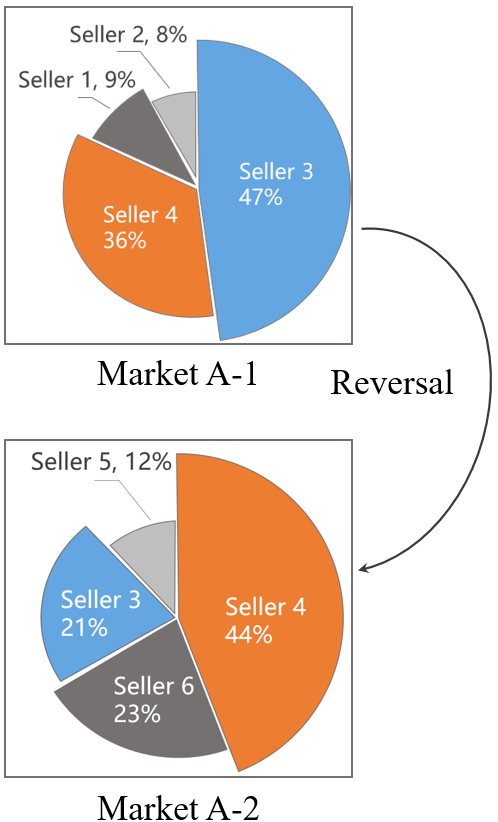}}
  \centerline{(c) \emph{Pacos}-NN }\medskip
\end{minipage}
\hfill
\begin{minipage}[b]{.33\linewidth}
  \centering
  \centerline{\includegraphics[width=4.2cm]{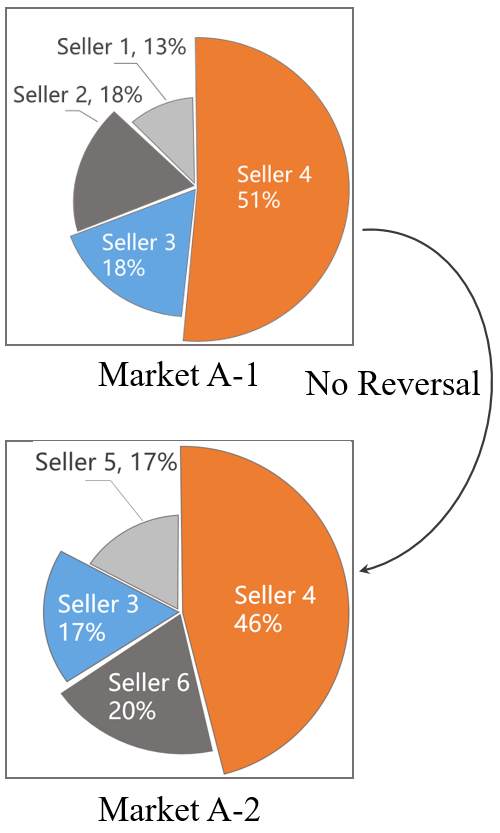}}
  \centerline{(d) MNL}\medskip
\end{minipage}
\hfill
\begin{minipage}[b]{.32\linewidth}
  \centering
  \centerline{\includegraphics[width=4.2cm]{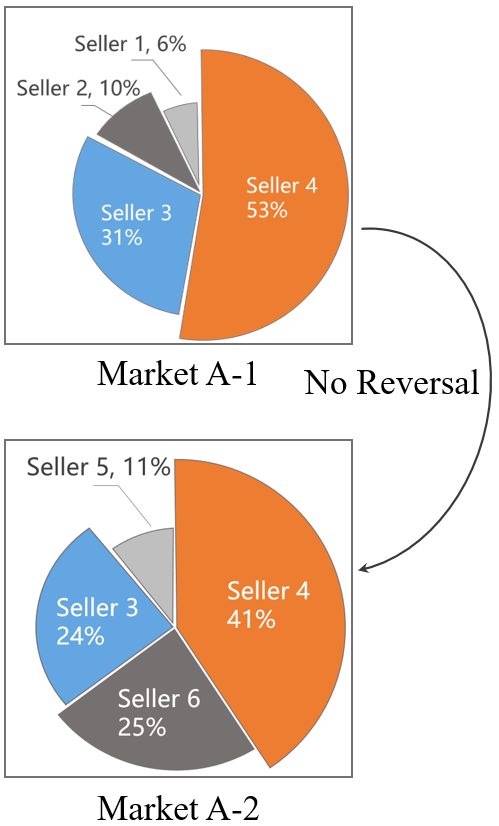}}
  \centerline{(e) PRIMA++}\medskip
\end{minipage}
\hfill
\begin{minipage}[b]{0.33\linewidth}
  \centering
  \centerline{\includegraphics[width=4.2cm]{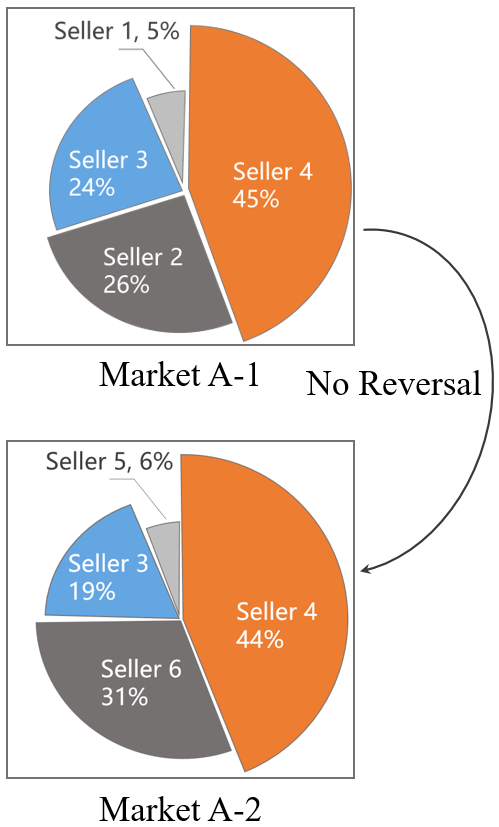}}
  \centerline{(f) PNN}\medskip
\end{minipage}
\caption{The predicted results in market A-1 and market A-2.}
\label{fig:context_effect_m7}
\end{figure}

\textbf{Performance Metrics.} \quad In this task, these methods are evaluated from two aspects. First, we consider the preference order of sellers, and evaluate whether existing methods can predict the preference reversal phenomenon. Furthermore, we evaluate these methods in terms of the accuracy of the estimated $\{\mathcal{P}(\boldsymbol s_i|\mathcal{S})\}$, and expect that the predicted $\{\mathcal{P}(\boldsymbol s_i|\mathcal{S})\}$ should be as close as possible to their true values. Considering that the ground truth of these probabilities cannot be obtained, following our prior work \cite{li2021prima++}, we introduce the market share and evaluate the accuracy of $\{\mathcal{P}(\boldsymbol s_i|\mathcal{S})\}$ by comparing the estimated probabilities with the market share. Formally, assume that a market $\mathcal{S}$ contains $N$ sellers and $M$ users are asked to select their best choices. Let $\mathcal{P}^j(\boldsymbol s_i|\mathcal{S})$ be the predicted probability that the user $j$ selects seller $\boldsymbol s_i$, and define $q_{i}=\frac{1}{M}\sum_j\mathcal{P}^j(\boldsymbol s_i)$. Let $q_{i}^*$ be the percentage of users who selects seller $\boldsymbol s_i$. Then $q_{i}^*$ can be considered as the true market share of seller $\boldsymbol s_i$, and $q_i$ can be regarded as the predicted market share of $\boldsymbol s_i$. Then we use
\begin{align}
KLD = \sum_i q_i\log_2 \frac{q_i}{q_i^*}, \quad \text{and} \quad MAE = \frac{1}{N}\sum_i\left|q_i-q_i^*\right|,
\end{align}
to evaluate the accuracy of the estimated probabilities $\{\mathcal{P}(\boldsymbol s_i|\mathcal{S})\}$, where KLD is the Kullback-Leibler divergence between $\{q_i^*\}$ and $\{q_i\}$, and MAE  is the mean absolute error between $\{q_i^*\}$ and $\{q_i\}$. Smaller values of KLD and MAE means higher accuracy of these estimated probabilities.

\textbf{Experimental Results.} \quad The predicted results of group A and B are shown in Fig. \ref{fig:context_effect_m7} and Fig. \ref{fig:context_effect_m8}, respectively. For markets in group A, more users choose seller 3 in market A-1, while more users select seller 4 in market A-2. Among all methods considered in this experiment, only the proposed \emph{Pacos}-NN can predict the preference reversal between market A-1 and market A-2. Then for markets in group B, more users choose seller 4 in market B-1, while more users select seller 3 in market B-2. The proposed \emph{Pacos}-add, \emph{Pacos}-NN and PNN-based methods can predict the preference reversal in market B-1 and market B-2. Furthermore, we examine the learned parameters $\Gamma_1$, $\Gamma_2$, $\Phi_1$, $\Phi_2$ and $\boldsymbol\alpha$, and find that they satisfy all the conditions in Theorem \ref{theo:effectiveness}.  The above results indicate that our proposed method can effectively predict the preference reversal phenomenon. More importantly, the above results are learned in the absence of preference reversals in the training set, reflecting the superiority of \emph{Pacos}. 

To evaluate the accuracy of $\{\mathcal{P}(\boldsymbol s_i|\mathcal{S})\}$, the results of KLD and MAE are in Table \ref{tab:metric_of_context_effect}. It shows that our proposed \emph{Pacos}-NN performs the best on these two metrics, PRIMA++ is the second, while MNL achieves the lowest accuracy. When comparing \emph{Pacos}-add with PNN-based method, even both methods successfully predict the preference reversal phenomenon in group B, \emph{Pacos}-NN achieves higher accuracy in estimation of the market share. In summary, experimental results demonstrate that the predicted probabilities of \emph{Pacos}-NN are more accurate than prior works.

\begin{figure}[tbp]
\centering
\begin{minipage}[b]{.33\linewidth}
  \centering
  \centerline{\includegraphics[width=4.2cm]{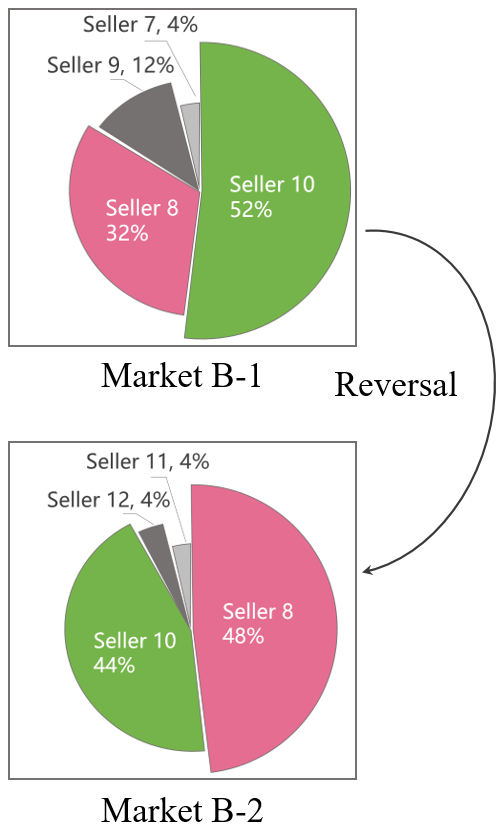}}
  \centerline{(a) real market share}\medskip
\end{minipage}
\hfill
\begin{minipage}[b]{.32\linewidth}
  \centering
  \centerline{\includegraphics[width=4.2cm]{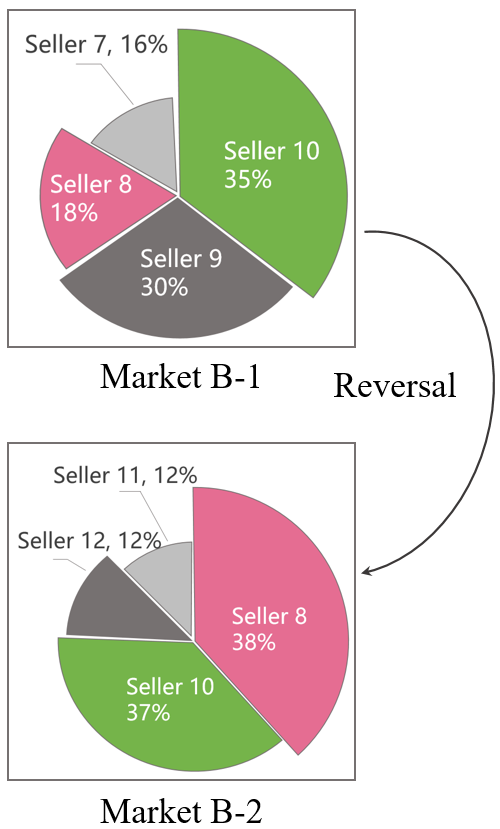}}
  \centerline{(b) \emph{Pacos}-add }\medskip
\end{minipage}
\hfill
\begin{minipage}[b]{0.33\linewidth}
  \centering
  \centerline{\includegraphics[width=4.2cm]{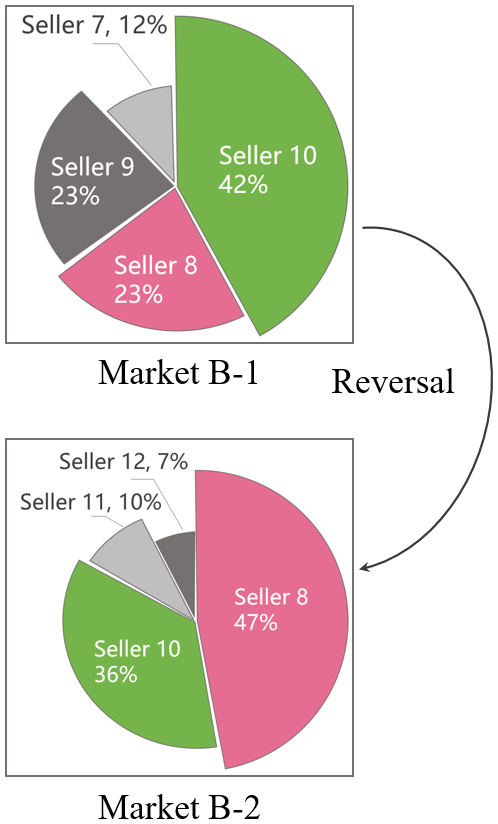}}
  \centerline{(c) \emph{Pacos}-NN }\medskip
\end{minipage}
\hfill
\hfill
\begin{minipage}[b]{.33\linewidth}
  \centering
  \centerline{\includegraphics[width=4.2cm]{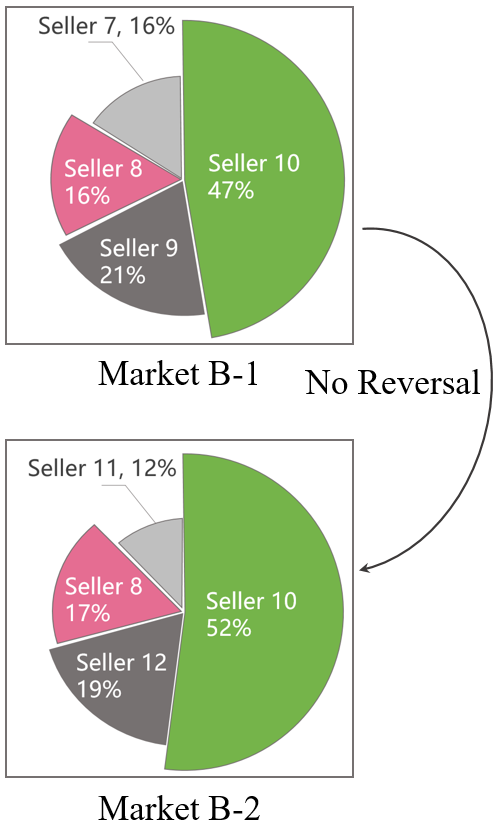}}
  \centerline{(d) MNL}\medskip
\end{minipage}
\hfill
\begin{minipage}[b]{.32\linewidth}
  \centering
  \centerline{\includegraphics[width=4.2cm]{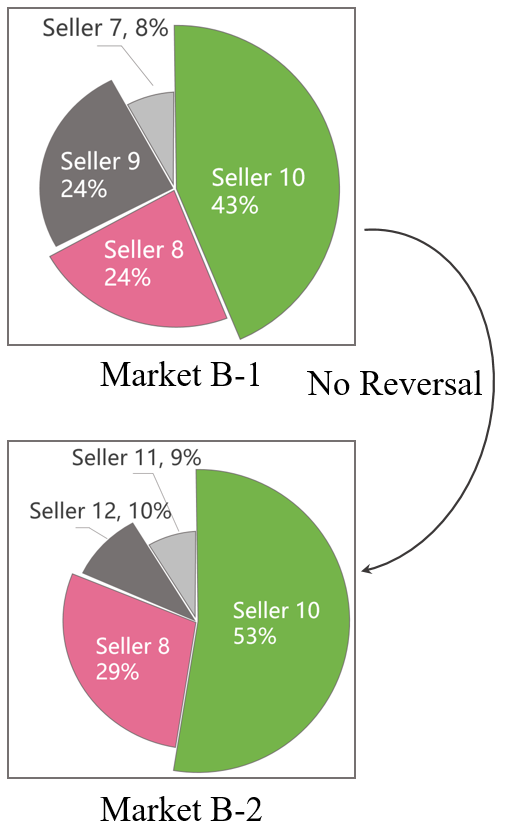}}
  \centerline{(e) PRIMA++}\medskip
\end{minipage}
\hfill
\begin{minipage}[b]{0.33\linewidth}
  \centering
  \centerline{\includegraphics[width=4.2cm]{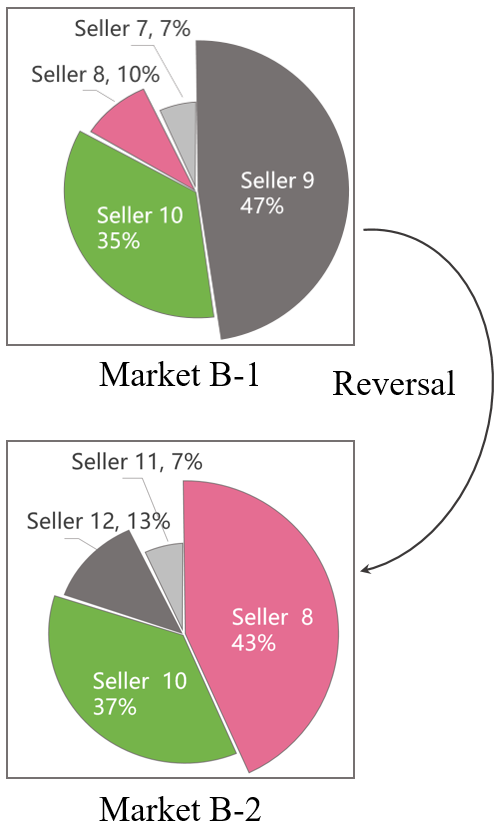}}
  \centerline{(f) PNN}\medskip
\end{minipage}
\caption{The predicted results in market B-1 and market B-2.}
\label{fig:context_effect_m8}
\end{figure}


\begin{table}[tbp]
  \centering
  \caption{The Results of KLD and MAE.}
    \begin{tabular}{cccccc}
    \toprule
          & \emph{Pacos}-add & \emph{Pacos}-NN  & PNN & MNL & PRIMA++ \\
    \midrule
    MAE   & 0.096 & \textbf{0.061} & 0.108 & 0.117 & 0.079 \\
    KLD   & 0.158 & \textbf{0.071} & 0.233 & 0.267 & 0.094 \\
    \bottomrule
    \end{tabular}%
  \label{tab:metric_of_context_effect}%
\end{table}%

\subsection{The Interpretability Study}
We conduct this study to show that the proposed \emph{Pacos} has good interpretability. In this study, by looking at the relative magnitudes of the three parts of utilities, we try to identify which factor users value more when making choices and understand their decision making process. The interpretability study is important, since existing methods either cannot predict preference reversals, or cannot offer this interpretability due to its black-box nature in design. This study can help understand the mechanisms behind preference reversals and facilitate sellers to design pricing or marketing strategies.

In this study, the proposed \emph{Pacos}-add is used as an example, as it is designed to have better interpretability than \emph{Pacos}-NN. We focus on the preference reversal prediction task, and randomly select a user who shows the preference reversal in this task. The predicted utilities in markets A-1 and A-2 obtained by \emph{Pacos}-add are visualized in Fig. \ref{fig:interpretability_study}. The results for markets B-1 and B-2 and that for other users are similar and show the same trend, and thus omitted here. In Fig. \ref{fig:interpretability_study}, the four rows represent four available sellers in a market. The first three columns denote the three parts of utilities, and the last column labeled with ``U'' denotes the total utilities. The detailed comparison utilities between pairwise sellers $h(\boldsymbol{s}_i,\boldsymbol{s}_j)$ are indicated in the dashed box. For example, in market A-1, the comparison utility (CU) of seller 3 is $4.29$, and the comparison utility of seller 3 caused by seller 4, i.e., $h(\boldsymbol{s}_3,\boldsymbol{s}_4)$, is $1.00$.

First of all, the proposed \emph{Pacos}-add can accurately predict the preference reversal, as item 3 has the largest total utility in market A-1, and item 4 has the largest total utility in market A-2. Recall that the two markets are designed using the salience effect, which stems mainly from the inter-item comparison. If we look at the estimated results of the three parts of utilities, the comparison utility dominates the sum utilities, while the attribute utility and the position utility have much smaller values. In addition, note that with our model, the estimated attribute utilities of seller 3 are almost the same in both markets (the difference is smaller than $10^{-3}$ caused by users' adaptive weights), same for its position utility in both markets. A similar phenomenon is observed for the attribute utility and position utility of seller 4. This shows that our model predicts the preference reversal in group A because it accurately learns the change in users' preferences due to the inter-item comparison in these two markets. The analysis indicates that the predicted utilities are in line with our design intentions.

If we take a closer look at the detailed comparison utility in Fig. \ref{fig:interpretability_study}, the comparison utilities between seller 3 and seller 4 are the same in both markets with $h(\boldsymbol{s}_3,\boldsymbol{s}_4)=1.00$ and $h(\boldsymbol{s}_4,\boldsymbol{s}_3)=1.12$. If we look at seller 3, in market A-1, the comparison utilities of seller 3 from seller 1 and seller 2, i.e., $h(\boldsymbol{s}_3,\boldsymbol{s}_1)$ and $h(\boldsymbol{s}_3,\boldsymbol{s}_2)$, are both positive and they are 1.25 and 1.02, respectively, and in market A-2,  the comparison utilities of seller 3 from seller 5 and seller 6, i.e., $h(\boldsymbol{s}_3,\boldsymbol{s}_5)$ and $h(\boldsymbol{s}_3,\boldsymbol{s}_6)$, are 1.14 and 0.99, respectively. That is, seller 3 has similar comparison utilities in both markets. If we look at seller 4, in market A-1, the comparison utilities of seller 4  from seller 1 and seller 2, i.e., $h(\boldsymbol{s}_4,\boldsymbol{s}_1)$ and $h(\boldsymbol{s}_4,\boldsymbol{s}_2)$, are near zero, and in market A-2, the comparison utilities of seller 4 from 5 and seller 6, i.e., $h(\boldsymbol{s}_4,\boldsymbol{s}_5)$ and $h(\boldsymbol{s}_4,\boldsymbol{s}_6)$, are $0.60$ and $1.35$, respectively. Therefore, seller 4 gains much more comparison utility from seller 5 and seller 6 than from seller 1 and seller 2. Consequently, the gain in the comparison utility from seller 5 and seller 6 in market A-2 increases the market share of seller 4, and further creates a reversal of user's preference. In summary, this study demonstrates that \emph{Pacos}-add can accurately learns the change in users' preferences and identify the cause of preference reversals.

\begin{figure}[tbp]
\centering
\begin{minipage}[b]{.45\linewidth}
  \centering
  \centerline{\includegraphics[width=7cm]{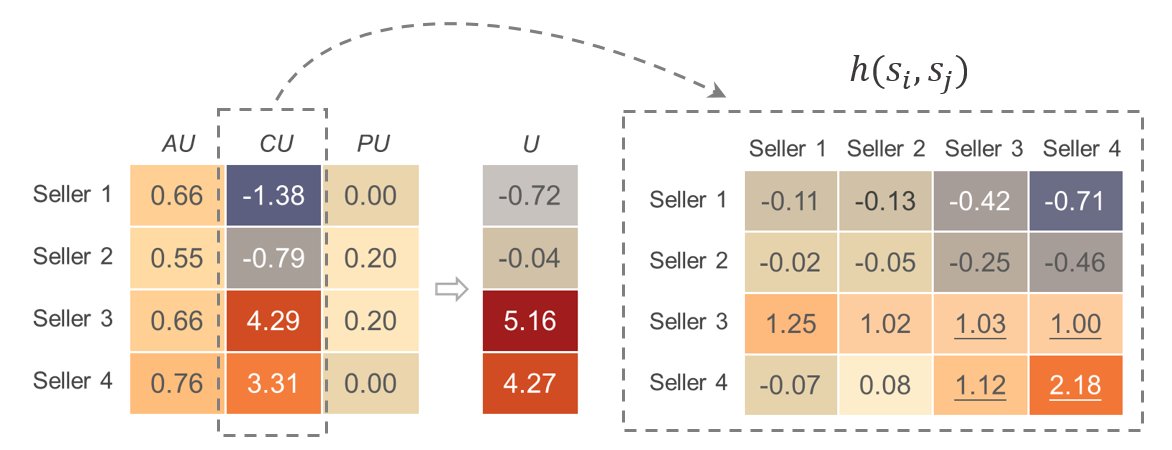}}
  \centerline{(a) Market A-1}\medskip
\end{minipage}
\hfill
\begin{minipage}[b]{.49\linewidth}
  \centering
  \centerline{\includegraphics[width=8cm]{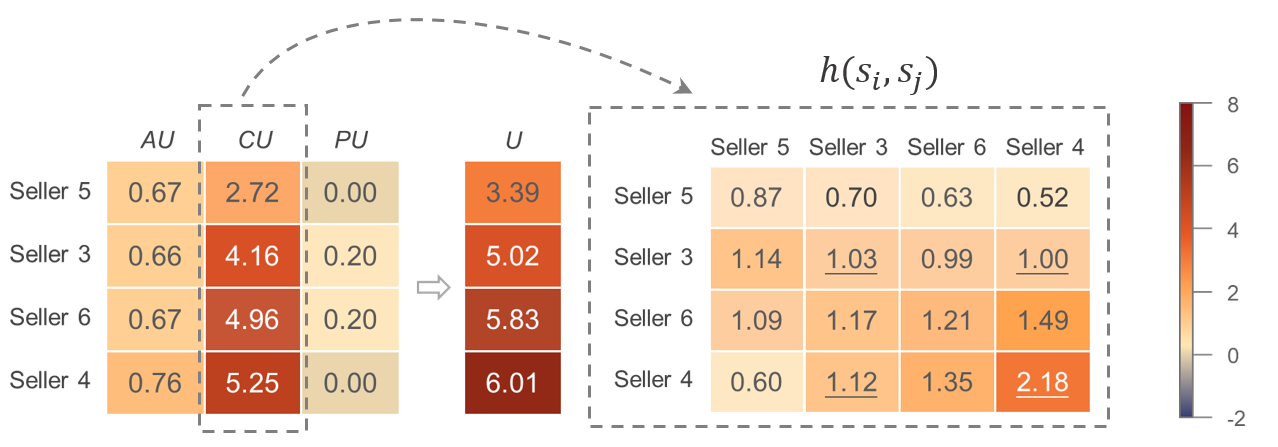}}
  \centerline{(b) Market A-2}\medskip
\end{minipage}
\caption{The illustration of the interpretability study.}
\label{fig:interpretability_study}
\end{figure}

\subsection{The Market Share Prediction Task}
\label{sec:test_attribute}

\begin{figure}[!t]
\centering
\begin{minipage}[b]{.32\linewidth}
  \centering
  \centerline{\includegraphics[width=3.8cm]{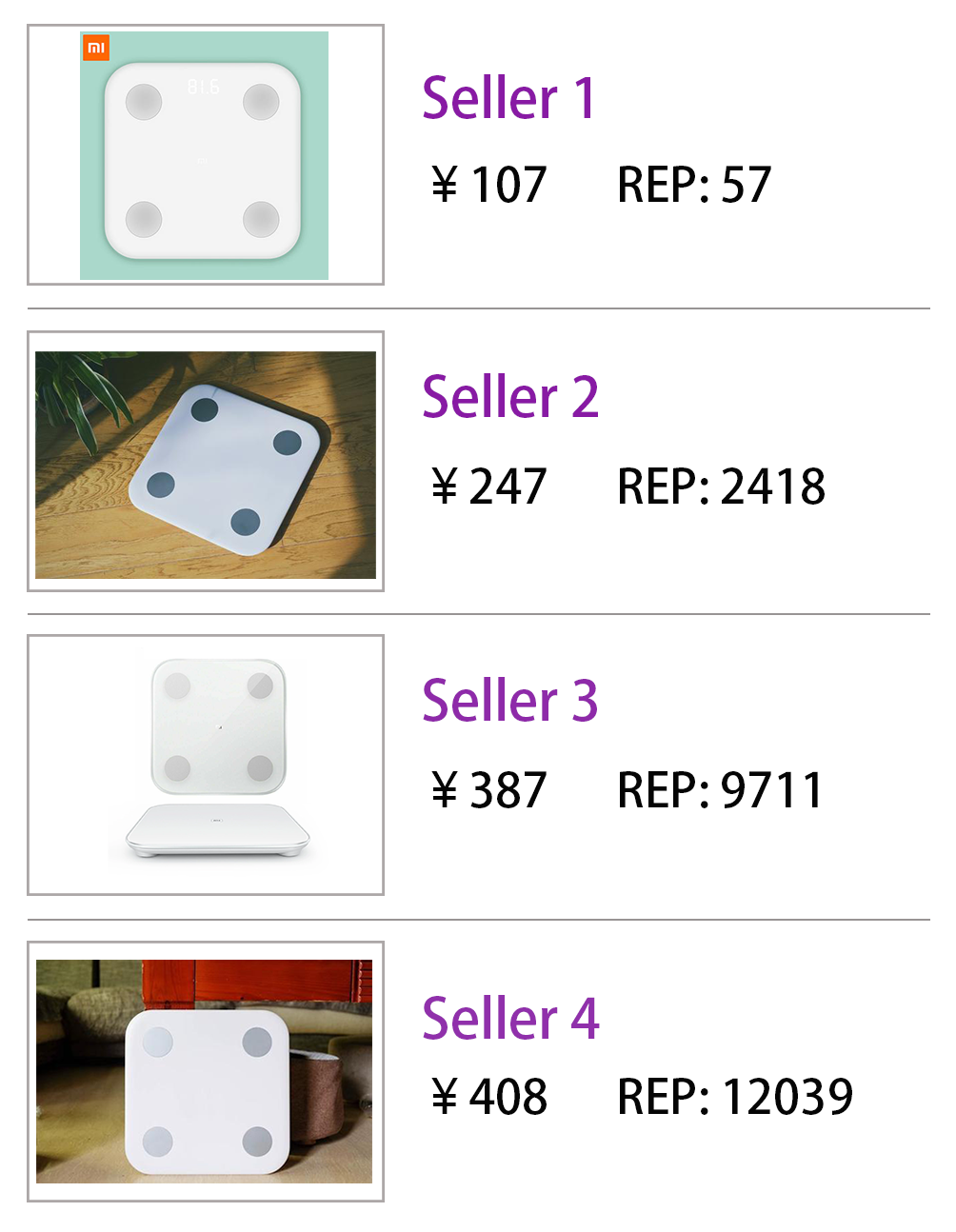}}
  \centerline{(a)}\medskip
\end{minipage}
\hfill
\begin{minipage}[b]{.67\linewidth}
  \centering
  \centerline{\includegraphics[width=11cm,height=3.8cm]{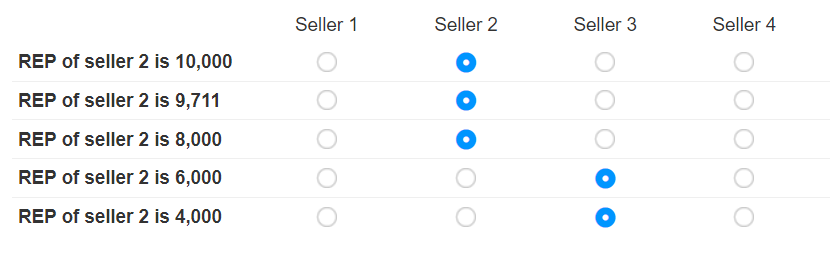}}
  \centerline{(b)}\medskip
\end{minipage}
\caption{An example in market share prediction task, where (a) the market to adjust attributes, and (b) the snapshot of user GUI in this task.}
\label{fig:market_for_attribute}
\end{figure}

\textbf{Dataset Description.} \quad In real applications, the attributes of sellers are not static, and sellers may want to increase their market shares by providing price discounts or improving reputations. Therefore, it is important to accurately predict the change in market shares caused by attribute adjustments. Due to the lack of related experiments in existing works, we design the market share prediction task to validate our model in terms of responses to attribute adjustments. In this experiment, we focus on the market shown in Fig. \ref{fig:market_for_attribute} (a), and adjust the attribute of one seller each time. For example, as shown in Fig. \ref{fig:market_for_attribute} (b), when adjusting the reputation of seller 2, we set its reputation as 10000, 9711, 8000, 6000, 4000, and observe the market share of these sellers in each adjustment. We invite the same 647 participants in the preference reversal task to complete this experiment.

\begin{figure}[tbp]
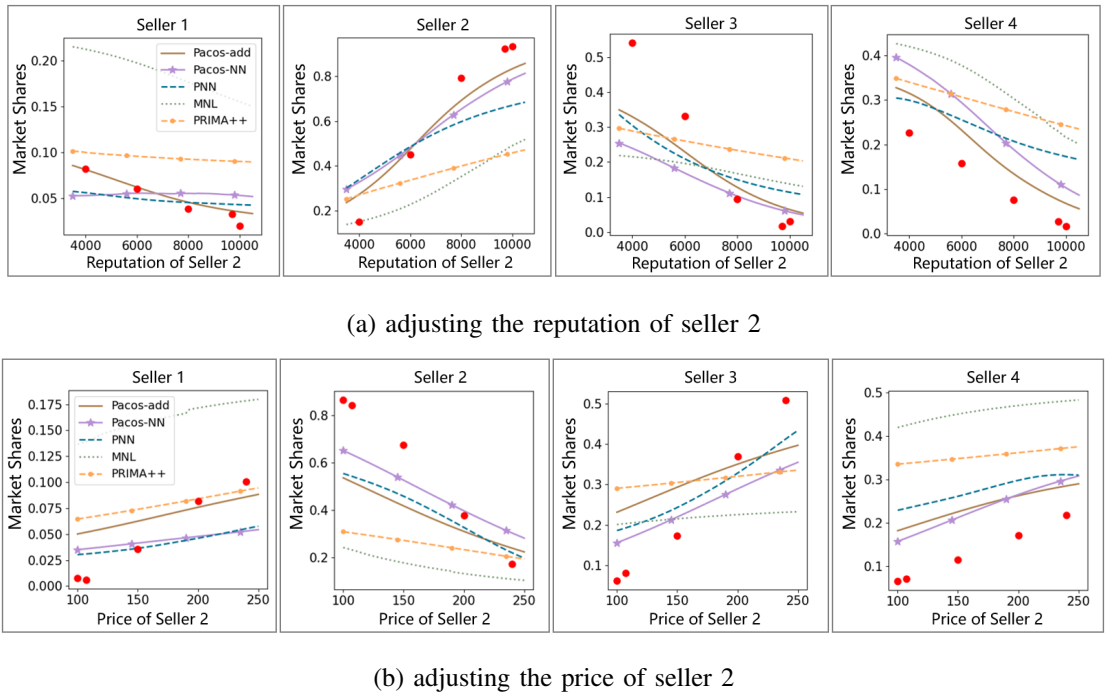

\centering
\begin{minipage}[b]{.5\linewidth}
  \centering
  \centerline{\includegraphics[width=14.8cm]{prob_range_1R.png}}
  \centerline{(a) adjusting the reputation of seller 2}\medskip
\end{minipage}
\hfill
\begin{minipage}[b]{.5\linewidth}
  \centering
  \centerline{\includegraphics[width=14.8cm]{prob_range_1P.png}}
  \centerline{(b) adjusting the price of seller 2}\medskip
\end{minipage}
\caption{The predicted trend in market share prediction task.}
\label{fig:seller2}
\end{figure}

\textbf{Experimental Results.} \quad In this task, we compare \emph{Pacos} with MNL, PRIMA++, and PNN-based method. The results when adjusting attributes of seller 2 are shown in Fig. \ref{fig:seller2} and Table \ref{tab:attribute_change}, and results in other cases are similar and shown in the supplementary file. In Fig. \ref{fig:seller2}, the four subgraphs correspond to the estimated and real market shares four sellers. In each subgraph, the five red dots represent the true market shares of the seller, and a solid line is the simulation result of one method.

The experimental results when adjusting the reputation of seller 2 are shown in Fig. \ref{fig:seller2} (a). First, when paying attention to the market share of seller 2, it shows that the simulation results of \emph{Pacos}-add are the closest to the true values. We also observe a similar trend in the predicted results of other sellers. Further, according to the results in Table \ref{tab:attribute_change}, \emph{Pacos}-add achieves the best results on all metrics considered in our work. As for prior works, PRIMA++ is the second best regarding ranking quality and success rate, while the PNN-based method ranks the second in KLD and MAE, and MNL gives the lowest accuracy on all metrics.

The experimental results when adjusting the price of seller 2 are shown in Fig. \ref{fig:seller2} (b). The predicted market shares results of \emph{Pacos}-NN are the closest to the true values. In addition, in Table \ref{tab:attribute_change} (b), the proposed \emph{Pacos}-NN achieves the best performance on all metrics except for the $sr(m=2)$. In summary, experimental results show that the proposed \emph{Pacos}-add and \emph{Pacos}-NN perform better than prior works in the market share prediction task.


\begin{table}[tbp]
\caption{The results when adjusting (a) reputation, and (b) price of Seller 2.}
  \centering
  \subtable[]{
  \begin{tabular}{cccccc}
    \toprule
          & \emph{Pacos}-add & \emph{Pacos}-NN & PNN   & MNL   & PRIMA++ \\
    \midrule
    rq    & \textbf{0.902 } & 0.823  & 0.840  & 0.792  & 0.850  \\
    sr (m=1) & \textbf{0.775 } & 0.665  & 0.669  & 0.560  & 0.721  \\
    sr (m=2) & \textbf{0.937 } & 0.818  & 0.860  & 0.848  & 0.843  \\
    MAE   & \textbf{0.075 } & 0.120  & 0.114  & 0.202  & 0.178  \\
    KLD   & \textbf{0.105 } & 0.232  & 0.222  & 0.552  & 0.461  \\
    \bottomrule
    \end{tabular}%
  }
  \qquad
  \subtable[]{
        \begin{tabular}{cccccc}
        \toprule
              & \emph{Pacos}-add & \emph{Pacos}-NN & PNN   & MNL   & PRIMA++ \\
        \midrule
        rq    & 0.854  & \textbf{0.861 } & 0.817  & 0.600  & 0.679  \\
        sr (m=1) & 0.659 & \textbf{0.691}  & 0.617  & 0.249  & 0.413  \\
        sr (m=2) & \textbf{0.910 } & 0.905  & 0.846  & 0.623  & 0.648  \\
        MAE   & 0.117  & \textbf{0.091 } & 0.103  & 0.246  & 0.187  \\
        KLD   & 0.211  & \textbf{0.126 } & 0.163  & 0.863  & 0.532  \\
        \bottomrule
        \end{tabular}%
    }
\label{tab:attribute_change}%
\end{table}%

\subsection{Evaluation of The Learning Algorithm}
\label{sec:test_update}
As discussed in Section \ref{sec:learning}, to alleviate the imbalance issue, a novel update rule is proposed to dynamically adjust the value of $\bar B$. Here, an ablation study is conducted to verify its effectiveness, and then the impact of parameter selection on the performance is also analyzed.

\begin{table}[tbp]
  \centering
  \caption{Results of The Ablation Study}
    \begin{tabular}{p{20.25em}ccc}
    \toprule
    \multicolumn{1}{c}{} & rq    & sr ($m=1$) & sr ($m=2$) \\
    \midrule
    Attribute Utility Module Only & 0.721  & 0.446  & 0.770  \\
    All Three Utility Modules& 0.804  & 0.596  & 0.844  \\
    Three Utility Modules + $\bar B$& 0.819  & 0.614  & 0.860  \\
    Three Utility Modules + $\bar B$ + the update rule & \textbf{0.830}  & \textbf{0.642}  & \textbf{0.867}  \\
    \bottomrule
    \end{tabular}%
  \label{tab:ablation}%
\end{table}%

\textbf{The Ablation Study.} \quad The ablation study is conducted on the personalized ranking task. Since \emph{Pacos}-add and \emph{Pacos}-NN have the similar ranking performance, we use \emph{Pacos}-NN as an example to show the results, and observe the same trend in \emph{Pacos}-add. Table \ref{tab:ablation} shows the results of the ablation study, where the values are averaged on all products. First, the proposed model gives the worst results when using the attribute utility module only, and the performance is much improved when incorporating all three utility modules. Then, the introduction of $\bar B$ and the update rule further improve the model's performance. The proposed learning algorithm in Section \ref{sec:learning} achieves the highest accuracy. The ablation study indicates that the introduction of $\bar B$ and the update rule indeed improve the accuracy of \emph{Pacos}.

\textbf{Analysis of The Parameter Selection.} \quad There are four parameters in the update rule, that is, $K_{init}$, $\Delta K$, $\Delta B$, and $B_{max}$. We take $\Delta B$ and $B_{max}$ as an example to analyze the impact of parameter selection, and the analysis of $K_{init}$ and $\Delta K$ can be found in the supplementary file. In our experiment, we vary the value of $B_{max}$ from 1.0 to 3.0, and vary the value of $\Delta B$ from 0.2 to 1.0, and show the results of $sr (m=1)$. The simulation results on other metrics give the same trend and are omitted here. In Fig. \ref{fig:update_rule}, it can be seen that the success rate is very stable when we adjust the value of $B_{max}$ and $\Delta B$. A similar trend is also found in the results of $K_{init}$ and $\Delta K$. It suggests that the proposed model is insensitive to parameters in the update rule.

\begin{figure}[!t]
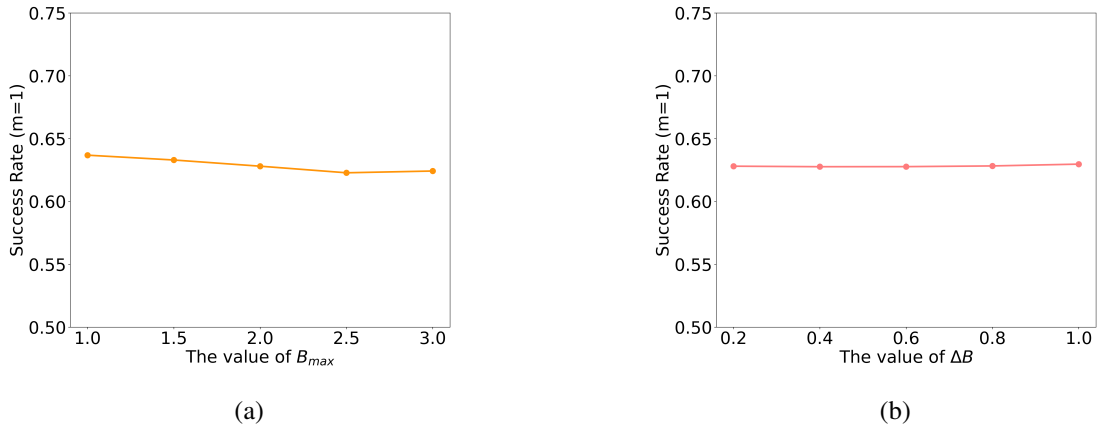

\centering
\begin{minipage}[b]{.48\linewidth}
  \centering
  \centerline{\includegraphics[width=6.5cm]{bar_b.png}}
  \centerline{(a) }\medskip
\end{minipage}
\hfill
\begin{minipage}[b]{0.48\linewidth}
  \centering
  \centerline{\includegraphics[width=6.5cm]{delta_b.png}}
  \centerline{(b) }\medskip
\end{minipage}
\caption{Impact of parameter selection.}
\vspace{-0.15in}
\label{fig:update_rule}
\end{figure}

\subsection{Summary of Real User Tests}
In summary, the proposed \emph{Pacos}-add and \emph{Pacos}-NN achieve better performance in the personalized ranking task, the preference reversal prediction task, and the market share prediction task. In addition, in the preference reversal prediction task, \emph{Pacos} can successfully and accurately predict the occurrence of preference reversals, although there are no such examples in the training set. Furthermore, \emph{Pacos}  not only predicts preference reversals, but also accurately learns the change in users' preferences and helps to identify the cause of preference reversals. 
\section{Conclusion}
\label{sec:conclusion}
In this work, we investigate context effects in choice problems and specifically study preference reversals. We identify three factors contributing to context effects, i.e., adaptive weights, inter-item comparison, and display positions. We propose a context-dependent preference model \emph{Pacos} as a unified framework to address three factors simultaneously. Two module design methods are provided, including an additive method with high interpretability, and an ANN-based method with high predictive accuracy. The theoretical analysis of the additive method demonstrates its effectiveness in dealing with preference reversals. Experimental results show that \emph{Pacos} performs better than existing methods in personalized ranking task, preference reversal prediction task, and market share prediction task, can accurately learn the change in users' preference and help to understand the cause of preference reversals. This study can help understand users' decision making mechanisms, and offer important guidelines on the product design, pricing strategies, and market demand analysis.

\section{Appendix}
\label{sec:appendix}

\begin{proposition}
\label{prop:mnl}
For the IIA issue in prior MNL-based works, we can always obtain the following inequality,
\begin{align}
\label{equ:proof_mnl}
\left[\mathcal{P}(\boldsymbol{s}_A|\mathcal{S})-\mathcal{P}(\boldsymbol{s}_B|\mathcal{S})\right]
\cdot
\left[\mathcal{P}(\boldsymbol{s}_A|\mathcal{S}\cup\boldsymbol{s}_C)
-\mathcal{P}(\boldsymbol{s}_B|\mathcal{S}\cup\boldsymbol{s}_C)\right]\geq 0.
\end{align}
\end{proposition}

\begin{proof}
The IIA issue in prior MNL-based works implies
\begin{align*}
{\mathcal{P}(\boldsymbol{s}_A|\mathcal{S})}\big/{\mathcal{P}(\boldsymbol{s}_B|\mathcal{S})}
={\mathcal{P}(\boldsymbol{s}_A|\mathcal{S}\cup\boldsymbol{s}_C)}\big/
{\mathcal{P}(\boldsymbol{s}_B|\mathcal{S}\cup\boldsymbol{s}_C)}.
\end{align*}
Subtracting 1 from both sides the equation still holds
\begin{align*}
{\mathcal{P}(\boldsymbol{s}_A|\mathcal{S})}\big/{\mathcal{P}(\boldsymbol{s}_B|\mathcal{S})}-1
={\mathcal{P}(\boldsymbol{s}_A|\mathcal{S}\cup\boldsymbol{s}_C)}\big/
{\mathcal{P}(\boldsymbol{s}_B|\mathcal{S}\cup\boldsymbol{s}_C)}-1,
\end{align*}
which can be transformed into
\begin{align*}
\left[{\mathcal{P}(\boldsymbol{s}_A|\mathcal{S})}\big/{\mathcal{P}(\boldsymbol{s}_B|\mathcal{S})}-1\right]
\cdot\left[{\mathcal{P}(\boldsymbol{s}_A|\mathcal{S}\cup\boldsymbol{s}_C)}\big/
{\mathcal{P}(\boldsymbol{s}_B|\mathcal{S}\cup\boldsymbol{s}_C)}-1\right]\geq 0.
\end{align*}
Multiplying $\mathcal{P}(\boldsymbol{s}_B|\mathcal{S})\mathcal{P}(\boldsymbol{s}_B|\mathcal{S}\cup\boldsymbol{s}_C)$ from both sides we have,
\begin{align*}
\left[\mathcal{P}(\boldsymbol{s}_A|\mathcal{S})-\mathcal{P}(\boldsymbol{s}_B|\mathcal{S})\right]
\cdot
\left[\mathcal{P}(\boldsymbol{s}_A|\mathcal{S}\cup\boldsymbol{s}_C)
-\mathcal{P}(\boldsymbol{s}_B|\mathcal{S}\cup\boldsymbol{s}_C)\right]\geq 0.
\end{align*}
\end{proof}

\bibliographystyle{IEEEbib}
\bibliography{refs}

\end{document}


\title{Supplementary File for:\\ \emph{\emph{Pacos}: Modeling Preference Reversals In Users' Context-Dependent Choices}}
\author{Qingming~Li, and H.Vicky~Zhao}
\maketitle

\section{Theoretical Analysis of The Additive Method}

\subsection{IIA issue in MNL-based methods}
\begin{proposition}
\label{prop:mnl}
For the IIA issue in prior MNL-based works, we can always obtain the following inequality,
\begin{align}
\label{equ:proof_mnl}
\left[\mathcal{P}(\boldsymbol{s}_A|\mathcal{S})-\mathcal{P}(\boldsymbol{s}_B|\mathcal{S})\right]
\cdot
\left[\mathcal{P}(\boldsymbol{s}_A|\mathcal{S}\cup\boldsymbol{s}_C)
-\mathcal{P}(\boldsymbol{s}_B|\mathcal{S}\cup\boldsymbol{s}_C)\right]\geq 0.
\end{align}
\end{proposition}

\begin{proof}
The IIA issue in prior MNL-based works implies
\begin{align*}
{\mathcal{P}(\boldsymbol{s}_A|\mathcal{S})}\big/{\mathcal{P}(\boldsymbol{s}_B|\mathcal{S})}
={\mathcal{P}(\boldsymbol{s}_A|\mathcal{S}\cup\boldsymbol{s}_C)}\big/
{\mathcal{P}(\boldsymbol{s}_B|\mathcal{S}\cup\boldsymbol{s}_C)}.
\end{align*}
Subtracting 1 from both sides the equation still holds
\begin{align*}
{\mathcal{P}(\boldsymbol{s}_A|\mathcal{S})}\big/{\mathcal{P}(\boldsymbol{s}_B|\mathcal{S})}-1
={\mathcal{P}(\boldsymbol{s}_A|\mathcal{S}\cup\boldsymbol{s}_C)}\big/
{\mathcal{P}(\boldsymbol{s}_B|\mathcal{S}\cup\boldsymbol{s}_C)}-1,
\end{align*}
which can be transformed into
\begin{align*}
\left[{\mathcal{P}(\boldsymbol{s}_A|\mathcal{S})}\big/{\mathcal{P}(\boldsymbol{s}_B|\mathcal{S})}-1\right]
\cdot\left[{\mathcal{P}(\boldsymbol{s}_A|\mathcal{S}\cup\boldsymbol{s}_C)}\big/
{\mathcal{P}(\boldsymbol{s}_B|\mathcal{S}\cup\boldsymbol{s}_C)}-1\right]\geq 0.
\end{align*}
Multiplying $\mathcal{P}(\boldsymbol{s}_B|\mathcal{S})\mathcal{P}(\boldsymbol{s}_B|\mathcal{S}\cup\boldsymbol{s}_C)$ from both sides we have,
\begin{align*}
\left[\mathcal{P}(\boldsymbol{s}_A|\mathcal{S})-\mathcal{P}(\boldsymbol{s}_B|\mathcal{S})\right]
\cdot
\left[\mathcal{P}(\boldsymbol{s}_A|\mathcal{S}\cup\boldsymbol{s}_C)
-\mathcal{P}(\boldsymbol{s}_B|\mathcal{S}\cup\boldsymbol{s}_C)\right]\geq 0.
\end{align*}
\end{proof}

\subsection{Proof of Lemma 1}
\begin{lemma}
\label{lemma:scalability}
If a new seller $\boldsymbol{s}_{C}$ joins the market $\mathcal{S}$, then
\begin{align}
\label{equ:update_lambda_pos_changes}
\lambda(\boldsymbol{s}_i|\mathcal{S}\cup\boldsymbol{s}_{C})=\lambda (\boldsymbol{s}_i|\mathcal{S})\cdot f(\boldsymbol{s}_{i},\boldsymbol{s}_{C}),
\end{align}
where $f(\boldsymbol{s}_i,\boldsymbol{s}_{C})\triangleq\exp{\left[g(\boldsymbol{ s}_{C})^T\cdot \boldsymbol{s}_i+h(\boldsymbol{s}_i,\boldsymbol{s}_{C})
+\boldsymbol\alpha^T\cdot \Delta pos(\boldsymbol{s}_i,\boldsymbol{s}_{C} )\right]}$. Here, $\Delta pos(\boldsymbol{s}_i,\boldsymbol{s}_{C})=pos(\boldsymbol{s}_i|\mathcal{S}\cup\boldsymbol{s}_{C})
-pos(\boldsymbol{s}_i|\mathcal{S})$ is a change in the display position caused by the addition of seller $\boldsymbol{s}_{C}$, and $f(\boldsymbol{s}_i,\boldsymbol{s} _{C})$ can be seen as the exponential of the change in the total utility of seller $\boldsymbol{s}_i$ caused by the addition of seller $\boldsymbol{s}_{C}$.
\end{lemma}

\begin{proof}
For the additive method, define $\lambda(\boldsymbol{s} _i|\mathcal{S})=\exp{\left[U(\boldsymbol{s}_i|\mathcal{S})\right]}$ based on Eq. \eqref{equ:uti_add}, then we have
\begin{align}
\label{equ:ori_lambda}
\lambda(\boldsymbol{s}_i|\mathcal{S})&=\exp\left\{\left[\sum_{\boldsymbol s_j\in\mathcal{S}}g(\boldsymbol s_j)\right]^T\cdot \boldsymbol{s}_i+\sum_{\boldsymbol{s}_j\in \mathcal{S}}h(\boldsymbol{s}_i,\boldsymbol{s}_j)+
\boldsymbol\alpha^T\cdot pos(\boldsymbol{s}_i|\mathcal{S})\right\}
\end{align}
When a new seller $\boldsymbol{s}_{\text{new}}$ joins the market $\mathcal{S}$, and the display position of seller $\boldsymbol{s} _i$ is $pos(\boldsymbol{s} _i|\mathcal{S}\cup\boldsymbol{s}_{\text{new}} )$ after the addition of the new seller. Then,
\begin{align}
\label{equ:lambda_new}
\lambda(\boldsymbol{s}_i|\mathcal{S}\cup\boldsymbol{s}_{C})&=\exp\left\{\left[\sum_{\boldsymbol s_j\in\mathcal{S}\cup\boldsymbol{s}_{C}}g(\boldsymbol s_j)\right]^T\cdot \boldsymbol{s}_i+\sum_{\boldsymbol{s}_j\in \mathcal{S}\cup\boldsymbol{s}_{C}}h(\boldsymbol{s}_i,\boldsymbol{s}_j)+
\boldsymbol\alpha^T\cdot pos(\boldsymbol{s}_i|\mathcal{S}\cup\boldsymbol{s}_{C})\right\}      \\    \notag
&= \exp\left\{\left[\sum_{\boldsymbol s_j\in\mathcal{S}}g(\boldsymbol s_j)\right]^T\cdot \boldsymbol{s}_i+\sum_{\boldsymbol{s}_j\in \mathcal{S}}h(\boldsymbol{s}_i,\boldsymbol{s}_j)+
\boldsymbol\alpha^T\cdot pos(\boldsymbol{s}_i|\mathcal{S})\right\} \\   \notag
&\quad \quad \cdot \exp{\left\{g(\boldsymbol{ s}_{C})^T\cdot \boldsymbol{s}_i+h(\boldsymbol{s}_i,\boldsymbol{s}_{C})
+\boldsymbol\alpha^T\cdot \left[pos(\boldsymbol{s}_i|\mathcal{S}\cup\boldsymbol{s}_{C})
-pos(\boldsymbol{s}_i|\mathcal{S})\right]\right\}} \\   \notag
&=\lambda (\boldsymbol{s}_i|\mathcal{S})\cdot f(\boldsymbol{s}_{i},\boldsymbol{s}_{C}).
\end{align}
Eq. \eqref{equ:lambda_new} means that $\lambda (\boldsymbol{s}_i|\mathcal{S})$ is simply multiplied by an update term $f(\boldsymbol{s}_i,\boldsymbol{s} _{C})$ when a new seller $\boldsymbol{s}_{C}$ joins the market.
\end{proof}

\subsection{Proof of Theorem 1}
\begin{theorem}
\label{theo:condition}
For the additive method, the sufficient and necessary condition for Eq. (13) to hold is that it is possible to find parameters satisfying
\begin{align}
\label{equ:condition}
\frac{f(\boldsymbol{s}_{B},\boldsymbol{s}_{C})}{f(\boldsymbol{s}_{A},\boldsymbol{s}_{C})}>
\frac{\lambda(\boldsymbol{s}_A|\mathcal{S})}{\lambda(\boldsymbol{s}_B|\mathcal{S})}>1 \quad \text{or} \quad
\frac{f(\boldsymbol{s}_{B},\boldsymbol{s}_{C})}{f(\boldsymbol{s}_{A},\boldsymbol{s}_{C})}<
\frac{\lambda(\boldsymbol{s}_A|\mathcal{S})}{\lambda(\boldsymbol{s}_B|\mathcal{S})}<1.
\end{align}
\end{theorem}

\begin{proof}
Based on Lemma  \ref{lemma:scalability}, we have
\begin{align}
\label{equ:lambda_AB}
\lambda(\boldsymbol{s}_A|\mathcal{S}\cup\boldsymbol{s}_{C})=\lambda (\boldsymbol{s}_A|\mathcal{S})\cdot f(\boldsymbol{s}_{A},\boldsymbol{s}_{C}), \quad \text{and} \quad
\lambda(\boldsymbol{s}_B|\mathcal{S}\cup\boldsymbol{s}_{C})=\lambda (\boldsymbol{s}_B|\mathcal{S})\cdot f(\boldsymbol{s}_{B},\boldsymbol{s}_{C}).
\end{align}

(1) When $\lambda(\boldsymbol{s}_A|\mathcal{S})-\lambda(\boldsymbol{s}_B|\mathcal{S})>0$, Eq. (13) is equivalent to $\left[\lambda(\boldsymbol{s}_A|\mathcal{S}\cup\boldsymbol{s}_C)
-\lambda(\boldsymbol{s}_B|\mathcal{S}\cup\boldsymbol{s}_C)\right]<0$. Combining with Eq. \eqref{equ:lambda_AB}, we have
\begin{align}
\label{equ:proof_lambda}
&\lambda (\boldsymbol{s}_A|\mathcal{S})\cdot f(\boldsymbol{s}_{A},\boldsymbol{s}_{C})-\lambda (\boldsymbol{s}_B|\mathcal{S})\cdot f(\boldsymbol{s}_{B},\boldsymbol{s}_{C})<0 \\  \notag
\Leftrightarrow \quad  &\lambda (\boldsymbol{s}_B|\mathcal{S})\cdot f(\boldsymbol{s}_{B},\boldsymbol{s}_{C}) > \lambda (\boldsymbol{s}_A|\mathcal{S})\cdot f(\boldsymbol{s}_{A},\boldsymbol{s}_{C})\\  \notag
\Leftrightarrow \quad & \frac{f(\boldsymbol{s}_{B},\boldsymbol{s}_{C})}{f(\boldsymbol{s}_{A},\boldsymbol{s}_{C})}
>\frac{\lambda(\boldsymbol{s}_A|\mathcal{S})}{\lambda(\boldsymbol{s}_B|\mathcal{S})} > 1.
\end{align}

If $\lambda(\boldsymbol{s}_A|\mathcal{S})-\lambda(\boldsymbol{s}_B|\mathcal{S})>0$, that is, users originally prefer seller $\boldsymbol{s}_A$ to $\boldsymbol{s}_B$ in market $\mathcal{S}$. Let ${\lambda(\boldsymbol{s}_A|\mathcal{S})}\big /{\lambda(\boldsymbol{s}_B|\mathcal{S})}$ be the ratio between users' original preferences for the two sellers. After the addition of seller $\boldsymbol{s}_C$, the update term of users' preferences for seller $\boldsymbol{s}_A$ is $f(\boldsymbol{s}_{A},\boldsymbol{s}_{C})$, and the update term for seller $\boldsymbol{s}_B$ is $f(\boldsymbol{s}_{B},\boldsymbol{s}_{C})$. Let ${f(\boldsymbol{s}_{B},\boldsymbol{s}_{C})}\big /{f(\boldsymbol{s}_{A},\boldsymbol{s}_{C})}$ be the ratio between the two update terms. Then, the preference reversal occurs when the ratio between the update terms is greater than the ratio between users' original preferences, i.e., ${f(\boldsymbol{s}_{B},\boldsymbol{s}_{C})}\big /{f(\boldsymbol{s}_{A},\boldsymbol{s}_{C})}>{\lambda(\boldsymbol{s}_A|\mathcal{S})}\big /{\lambda(\boldsymbol{s}_B|\mathcal{S})}$ .

(2) When $\lambda(\boldsymbol{s}_A|\mathcal{S})-\lambda(\boldsymbol{s}_B|\mathcal{S})<0$, Eq. (13) is equivalent to $\left[\lambda(\boldsymbol{s}_A|\mathcal{S}\cup\boldsymbol{s}_C)
-\lambda(\boldsymbol{s}_B|\mathcal{S}\cup\boldsymbol{s}_C)\right]>0$. Similar to the analysis for the first case, we have
\begin{align}
\frac{f(\boldsymbol{s}_{B},\boldsymbol{s}_{C})}{f(\boldsymbol{s}_{A},\boldsymbol{s}_{C})}
<\frac{\lambda(\boldsymbol{s}_A|\mathcal{S})}{\lambda(\boldsymbol{s}_B|\mathcal{S})}< 1.
\end{align}
If $\lambda(\boldsymbol{s}_A|\mathcal{S})<\lambda(\boldsymbol{s}_B|\mathcal{S})$, that is, users originally prefer seller $\boldsymbol{s}_B$ to $\boldsymbol{s}_A$ in market $\mathcal{S}$. Then, the preference reversal occurs when the ratio between the update terms is less than the ratio between users' original preferences, i.e., ${f(\boldsymbol{s}_{B},\boldsymbol{s}_{C})}\big /{f(\boldsymbol{s}_{A},\boldsymbol{s}_{C})}<{\lambda(\boldsymbol{s}_A|\mathcal{S})}\big /{\lambda(\boldsymbol{s}_B|\mathcal{S})}$ .
\end{proof}

\subsection{Proof of Theorem 2}
\begin{theorem}
\label{theo:effectiveness}
Based on the proposed additive method, we find one possible solution for ${f(\boldsymbol{s}_{B},\boldsymbol{s}_{C})}\big / {f(\boldsymbol{s}_{A},\boldsymbol{s}_{C})}>
{\lambda(\boldsymbol{s}_A|\mathcal{S})}\big /{\lambda(\boldsymbol{s}_B|\mathcal{S})}>1$ to hold is that $\Gamma_1$ and $\Gamma_2$ in the attribute utility module, $\Phi_1$ and $\Phi_2$ in the comparison utility module and $\boldsymbol\alpha$ in the position utility module satisfy all the following conditions
\begin{align}
\left\{\begin{aligned}
& \left[\sum_{\boldsymbol s_j\in\mathcal{S}}g(\boldsymbol s_j)\right]^T\cdot \left(\boldsymbol{s}_A-\boldsymbol{s}_B\right)>0,
\quad \text{and} \quad \left[\sum_{\boldsymbol s_j\in\mathcal{S}\cup\boldsymbol{s}_C}g(\boldsymbol s_j)\right]^T\cdot \left(\boldsymbol{s}_A-\boldsymbol{s}_B\right)<0, \\
&\left[\boldsymbol h_2(\boldsymbol s_A)-\boldsymbol  h_2(\boldsymbol s_B)\right]^T\cdot \sum_{\boldsymbol{s}_j\in \mathcal{S}}\boldsymbol{s}_j>0, \quad \text{and} \quad  \left[\boldsymbol h_2(\boldsymbol s_A)-\boldsymbol  h_2(\boldsymbol s_B)\right]^T\cdot \sum_{\boldsymbol{s}_j\in \mathcal{S}\cup\boldsymbol{s}_C}\boldsymbol{s}_j<0,\\
&\boldsymbol\alpha^T\cdot \left[pos(\boldsymbol{s}_A|\mathcal{S})-pos(\boldsymbol{s}_B|\mathcal{S})\right]>0, \quad \text{and} \quad \boldsymbol\alpha^T\cdot \left[pos(\boldsymbol{s}_A|\mathcal{S}\cup\boldsymbol{s}_C)-pos(\boldsymbol{s}_B|\mathcal{S}\cup\boldsymbol{s}_C)\right]<0,  \\
\end{aligned}\right.
\end{align}
and one possible solution for ${f(\boldsymbol{s}_{B},\boldsymbol{s}_{C})}\big / {f(\boldsymbol{s}_{A},\boldsymbol{s}_{C})}<
{\lambda(\boldsymbol{s}_A|\mathcal{S})}\big /{\lambda(\boldsymbol{s}_B|\mathcal{S})}<1$ to hold is that $\Gamma_1$, $\Gamma_2$, $\Phi_1$, $\Phi_2$ and $\boldsymbol\alpha$ satisfy
\begin{align}
\left\{\begin{aligned}
& \left[\sum_{\boldsymbol s_j\in\mathcal{S}}g(\boldsymbol s_j)\right]^T\cdot \left(\boldsymbol{s}_A-\boldsymbol{s}_B\right)<0,
\quad \text{and} \quad \left[\sum_{\boldsymbol s_j\in\mathcal{S}\cup\boldsymbol{s}_C}g(\boldsymbol s_j)\right]^T\cdot \left(\boldsymbol{s}_A-\boldsymbol{s}_B\right)>0, \\
&\left[\boldsymbol h_2(\boldsymbol s_A)-\boldsymbol  h_2(\boldsymbol s_B)\right]^T\cdot \sum_{\boldsymbol{s}_j\in \mathcal{S}}\boldsymbol{s}_j<0, \quad \text{and} \quad  \left[\boldsymbol h_2(\boldsymbol s_A)-\boldsymbol  h_2(\boldsymbol s_B)\right]^T\cdot \sum_{\boldsymbol{s}_j\in \mathcal{S}\cup\boldsymbol{s}_C}\boldsymbol{s}_j>0,\\
&\boldsymbol\alpha^T\cdot \left[pos(\boldsymbol{s}_A|\mathcal{S})-pos(\boldsymbol{s}_B|\mathcal{S})\right]<0, \quad \text{and} \quad \boldsymbol\alpha^T\cdot \left[pos(\boldsymbol{s}_A|\mathcal{S}\cup\boldsymbol{s}_C)-pos(\boldsymbol{s}_B|\mathcal{S}\cup\boldsymbol{s}_C)\right]>0.  \\
\end{aligned}\right.
\end{align}
\end{theorem}

\begin{proof}
We begin with ${f(\boldsymbol{s}_{B},\boldsymbol{s}_{C})}\big / {f(\boldsymbol{s}_{A},\boldsymbol{s}_{C})}>
{\lambda(\boldsymbol{s}_A|\mathcal{S})}\big /{\lambda(\boldsymbol{s}_B|\mathcal{S})}>1$. First for ${\lambda(\boldsymbol{s}_A|\mathcal{S})}\big /{\lambda(\boldsymbol{s}_B|\mathcal{S})}>1$, based on the definition of $\lambda(\boldsymbol{s}_i|\mathcal{S})$ in Eq. \eqref{equ:ori_lambda}, we have
\begin{align}
\frac{\lambda(\boldsymbol{s}_A|\mathcal{S})}{\lambda(\boldsymbol{s}_B|\mathcal{S})}
&= \frac{\exp\left\{\left[\sum_{\boldsymbol s_j\in\mathcal{S}}g(\boldsymbol s_j)\right]^T\cdot \boldsymbol{s}_A+\sum_{\boldsymbol{s}_j\in \mathcal{S}}h(\boldsymbol{s}_A,\boldsymbol{s}_j)+
\boldsymbol\alpha^T\cdot pos(\boldsymbol{s}_A|\mathcal{S})\right\}}
{\exp\left\{\left[\sum_{\boldsymbol s_j\in\mathcal{S}}g(\boldsymbol s_j)\right]^T\cdot \boldsymbol{s}_B+\sum_{\boldsymbol{s}_j\in \mathcal{S}}h(\boldsymbol{s}_B,\boldsymbol{s}_j)+
\boldsymbol\alpha^T\cdot pos(\boldsymbol{s}_B|\mathcal{S})\right\}} \\ \notag
&= \exp\Bigg\{\left[\sum_{\boldsymbol s_j\in\mathcal{S}}g(\boldsymbol s_j)\right]^T\cdot \left(\boldsymbol{s}_A-\boldsymbol{s}_B\right)
+\sum_{\boldsymbol{s}_j\in \mathcal{S}} \left[h(\boldsymbol{s}_A,\boldsymbol{s}_j)-h(\boldsymbol{s}_B,\boldsymbol{s}_j)\right]\\ \notag
& \qquad \qquad \qquad \qquad \qquad \qquad \qquad \quad
+\boldsymbol\alpha^T\cdot \left[pos(\boldsymbol{s}_A|\mathcal{S})-pos(\boldsymbol{s}_B|\mathcal{S})\right]\Bigg\}.
\end{align}
Recall that $h(\boldsymbol{s}_i,\boldsymbol{s}_j)=\boldsymbol h_2(\boldsymbol s_i)^T\cdot\boldsymbol{s}_j$, then
\begin{align}
&\frac{\lambda(\boldsymbol{s}_A|\mathcal{S})}{\lambda(\boldsymbol{s}_B|\mathcal{S})} = \exp\Bigg\{\left[\sum_{\boldsymbol s_j\in\mathcal{S}}g(\boldsymbol s_j)\right]^T\cdot \left(\boldsymbol{s}_A-\boldsymbol{s}_B\right)
+\left[\boldsymbol h_2(\boldsymbol s_A)-\boldsymbol  h_2(\boldsymbol s_B)\right]^T\cdot \sum_{\boldsymbol{s}_j\in \mathcal{S}}\boldsymbol{s}_j \\ \notag
& \qquad \qquad \qquad \qquad \qquad \qquad \qquad \qquad \qquad \qquad \qquad \quad
+\boldsymbol\alpha^T\cdot \left[pos(\boldsymbol{s}_A|\mathcal{S})-pos(\boldsymbol{s}_B|\mathcal{S})\right]\Bigg\}.
\end{align}
To make ${\lambda(\boldsymbol{s}_A|\mathcal{S})}\big /{\lambda(\boldsymbol{s}_B|\mathcal{S})}>1$, one possible solution is
\begin{align}
\left[\sum_{\boldsymbol s_j\in\mathcal{S}}g(\boldsymbol s_j)\right]^T\cdot \left(\boldsymbol{s}_A-\boldsymbol{s}_B\right)>0, \quad
&\left[\boldsymbol h_2(\boldsymbol s_A)-\boldsymbol  h_2(\boldsymbol s_B)\right]^T\cdot \sum_{\boldsymbol{s}_j\in \mathcal{S}}\boldsymbol{s}_j>0, \\ \notag
&\quad \quad \quad \quad \text{and} \quad \boldsymbol\alpha^T\cdot \left[pos(\boldsymbol{s}_A|\mathcal{S})-pos(\boldsymbol{s}_B|\mathcal{S})\right]>0.
\end{align}
Next, for ${f(\boldsymbol{s}_{B},\boldsymbol{s}_{C})}\big / {f(\boldsymbol{s}_{A},\boldsymbol{s}_{C})}>
{\lambda(\boldsymbol{s}_A|\mathcal{S})}\big /{\lambda(\boldsymbol{s}_B|\mathcal{S})}$, it is equivalent to $\lambda(\boldsymbol{s}_A|\mathcal{S}\cup\boldsymbol{s}_C)
-\lambda(\boldsymbol{s}_B|\mathcal{S}\cup\boldsymbol{s}_C)<0$ (see the proof of Theorem \ref{theo:condition}), that is, ${\lambda(\boldsymbol{s}_A|\mathcal{S}\cup\boldsymbol{s}_C)}\big /
{\lambda(\boldsymbol{s}_B|\mathcal{S}\cup\boldsymbol{s}_C)}<1$. Based on the definition of ${\lambda(\boldsymbol{s}_i|\mathcal{S}\cup\boldsymbol{s}_C)}$, we have
\begin{align}
\frac{\lambda(\boldsymbol{s}_A|\mathcal{S}\cup\boldsymbol{s}_C)}{\lambda(\boldsymbol{s}_i|\mathcal{S}\cup\boldsymbol{s}_C)}
&= \frac{\exp\left\{\left[\sum_{\boldsymbol s_j\in\mathcal{S}\cup\boldsymbol{s}_C}g(\boldsymbol s_j)\right]^T\cdot \boldsymbol{s}_A+\sum_{\boldsymbol{s}_j\in \mathcal{S}\cup\boldsymbol{s}_C}h(\boldsymbol{s}_A,\boldsymbol{s}_j)+
\boldsymbol\alpha^T\cdot pos(\boldsymbol{s}_A|\mathcal{S}\cup\boldsymbol{s}_C)\right\}}
{\exp\left\{\left[\sum_{\boldsymbol s_j\in\mathcal{S}\cup\boldsymbol{s}_C}g(\boldsymbol s_j)\right]^T\cdot \boldsymbol{s}_B+\sum_{\boldsymbol{s}_j\in \mathcal{S}\cup\boldsymbol{s}_C}h(\boldsymbol{s}_B,\boldsymbol{s}_j)+
\boldsymbol\alpha^T\cdot pos(\boldsymbol{s}_B|\mathcal{S}\cup\boldsymbol{s}_C)\right\}} \\ \notag
&= \exp\Bigg\{\left[\sum_{\boldsymbol s_j\in\mathcal{S}\cup\boldsymbol{s}_C}g(\boldsymbol s_j)\right]^T\cdot \left(\boldsymbol{s}_A-\boldsymbol{s}_B\right)
+\sum_{\boldsymbol{s}_j\in \mathcal{S}\cup\boldsymbol{s}_C} \left[h(\boldsymbol{s}_A,\boldsymbol{s}_j)-h(\boldsymbol{s}_B,\boldsymbol{s}_j)\right]\\ \notag
& \qquad \qquad \qquad \qquad \qquad \qquad \qquad \qquad \quad
+\boldsymbol\alpha^T\cdot \left[pos(\boldsymbol{s}_A|\mathcal{S}\cup\boldsymbol{s}_C)-pos(\boldsymbol{s}_B|\mathcal{S}\cup\boldsymbol{s}_C)\right]\Bigg\} \\ \notag
&= \exp\Bigg\{\left[\sum_{\boldsymbol s_j\in\mathcal{S}\cup\boldsymbol{s}_C}g(\boldsymbol s_j)\right]^T\cdot \left(\boldsymbol{s}_A-\boldsymbol{s}_B\right)
+\left[\boldsymbol h_2(\boldsymbol s_A)-\boldsymbol  h_2(\boldsymbol s_B)\right]^T\cdot \sum_{\boldsymbol{s}_j\in \mathcal{S}\cup\boldsymbol{s}_C}\boldsymbol{s}_j \\ \notag
& \qquad \qquad \qquad \qquad \qquad \qquad \qquad \qquad \qquad \qquad \qquad \quad
+\boldsymbol\alpha^T\cdot \left[pos(\boldsymbol{s}_A|\mathcal{S}\cup\boldsymbol{s}_C)-pos(\boldsymbol{s}_B|\mathcal{S}\cup\boldsymbol{s}_C)\right]\Bigg\}.
\end{align}
To make ${\lambda(\boldsymbol{s}_A|\mathcal{S}\cup\boldsymbol{s}_C)}\big /
{\lambda(\boldsymbol{s}_B|\mathcal{S}\cup\boldsymbol{s}_C)}<1$, one possible solution is
\begin{align}
\left[\sum_{\boldsymbol s_j\in\mathcal{S}\cup\boldsymbol{s}_C}g(\boldsymbol s_j)\right]^T\cdot \left(\boldsymbol{s}_A-\boldsymbol{s}_B\right)<0, \quad
&\left[\boldsymbol h_2(\boldsymbol s_A)-\boldsymbol  h_2(\boldsymbol s_B)\right]^T\cdot \sum_{\boldsymbol{s}_j\in \mathcal{S}\cup\boldsymbol{s}_C}\boldsymbol{s}_j<0, \\ \notag
&\quad \quad \quad \quad \text{and} \quad \boldsymbol\alpha^T\cdot \left[pos(\boldsymbol{s}_A|\mathcal{S}\cup\boldsymbol{s}_C)-pos(\boldsymbol{s}_B|\mathcal{S}\cup\boldsymbol{s}_C)\right]<0.
\end{align}
Summarizing the above discussions, conditions for ${f(\boldsymbol{s}_{B},\boldsymbol{s}_{C})}\big / {f(\boldsymbol{s}_{A},\boldsymbol{s}_{C})}>
{\lambda(\boldsymbol{s}_A|\mathcal{S})}\big /{\lambda(\boldsymbol{s}_B|\mathcal{S})}>1$ to hold is
\begin{align}
\left\{\begin{aligned}
& \left[\sum_{\boldsymbol s_j\in\mathcal{S}}g(\boldsymbol s_j)\right]^T\cdot \left(\boldsymbol{s}_A-\boldsymbol{s}_B\right)>0,
\quad \text{and} \quad \left[\sum_{\boldsymbol s_j\in\mathcal{S}\cup\boldsymbol{s}_C}g(\boldsymbol s_j)\right]^T\cdot \left(\boldsymbol{s}_A-\boldsymbol{s}_B\right)<0, \\
&\left[\boldsymbol h_2(\boldsymbol s_A)-\boldsymbol  h_2(\boldsymbol s_B)\right]^T\cdot \sum_{\boldsymbol{s}_j\in \mathcal{S}}\boldsymbol{s}_j>0, \quad \text{and} \quad  \left[\boldsymbol h_2(\boldsymbol s_A)-\boldsymbol  h_2(\boldsymbol s_B)\right]^T\cdot \sum_{\boldsymbol{s}_j\in \mathcal{S}\cup\boldsymbol{s}_C}\boldsymbol{s}_j<0,\\
&\boldsymbol\alpha^T\cdot \left[pos(\boldsymbol{s}_A|\mathcal{S})-pos(\boldsymbol{s}_B|\mathcal{S})\right]>0, \quad \text{and} \quad \boldsymbol\alpha^T\cdot \left[pos(\boldsymbol{s}_A|\mathcal{S}\cup\boldsymbol{s}_C)-pos(\boldsymbol{s}_B|\mathcal{S}\cup\boldsymbol{s}_C)\right]<0.  \\
\end{aligned}\right.
\end{align}

(2) In the similar way, we discuss the solutions of ${f(\boldsymbol{s}_{B},\boldsymbol{s}_{C})}\big / {f(\boldsymbol{s}_{A},\boldsymbol{s}_{C})}<
{\lambda(\boldsymbol{s}_A|\mathcal{S})}\big /{\lambda(\boldsymbol{s}_B|\mathcal{S})}<1$, and it holds when
\begin{align}
\left\{\begin{aligned}
& \left[\sum_{\boldsymbol s_j\in\mathcal{S}}g(\boldsymbol s_j)\right]^T\cdot \left(\boldsymbol{s}_A-\boldsymbol{s}_B\right)<0,
\quad \text{and} \quad \left[\sum_{\boldsymbol s_j\in\mathcal{S}\cup\boldsymbol{s}_C}g(\boldsymbol s_j)\right]^T\cdot \left(\boldsymbol{s}_A-\boldsymbol{s}_B\right)>0, \\
&\left[\boldsymbol h_2(\boldsymbol s_A)-\boldsymbol  h_2(\boldsymbol s_B)\right]^T\cdot \sum_{\boldsymbol{s}_j\in \mathcal{S}}\boldsymbol{s}_j<0, \quad \text{and} \quad  \left[\boldsymbol h_2(\boldsymbol s_A)-\boldsymbol  h_2(\boldsymbol s_B)\right]^T\cdot \sum_{\boldsymbol{s}_j\in \mathcal{S}\cup\boldsymbol{s}_C}\boldsymbol{s}_j>0,\\
&\boldsymbol\alpha^T\cdot \left[pos(\boldsymbol{s}_A|\mathcal{S})-pos(\boldsymbol{s}_B|\mathcal{S})\right]<0, \quad \text{and} \quad \boldsymbol\alpha^T\cdot \left[pos(\boldsymbol{s}_A|\mathcal{S}\cup\boldsymbol{s}_C)-pos(\boldsymbol{s}_B|\mathcal{S}\cup\boldsymbol{s}_C)\right]>0.  \\
\end{aligned}\right.
\end{align}
\end{proof}

\section{Real User Test}

\subsection{The Personalized Ranking Task}

The results of $sr$ when $m=1$ and $m=2$ are shown in Table \ref{tab:sr1} and \ref{tab:sr2}, respectively. Same as the results of $rq$ in the manuscript, the proposed \emph{Pacos}-add and \emph{Pacos}-NN achieve similar accuracy, which are much better than prior works.

\begin{table}[htbp]
  \centering
  \caption{Results of Success Rate ($m=1$)}
    \begin{tabular}{ccccccc}
    \toprule
          & Air Purifier & Head phone & Hair Dryer & Smart Phone & Scale & Average \\
    \midrule
    Random & 0.234  & 0.223  & 0.246  & 0.236  & 0.244  & 0.237  \\
    MNL   & 0.433  & 0.577  & 0.535  & 0.592  & 0.398  & 0.507  \\
    PRIMA++ & 0.635  & 0.635  & 0.598  & 0.583  & 0.500  & 0.590  \\
    Naive Bayes & 0.595  & 0.634  & 0.582  & 0.544  & 0.560  & 0.583  \\
    RankNet & 0.635  & 0.650  & 0.633  & 0.616  & 0.471  & 0.601  \\
    RankSVM & 0.398  & 0.580  & 0.535  & 0.579  & 0.395  & 0.498  \\
    Pointer NN & 0.652  & 0.624  & 0.655  & \textbf{0.631 } & 0.601  & 0.633  \\
    \midrule
    Ours-add & \textbf{0.680 } & \textbf{0.653 } & 0.668  & 0.647  & 0.630  & 0.656  \\
    Ours-NN & 0.673  & 0.664  & \textbf{0.683 } & \textbf{0.656 } & \textbf{0.642 } & \textbf{0.664 } \\
    \bottomrule
    \end{tabular}%
  \label{tab:sr1}%
\end{table}%

\begin{table}[htbp]
  \centering
  \caption{Results of Success Rate ($m=2$)}
    \begin{tabular}{ccccccc}
    \toprule
          & Air Purifier & Head phone & Hair Dryer & Smart Phone & Scale & Average \\
    \midrule
    Random & 0.470  & 0.446  & 0.492  & 0.470  & 0.490  & 0.474  \\
    MNL   & 0.738  & 0.774  & 0.805  & 0.822  & 0.738  & 0.775  \\
    PRIMA++ & 0.894  & 0.801  & 0.834  & 0.841  & 0.787  & 0.831  \\
    Naïve Bayes & 0.821  & 0.831  & 0.808  & 0.783  & 0.812  & 0.811  \\
    RankNet & 0.900  & \textbf{0.863 } & 0.870  & \textbf{0.870 } & 0.787  & 0.858  \\
    RankSVM & 0.779  & 0.781  & 0.819  & 0.839  & 0.743  & 0.792  \\
    Pointer NN & 0.898  & 0.829  & 0.858  & 0.862  & 0.840  & 0.858  \\
    \midrule
    Ours-add & \textbf{0.905 } & 0.838  & 0.866  & 0.858  & 0.865  & 0.866  \\
    Ours-NN & 0.903  & 0.837  & \textbf{0.871 } & 0.865  & \textbf{0.867 } & \textbf{0.869 } \\
    \bottomrule
    \end{tabular}%
  \label{tab:sr2}%
\end{table}%

\subsection{The Market Share Prediction Task}

In the market share prediction task, we vary the attributes of items according to TableVII in the manuscript, and observe the market share of each item. The results when adjusting attributes of seller 2 are shown in Fig. \ref{fig:seller2}. In Fig. \ref{fig:seller2}, the four subgraphs correspond to the estimated and real market shares four sellers. In each subgraph, the five red dots represent the true market shares of the seller, and a solid line is the simulation result of one method. The experimental results when adjusting the reputation of seller 2 are shown in Fig. \ref{fig:seller2} (a). When paying attention to the market share of seller 2, it shows that the simulation results of \emph{Pacos}-add are the closest to the true values. We also observe a similar trend in the predicted results of other sellers. The experimental results when adjusting the price of seller 2 are shown in Fig. \ref{fig:seller2} (b). The predicted market shares results of \emph{Pacos}-NN are the closest to the true values. The accuracy of the predicted market shares are displayed in Table \ref{tab:attribute_change}. As shown in these results, the proposed \emph{Pacos}-add always output the trend that is the most similar to the true values, and gives the accuracy that outperforms prior works in most cases.

\begin{figure}[ht]
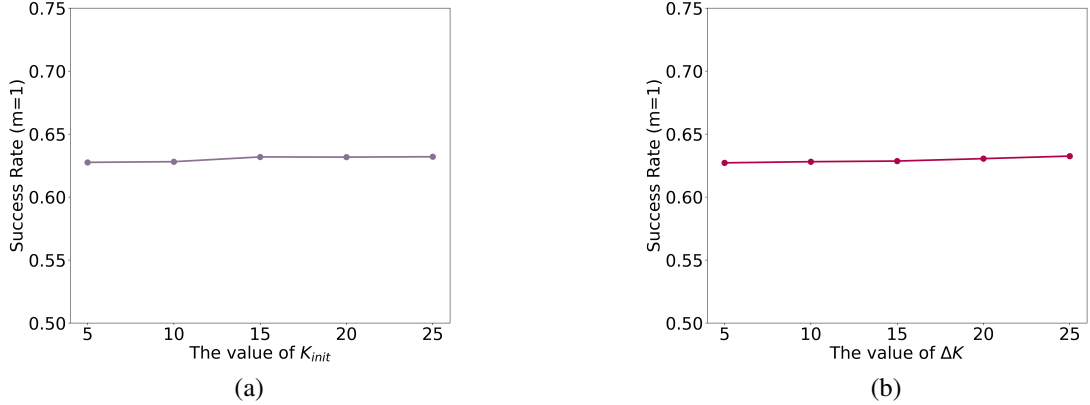

\centering
\begin{minipage}[b]{0.48\linewidth}
  \centering
  \centerline{\includegraphics[width=6.5cm]{K_init.png}}
  \vspace{-0.1in}
  \centerline{(a) }\medskip
\end{minipage}
\hfill
\begin{minipage}[b]{0.48\linewidth}
  \centering
  \centerline{\includegraphics[width=6.5cm]{delta_K.png}}
  \vspace{-0.1in}
  \centerline{(b) }\medskip
\end{minipage}
\vspace{-.1in}
\caption{Effects of parameter selection.}
\label{fig:update_rule}
\end{figure}

\begin{table}
\caption{The accuracy when adjust (a) the price of item 2, (b) the price of item 3, (c) the reputation of item 3, and (4) the price of item 4.}
\centering
\subtable[]{
    \begin{tabular}{cccccc}
    \toprule
          & Ours-add & Ours-NN & PNN   & MNL   & PRIMA++ \\
    \midrule
    rq    & 0.854  & \textbf{0.861 } & 0.817  & 0.600  & 0.679  \\
    sr (m=1) & \textbf{0.659 } & 0.691  & 0.617  & 0.249  & 0.413  \\
    sr (m=2) & \textbf{0.910 } & 0.905  & 0.846  & 0.623  & 0.648  \\
    MAE   & 0.117  & \textbf{0.091 } & 0.103  & 0.246  & 0.187  \\
    KLD   & 0.211  & \textbf{0.126 } & 0.163  & 0.863  & 0.532  \\
    \bottomrule
    \end{tabular}%
\label{1P}
}
\quad
\subtable[]{
    \begin{tabular}{cccccc}
    \toprule
          & Ours-add & Ours-NN & PNN   & MNL   & PRIMA++ \\
    \midrule
    rq    & 0.910  & 0.887  & \textbf{0.911 } & 0.832  & 0.899  \\
    sr (m=1) & 0.787  & 0.749  & 0.787  & 0.688  & \textbf{0.793 } \\
    sr (m=2) & 0.948  & 0.922  & \textbf{0.952 } & 0.883  & 0.909  \\
    MAE   & \textbf{0.068 } & 0.079  & 0.071  & 0.160  & 0.154  \\
    KLD   & \textbf{0.118 } & 0.139  & 0.190  & 0.415  & 0.386  \\
    \bottomrule
    \end{tabular}%
\label{2P}
}
\quad
\subtable[]{
    \begin{tabular}{cccccc}
    \toprule
          & Ours-add & Ours-NN & PNN   & MNL   & PRIMA++ \\
    \midrule
    rq    & \textbf{0.878 } & 0.842  & 0.830  & 0.788  & 0.872  \\
    sr (m=1) & 0.700  & 0.669  & 0.629  & 0.605  & \textbf{0.716 } \\
    sr (m=2) & \textbf{0.939 } & 0.876  & 0.881  & 0.834  & 0.909  \\
    MAE   & \textbf{0.105 } & 0.137  & 0.127  & 0.191  & 0.173  \\
    KLD   & \textbf{0.214 } & 0.348  & 0.269  & 0.593  & 0.506  \\
    \bottomrule
    \end{tabular}%
\label{2R}
}
\quad
\subtable[]{
    \begin{tabular}{cccccc}
    \toprule
          & Ours-add & Ours-NN & PNN   & MNL   & PRIMA++ \\
    \midrule
    rq    & 0.911  & 0.766  & 0.750  & 0.872  & \textbf{0.936 } \\
    sr (m=1) & 0.813  & 0.627  & 0.588  & 0.750  & \textbf{0.843 } \\
    sr (m=2) & 0.943  & 0.758  & 0.752  & 0.903  & \textbf{0.966 } \\
    MAE   & \textbf{0.108 } & 0.165  & 0.176  & 0.157  & 0.174  \\
    KLD   & \textbf{0.227 } & 0.463  & 0.533  & 0.408  & 0.484  \\
    \bottomrule
    \end{tabular}%

\label{3P}
}
\label{tab:attribute_change}
\end{table}

\begin{figure}[tbp]
\centering
\begin{minipage}[b]{.5\linewidth}
  \centering
  \centerline{\includegraphics[width=14.8cm]{prob_range_1R.png}}
  \centerline{(a) adjusting the reputation of seller 2}\medskip
\end{minipage}
\hfill
\begin{minipage}[b]{.5\linewidth}
  \centering
  \centerline{\includegraphics[width=14.8cm]{prob_range_1P.png}}
  \centerline{(b) adjusting the price of seller 2}\medskip
\end{minipage}
\caption{The predicted trend in market share prediction task.}
\label{fig:seller2}
\end{figure}


\subsection{The Update Rule}
To analyze the impact of $K_{init}$ and $\Delta K$, in our experiment, we vary the value of $K_{init}$ and $\Delta K$ from 5 to 25,  respectively, and the results of $sr (m=1)$ are shown in Fig. \ref{fig:update_rule}. It can be seen from Fig. \ref{fig:update_rule} that the success rate is stable when $K_{init}$ and $\Delta K$ changes, indicating that the proposed model is not sensitive to the selection of $K_{init}$ and $\Delta K$.

\bibliographystyle{IEEEbib}
\bibliography{refs}